\theoremstyle{thmstyleone}%
\newtheorem{theorem}{Theorem}
\newtheorem{proposition}[theorem]{Proposition}%
\newtheorem{lemma}[theorem]{Lemma}%
\newtheorem{corollary}[theorem]{Corollary}%
\theoremstyle{thmstyletwo}%
\newtheorem{remark}{Remark}%
\theoremstyle{thmstylethree}%
\newtheorem{definition}{Definition}%
\renewcommand{\R}{\mathbb{R}}
\newcommand{\paren}[1]{\left({#1}\right)}
\renewcommand{\set}[1]{\left\{{#1}\right\}}
\newcommand{\abs}[1]{\vert {#1}\vert}
\begin{document}

\title[Modified Riemannian Manifold and Lagrangian Monte Carlo]{Geometric Ergodicity in Modified Variations of Riemannian Manifold and Lagrangian Monte Carlo}


\author*[1]{\fnm{James A.} \sur{Brofos}}\email{james.brofos@yale.edu}

\author[2]{\fnm{Vivekananda} \sur{Roy}}\email{vroy@iastate.edu}

\author[1]{\fnm{Roy R.} \sur{Lederman}}\email{roy.lederman@yale.edu}

\affil[1]{\orgdiv{Department of Statistics and Data Science}, \orgname{Yale University}, \orgaddress{\street{24 Hillhouse Ave.}, \city{New Haven}, \postcode{06510}, \state{Connecticut}, \country{USA}}}

\affil[2]{\orgdiv{Department of Statistics}, \orgname{Iowa State University}, \orgaddress{\street{2438 Osborn Drive}, \city{Ames}, \postcode{50011}, \state{Iowa}, \country{USA}}}



\abstract{Riemannian manifold Hamiltonian (RMHMC) and Lagrangian Monte Carlo (LMC) have emerged as powerful methods of Bayesian inference. Unlike Euclidean Hamiltonian Monte Carlo (EHMC) and the Metropolis-adjusted Langevin algorithm (MALA), the geometric ergodicity of these Riemannian algorithms has not been extensively studied. On the other hand, the manifold Metropolis-adjusted Langevin algorithm (MMALA) has recently been shown to exhibit geometric ergodicity under certain conditions. This work investigates the mixture of the LMC and RMHMC transition kernels with MMALA in order to equip the resulting method with an ``inherited'' geometric ergodicity theory. We motivate this mixture kernel based on an analogy between single-step HMC and MALA. We then proceed to evaluate the original and modified transition kernels on several benchmark Bayesian inference tasks.}

\keywords{Markov chain Monte Carlo, Riemannian Manifold Hamiltonian Monte Carlo, Statistical Computing, Lagrangian Monte Carlo}



\maketitle

\section{Introduction}

Bayesian inference seeks to combine subjective sources of information with observational data. By specifying one's prior beliefs and correctly capturing sources of uncertainty in a stochastic system, one may employ the Bayesian approach in order to capture and reason about uncertainty under the posterior distribution. Formally, the posterior distribution is a probability distribution $\Pi:\mathfrak{B}(\R^{m})\to [0,1]$ with density $\pi:\R^m\to\R_{+}$. As a practical matter in Bayesian inference, one is concerned with the computation of expectations with respect to $\Pi$ of integrable functions $h:\R^m\to\R$; that is, we wish to compute $\underset{q\sim \Pi}{\mathbb{E}}[h(q)]$. Since $\pi$ is generally intractable, these expectations are not available in closed form and Monte Carlo methods based on samples from $\pi$ are often used to approximate these means. For instance, one may compute the mean and variance of the posterior by employing appropriate choices of expectation. Therefore, a central problem in Bayesian inference is the generation of samples from the posterior distribution.

One popular Monte Carlo method is based on approximate samples from the posterior distribution generated via the technique of
Markov chain Monte Carlo (MCMC). In MCMC, given some initial state of the chain, $q^0$, a sequence of random variables $(q^1, q^2, \ldots)$ is generated inductively: the conditional distribution $q^n \vert q^{n-1},\ldots, q^0$ obeys the Markov property. Under certain desirable regularity conditions, the random variables $q^n$ will have a distribution which, asymptotically, converges to $\Pi$. Establishing the {\it rate} of convergence, or at least an upper bound on that rate, serves to quantify how quickly the chain $(q^1, q^2,\ldots)$ mixes toward the target distribution. In the case where the rate of convergence is geometrically fast, the Markov chain enjoys a strong stability property in that the estimator $n^{-1} \sum_{i=1}^n h(q^i)$ of $\underset{q\sim \Pi}{\mathbb{E}} [h(q)]$ can be equipped with a central limit theorem under additional technical assumptions. This central limit theorem has important practical implications in that it allows the use of asymptotically valid standard errors of MCMC estimates, which, in turn, can be used to decide how long to run the Markov chains \citep{roy:2020}.

One of the most popular MCMC methods is Euclidean Hamiltonian Monte Carlo (EHMC) \citep{neal-hmc}. In EHMC, one computes approximate solutions to Hamilton's equations of motion using numerical integrators, taking care to ensure that the resulting Markov chain satisfies detailed balance in a {\it phase-space} consisting of the {\it position} variables $q\in \R^m$ and auxiliary {\it momentum} variables $p\in\R^m$. The rate of convergence of HMC was established by \citet{geometric-ergodicity-hmc} in the presence of conditions on the log-density \citep[see also][]{durm:moul:saks:2020}. Traditional EHMC, however, struggles in distributions that exhibit multiple spatial scales \citep{pourzanjani2019implicit,softabs}. This observation led to the development of geometric methods of HMC, specifically the Riemannian manifold Hamiltonian Monte Carlo (RMHMC) method of \citet{rmhmc}.

Unlike EHMC, RMHMC adapts to the second-order structure of the posterior, which allows it to align its proposals in the direction of the posterior that exhibits the greatest local variation. However, the sophisticated form of the Hamiltonian employed in RMHMC necessitates the use of complex numerical integrators that are significantly more expensive than the numerical integrator employed in EHMC. This concern was partially alleviated in the introduction of Lagrangian Monte Carlo (LMC) in \citet{Lan_2015}, which was able to construct a MCMC method with a more efficient numerical integration procedure, while simultaneously continuing to take advantage of second-order geometric knowledge in order to efficiently traverse the posterior.  Both RMHMC and LMC can be viewed as generalizations of the Euclidean Hamiltonian Monte Carlo algorithm, which incorporate second-order geometric information about the posterior into the Markov chain transition kernel. Incorporating second-order information allows these geometric methods of MCMC to explore the typical region of the target distribution more efficiently.

However, neither RMHMC nor LMC have, as of yet, been equipped with a geometric ergodicity theory. The focus in this work is in establishing the geometric ergodicity of modified versions of RMHMC and LMC. In implementations of RMHMC and LMC it is common to randomize the number of integration steps, and to take one-step with positive probability. In the case of Euclidean HMC, the equivalence between single-step HMC and the Metropolis-adjusted Langevin algorithm (MALA) is critical for establishing geometric ergodicity of HMC in \citet{geometric-ergodicity-hmc}. Our approach is two-fold. First, we propose a simple modification to both RMHMC and LMC which is motivated by a unique correspondence between single-step Euclidean HMC and MALA. In particular, we propose that instead of applying the transition kernel corresponding to RMHMC or LMC with a single integration step, one instead applies the transition kernel of the {\it manifold} Metropolis-adjusted Langevin algorithm (MMALA). Recent work in \citet{roy2021convergence} gave conditions under which MMALA is geometrically ergodic. We may imbue these modified variations of RMHMC and LMC with a geometric ergodicity theory via inheritance once one establishes that the RMHMC and LMC transition kernels (with a fixed, non-random number of integration steps) are reversible in the required sense. This construction is described in \cref{geometric-ergodicity:subsec:inherited-geometric-ergodicity}. In \cref{geometric-ergodicity:sec:experiments} we proceed to a numerical evaluation of the proposed modification of RMHMC and LMC, with special attention given to the probability of applying the MMALA transition kernel. We begin, however, in \cref{geometric-ergodicity:sec:preliminaries} with an overview of required mathematical concepts from numerical integration and Markov chains.

\section{Preliminaries}\label{geometric-ergodicity:sec:preliminaries}

In \cref{geometric-ergodicity:subsec:numerical-integrators} we review the generalized leapfrog and Lagrangian leapfrog employed in geometric MCMC methods for performing Bayesian inference. \Cref{geometric-ergodicity:subsec:markov-chains} covers Markov chains based on involutions as well as those based on discretizations of Langevin diffusions. We also discuss geometric ergodicity, mixture transition kernels, and the equivalence between single-step HMC and MALA in the Euclidean regime.

\subsection{Numerical Integrators}\label{geometric-ergodicity:subsec:numerical-integrators}

Hamiltonian and Lagrangian mechanics are equivalent, with the momentum and velocity being related by the Legendre transform \citep{mechanics-and-symmetry}. In this section, we will consider numerical integrators of these equations of motion. Throughout our discussion, we will consider a Hamiltonian function of the following form.
\begin{definition}\label{geometric-ergodicity:def:riemannian-hamiltonian}
Let $\pi:\R^m\to\R_+$ be a smooth probability density and let $\mathbf{G} : \R^m\to\mathrm{PD}(m)$. The {\it Riemannian Hamiltonian} is the function $H : \R^m\times\R^m\to\R$ defined by
\begin{align}
    \label{geometric-ergodicity:eq:riemannian-hamiltonian} H(q, p) = -\log\pi(q) + \frac{1}{2} \log\mathrm{det}(\mathbf{G}(q)) + \frac{1}{2} p^\top \mathbf{G}^{-1}(q) p.
\end{align}
Moreover, assuming $\int_{\R^m}\int_{\R^m} \exp(-H(q, p))~\mathrm{d}q~\mathrm{d}p < +\infty$, the {\it Riemannian density} is the probability density $\pi(q, p)\propto \exp(-H(q, p))$.
\end{definition}
\begin{remark}
The term $\frac{1}{2} \log\mathrm{det}(\mathbf{G}(q)) + \frac{1}{2} p^\top \mathbf{G}^{-1}(q) p$ appearing in the definition of the function $H$ in \cref{geometric-ergodicity:eq:riemannian-hamiltonian} is chosen such that the conditional density $\pi(q, p)$ satisfies
\begin{align}
    \pi(p\vert q) = \mathrm{Normal}(0, \mathbf{G}(q)).
\end{align}
It is easily seen that the $q$-marginal density of $\pi(q, p)$ obtained by marginalizing out $p$ is $\pi(q)$.
\end{remark}
Hamiltonian functions $H:\R^m\times\R^m\to \R$ produce Hamilton's equations of motion which, by definition, are solutions to the coupled differential equations
\begin{align}
    \label{geometric-ergodicity:eq:position-evolution} \dot{q}_t &= \nabla_p H(q_t, p_t) \\ 
    \label{geometric-ergodicity:eq:momentum-evolution} \dot{p}_t &= -\nabla_q H(q_t, p_t).
\end{align}
Except in special cases, a closed-form solution for the map $t\mapsto (q_t,p_t)\in\R^m\times\R^m$ does not exist; this necessitates the use of numerical integrators in order to approximate the equations of motion obeying \cref{geometric-ergodicity:eq:position-evolution,geometric-ergodicity:eq:momentum-evolution}. One such example is the generalized leapfrog method, which we now define.
\begin{definition}\label{geometric-ergodicity:def:riemannian-leapfrog}
The {\it generalized leapfrog integrator} with step-size $\epsilon\in\R$ applied to the Riemannian Hamiltonian in \cref{geometric-ergodicity:eq:riemannian-hamiltonian} is the map $(q, p)\mapsto \hat{\Phi}_\epsilon(q, p)$ defined by,
\begin{align}
    \label{geometric-ergodicity:eq:intermediate-momentum} \breve{p} &= p -\frac{\epsilon}{2} \nabla_q H(q, \breve{p}) \\
    \tilde{q} &= q + \frac{\epsilon}{2} \paren{\nabla_p H(q, \breve{p}) + \nabla_p H(\tilde{q}, \breve{p})} \\
    \label{geometric-ergodicity:eq:position} &= q + \frac{\epsilon}{2} \paren{\mathbf{G}^{-1}(q) + \mathbf{G}^{-1}(\tilde{q})}\breve{p} \\
    \tilde{p} &= \breve{p} - \frac{\epsilon}{2} \nabla_q H(\tilde{q}, \breve{p}) \\
    &= p -\frac{\epsilon}{2} \nabla_q H(q, \breve{p}) - \frac{\epsilon}{2} \nabla_q H(\tilde{q}, \breve{p}),
\end{align}
where $\hat{\Phi}_\epsilon(q, p) = (\tilde{q}, \tilde{p})$.
\end{definition}
\begin{remark}
One can prove \citep{leimkuhler_reich_2005,hairer-geometric} that the generalized leapfrog integrator is a symplectic transformation and that therefore $\abs{\mathrm{det}(\nabla\Phi(q, p))} = 1$.
\end{remark}
\begin{remark}
We must now give a remark about notation. Numerical integration plays a central role in our analysis, and we give particular attention to multiple steps of integration. At the same time, we will also discuss Markov chains, which also consist of multiple steps. As an attempt to differentiate these two notions of step, we will use a lower index to refer to steps of numerical integration, whereas we will use upper indices to denote Markov chain steps.
Given initial data $(q_0, p_0)\in\R^m\times\R^m$, we denote $k$-steps of the generalized leapfrog integrator by $(q_k, p_k) = \hat{\Phi}^k_\epsilon(q_0, p_0) = \hat{\Phi}_\epsilon(q_{k-1}, p_{k-1})$. Similarly, we let $\breve{p}_k = p_{k-1} - \frac{\epsilon}{2} \nabla_q H(q_{k-1}, \breve{p}_k)$, which we call the {\it intermediate momentum at the $k$-th step}.
\end{remark}
A Hamiltonian of the form in \cref{geometric-ergodicity:eq:riemannian-hamiltonian} may be transformed into a Lagrangian as
\begin{align}
    L(q, \dot{q}) = -\log \pi(q) + \frac{1}{2} \log\mathrm{det}(\mathbf{G}(q)) - \frac{1}{2} \dot{q}^\top \mathbf{G}(q) \dot{q}.
\end{align}
As noted at the beginning of this subsection, Lagrangian and Hamiltonian mechanics are formally equivalent, with the momentum $p\in\R^m$ being related to the velocity $\dot{q}\in\R^m$ according to the Legendre transformation $p= \mathbf{G}(q) \dot{q}$. The Lagrangian produces equivalent equations of motion called the Euler-Lagrange equations as
\begin{align}
    \nabla_q L(q_t, \dot{q}_t) = \frac{\mathrm{d}}{\mathrm{d}t} \nabla_{\dot{q}} L(q_t, \dot{q}_t).
\end{align}
One advantage possessed by the Lagrangian formalism over the Hamiltonian approach is that one may identify {\it explicit} numerical integrators \citep{Lan_2015} of the equations of motion, such as the Lagrangian leapfrog.
\begin{definition}\label{geometric-ergodicity:def:lagrangian-leapfrog}
The {\it Lagrangian leapfrog integrator} with step-size $\epsilon\in \R$ applied to the Riemannian Hamiltonian in \cref{geometric-ergodicity:eq:riemannian-hamiltonian} is the map $(q, p)\mapsto \tilde{\Phi}_{\epsilon}(q, p)$ defined by,
\begin{align}
    v &= \mathbf{G}^{-1}(q) p \\
    \label{geometric-ergodicity:eq:lagrange-intermediate-velocity} \breve{v} &= v - \frac{\epsilon}{2} \mathbf{G}^{-1}(q) \nabla_q U(q) - \frac{\epsilon}{2} \Omega(q, v)\breve{v} \\
    \label{geometric-ergodicity:eq:lagrange-final-position} \tilde{q} &= q + \epsilon \breve{v} \\
    \tilde{v} &= \breve{v} - \frac{\epsilon}{2} \mathbf{G}^{-1}(\tilde{q}) \nabla_q U(\tilde{q}) - \frac{\epsilon}{2} \Omega(\tilde{q}, \breve{v})\tilde{v} \\
    \label{geometric-ergodicity:eq:lagrange-final-velocity} &= v - \frac{\epsilon}{2} \mathbf{G}^{-1}(q) \nabla_q U(q) - \frac{\epsilon}{2} \Omega(q, v)\breve{v} - \frac{\epsilon}{2} \mathbf{G}^{-1}(\tilde{q}) \nabla_q U(\tilde{q}) - \frac{\epsilon}{2} \Omega(\tilde{q}, \breve{v})\tilde{v} \\
    \tilde{p} &= \mathbf{G}^{-1}(\tilde{q}) \tilde{v}
\end{align}
where $(\tilde{q},\tilde{p}) = \tilde{\Phi}_{\epsilon}(q, p)$ and
\begin{align}
    U(q) &= -\log\pi(q) + \frac{1}{2} \log\mathrm{det}(\mathbf{G}(q)) \\
    \Omega_{ij}(q, v) &= \sum_{k=1}^m \Gamma^i_{kj}(q) v_k \\
    \Gamma^i_{kj}(q) &= \frac{1}{2} \sum_{l=1}^m \mathbf{G}^{-1}_{kl}(q) \paren{\frac{\partial}{\partial q_i} \mathbf{G}_{lj}(q) + \frac{\partial}{\partial q_j} \mathbf{G}_{il}(q) - \frac{\partial}{\partial q_l} \mathbf{G}_{ij}(q)}.
\end{align}
\end{definition}
The Jacobian determinant of the transformation $\tilde{\Phi}_\epsilon$ (defined in \cref{geometric-ergodicity:def:lagrangian-leapfrog}) is,
\begin{align}
    \label{geometric-ergodicity:eq:lagrangian-jacobian-determinant} \abs{\mathrm{det}(\nabla\tilde{\Phi}_\epsilon(q, p))} = \mathrm{det}(\mathbf{G}^{-1}(q))\mathrm{det}(\mathbf{G}(\tilde{q})) \frac{\abs{\mathrm{det}(\mathrm{Id} + \epsilon \Omega(q, v)/2) \mathrm{det}(\mathrm{Id} - \epsilon \Omega(\tilde{q},\tilde{v})/2)}}{\abs{\mathrm{det}(\mathrm{Id} + \epsilon \Omega(\tilde{q}, \bar{v})/2) \mathrm{det}(\mathrm{Id} - \epsilon \Omega(q,\bar{v})/2)}}.
\end{align}
\begin{remark}
Given initial data $(q_0, p_0)\in\R^m\times\R^m$, we denote $k$-steps of the generalized (or Lagrangian) leapfrog integrator by $(q_k, p_k) = \tilde{\Phi}^k_\epsilon(q_0, p_0) = \tilde{\Phi}_\epsilon(q_{k-1}, p_{k-1})$.
\end{remark}

In the case where $\mathbf{G}$ is a constant function of $q$, the  generalized leapfrog integrator reduces to the standard leapfrog integrator, which is defined as follows.
\begin{definition}\label{geometric-ergodicity:def:euclidean-hamiltonian}
Let $\pi:\R^m\to\R_+$ be a smooth probability density. A {\it Euclidean Hamiltonian} is a smooth map $H:\R^m\times\R^m\to\R$ of the form,
\begin{align}
    \label{geometric-ergodicity:eq:euclidean-hamiltonian} H(q, p) = -\log \pi(q) + \frac{1}{2} p^\top \mathbf{G}^{-1} p,
\end{align}
where $\mathbf{G}\in\mathrm{PD}(m)$.
\end{definition}
\begin{definition}\label{geometric-ergodicity:def:euclidean-leapfrog}
Consider a Euclidean Hamiltonian as defined in \cref{geometric-ergodicity:def:euclidean-hamiltonian}. The {\it Euclidean leapfrog integrator} is the map $(q, p)\mapsto \check{\Phi}_\epsilon(q, p)$ defined by
\begin{align}
    \label{geometric-ergodicity:eq:euclidean-intermediate-momentum} \breve{p} &= p + \frac{\epsilon}{2} \nabla \log\pi(q) \\
    \tilde{q} &= q + \epsilon \mathbf{G}^{-1}\breve{p} \\
    \label{geometric-ergodicity:eq:euclidean-final-momentum} \tilde{p} &= \breve{p} + \frac{\epsilon}{2} \nabla \log\pi(\tilde{q}),
\end{align}
where $\check{\Phi}_\epsilon(q, p) = (\tilde{q}, \tilde{p})$.
\end{definition}
The Euclidean leapfrog integrator is the {\it de-facto} standard numerical integrator employed in EHMC. This is because of its accuracy and computational efficiency; in contrast to the generalized leapfrog integrator in \cref{geometric-ergodicity:def:riemannian-leapfrog}, the Euclidean leapfrog is fully explicit. Nevertheless, the generalized leapfrog and Euclidean leapfrog are exactly equivalent when applied to a Hamiltonian in the form of \cref{geometric-ergodicity:def:euclidean-hamiltonian}; this is made precise in the following result.
\begin{proposition}\label{geometric-ergodicity:prop:generalized-langrangian-euclidean-equivalent}
When the generalized leapfrog (\cref{geometric-ergodicity:def:riemannian-leapfrog}) or Lagrangian leapfrog (\cref{geometric-ergodicity:def:lagrangian-leapfrog}) is applied to a Euclidean Hamiltonian in the form of \cref{geometric-ergodicity:def:euclidean-hamiltonian} (in the sense that we take $\mathbf{G}(q) \equiv \mathbf{G}$ for every $q\in\R^m$), the resulting map is equivalent to the Euclidean leapfrog method in \cref{geometric-ergodicity:def:euclidean-leapfrog}.
\end{proposition}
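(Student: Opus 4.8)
The plan is to substitute the Euclidean Hamiltonian of \cref{geometric-ergodicity:def:euclidean-hamiltonian} into each integrator and verify that every intermediate quantity collapses onto the corresponding stage of the Euclidean leapfrog in \cref{geometric-ergodicity:def:euclidean-leapfrog}. The crucial preliminary observation is that for $H(q,p) = -\log\pi(q) + \frac{1}{2} p^\top \mathbf{G}^{-1} p$ with a constant metric one has $\nabla_q H(q,p) = -\nabla\log\pi(q)$, which is \emph{independent of $p$}, while $\nabla_p H(q,p) = \mathbf{G}^{-1} p$. The independence of $\nabla_q H$ from the momentum is precisely what will make the otherwise-implicit defining equations explicit in this special case, and the constancy of $\mathbf{G}$ is what removes the remaining implicitness in the position update.

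For the generalized leapfrog I would begin from the intermediate-momentum equation $\breve{p} = p - \frac{\epsilon}{2}\nabla_q H(q,\breve{p})$. Since $\nabla_q H(q,\breve{p}) = -\nabla\log\pi(q)$ carries no dependence on $\breve{p}$, this fixed-point relation is solved trivially and yields $\breve{p} = p + \frac{\epsilon}{2}\nabla\log\pi(q)$, matching \cref{geometric-ergodicity:eq:euclidean-intermediate-momentum}. Substituting the constant metric into \cref{geometric-ergodicity:eq:position} gives $\mathbf{G}^{-1}(q) + \mathbf{G}^{-1}(\tilde q) = 2\mathbf{G}^{-1}$, so the position step becomes the explicit update $\tilde{q} = q + \epsilon\,\mathbf{G}^{-1}\breve{p}$, and the final-momentum line reduces in the same way to $\tilde{p} = \breve{p} + \frac{\epsilon}{2}\nabla\log\pi(\tilde{q})$. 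Thus every line coincides with \cref{geometric-ergodicity:def:euclidean-leapfrog}.

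For the Lagrangian leapfrog the central simplification is that a constant metric has vanishing partial derivatives, so every Christoffel symbol $\Gamma^i_{kj}$ is zero and hence $\Omega(q,v)\equiv 0$; moreover $U(q) = -\log\pi(q) + \mathrm{const}$ gives $\nabla_q U(q) = -\nabla\log\pi(q)$. With these substitutions \cref{geometric-ergodicity:eq:lagrange-intermediate-velocity} loses its implicit $\Omega\breve{v}$ term and becomes $\breve{v} = \mathbf{G}^{-1}p + \frac{\epsilon}{2}\mathbf{G}^{-1}\nabla\log\pi(q)$. I would then track the Legendre correspondence $p = \mathbf{G}(q)\dot{q}$ carefully: multiplying $\breve{v}$ by $\mathbf{G}$ recovers exactly the Euclidean $\breve{p}$, the step $\tilde{q} = q + \epsilon\breve{v}$ of \cref{geometric-ergodicity:eq:lagrange-final-position} becomes $q + \epsilon\,\mathbf{G}^{-1}\breve{p}$, and converting the final velocity back to momentum returns $\tilde{p} = \breve{p} + \frac{\epsilon}{2}\nabla\log\pi(\tilde{q})$. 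The only place where any care is required---and hence the main obstacle---is the bookkeeping of the Legendre transform between the velocity variables used internally by the Lagrangian scheme and the momentum variables in which the Euclidean leapfrog is stated; once that dictionary is fixed, the remaining verification is a direct line-by-line match, and the vanishing of $\Omega$ together with the $p$-independence of $\nabla_q H$ carries the whole argument.
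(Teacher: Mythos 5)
Your proposal is correct and follows essentially the same route as the paper's own proof: the $p$-independence of $\nabla_q H$ (equivalently the vanishing of $\Omega$ via the Christoffel symbols) renders the implicit updates explicit, the constant metric collapses the position step, and the Legendre correspondence $p=\mathbf{G}\dot q$ translates the Lagrangian velocity updates back into the Euclidean momentum updates. No gaps.
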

\begin{proof}
In the case of the generalized leapfrog integrator, this is immediate from the fact that $\nabla_q H(q, p) = -\nabla \log\pi(q)$ which does not depend on $p$ and $\nabla_p H(q, p) = \mathbf{G}p$ which does not depend on $q$.

In the case of the Lagrangian leapfrog, the form of the Euclidean Hamiltonian means that $\Omega(q, v)$ is uniformly zero (because all of the derivatives of the metric vanish). The Lagrangian leapfrog integrator may be re-expressed as
\begin{align}
    v &= \mathbf{G}^{-1} p \\
    \breve{v} &= v + \frac{\epsilon}{2} \mathbf{G}^{-1} \nabla \log \pi(q) \\
    &= \mathbf{G}^{-1} \breve{p} \\
    \tilde{q} &= q + \epsilon \breve{v} \\
    &= q + \epsilon \mathbf{G}^{-1} \breve{p} \\
    \tilde{v} &= \breve{v} + \frac{\epsilon}{2} \mathbf{G}^{-1}\nabla \log\pi(\tilde{q}) \\
    &= \mathbf{G}^{-1} \tilde{p},
\end{align}
where $\breve{p}$ and $\tilde{p}$ are as in \cref{geometric-ergodicity:eq:euclidean-intermediate-momentum,geometric-ergodicity:eq:euclidean-final-momentum}.
\end{proof}

\subsection{Markov Chains}\label{geometric-ergodicity:subsec:markov-chains}

The purpose of this section is to review fundamentals of Markov chains, including their construction, convergence properties, and central limit theorems.
Markov chains are stochastic processes that may be inductively defined by their transition kernel. Since the transition kernel yields a probability measure depending only on the current state, one sees by inspection that the Markov chain satisfies the Markov property.
\begin{definition}
A Markov chain transition kernel on $\R^m$ is a function $K : \R^m\times\mathfrak{B}(\R^m) \to [0, 1]$ such that (i) for $q\in \R^m$, $K(q, \cdot)$ is a probability measure and (ii) for fixed $A\in\mathfrak{B}(\R^m)$ the function $q\mapsto K(q, A)$ is measurable.
\end{definition}
Within the context of simulation-based inference and mainly Bayesian inference, one is interested in constructing a Markov chain which converges to a specified target probability distribution; as noted in the introduction, this target distribution is typically known by its density up to a normalizing constant. We now define two notions of convergence.
\begin{definition}
Given a Markov chain transition kernel $K$, a Markov chain is a sequence of random variables defined inductively by $q^{n+1}\vert q^n \sim K(q^n, \cdot)$.
\end{definition}
\begin{definition}
The {\it total variation distance} between two probability measures $\Pi : \mathfrak{B}(\R^m)\to [0, 1]$ and $\Xi : \mathfrak{B}(\R^m)\to[0,1]$ is defined by
\begin{align}
    \Vert \Pi-\Xi\Vert_{\mathrm{TV}} = 2 \sup_{A\in\mathfrak{B}(\R^m)} \abs{\Pi(A)-\Xi(A)}.
\end{align}
\end{definition}
\begin{definition}\label{geometric-ergodicity:def:ergodicity}
A Markov chain with transition kernel $K$ is said to be ergodic if
\begin{align}
    \lim_{n\to\infty} \Vert K^n(q,\cdot) - \Pi(\cdot) \Vert_{\mathrm{TV}} = 0,
\end{align}
where for $A \in \mathfrak{B}(\R^m)$, $K^n(q, A) = \mathrm{Pr}(q^n \in A \vert q^0 =q)$ is the $n$-step Markov transition probability.
\end{definition}
It will turn out that one can establish ergodicity of a Markov chain if one can establish three separate properties: irreducibility, aperiodicity, and stationarity. We now define each of these concepts.
\begin{definition}
A Markov chain transition kernel $K$ is said to be $\Pi$-irreducible if for every set $A\in\mathfrak{B}(\R^m)$ with $\Pi(A) > 0$ there exists $n\in\mathbb{N}$ for which $K^n(q, A) > 0$ for all $q\in \R^m$.
\end{definition}
\begin{definition}
Given a Markov chain transition kernel $K : \R^m\times \mathfrak{B}(\R^m)\to [0, 1]$, a set $C\subseteq\R^m$ is called {\it small} if there exists a $n\in\mathbb{N}$, a $\delta > 0$, and a probability measure $\nu : \mathfrak{B}(\R^m)\to[0, 1]$ such that for any $q\in C$ and $A\in \mathfrak{B}(\R^m)$ we have $K^n(q, A) \geq \delta \cdot \nu(A)$.
\end{definition}
We may therefore say that the set $C$ is $(n, \delta, \nu)$-small.
\begin{definition}
A Markov chain transition kernel $K$ is called aperiodic if there exists a small set $C$ for which the greatest common divisor of the set
\begin{align}
    \set{n' \in \mathbb{N} : \exists~ \delta > 0 ~\text{and probability measure}~ \nu ~\text{such that}~ C ~\text{is}~ (n',\delta, \nu)\text{-small} }
\end{align}
is one.
\end{definition}
\begin{definition}\label{geometric-ergodicity:def:stationarity}
A Markov chain with transition kernel $K$ is said to be stationary for the probability measure $\Pi$ if for any Borel set $A\in\mathfrak{B}(\R^m)$ we have $\underset{q\sim \Pi}{\mathbb{E}} K(q, A) = \Pi(A)$.
\end{definition}
\begin{definition}\label{geometric-ergodicity:def:geometric-ergodicity}
A Markov chain with transition kernel $K$ is called geometrically ergodic if there exists $\rho\in (0, 1)$ and function $V : \R^m\to\R_+$ such that
\begin{align}
    \Vert K^n(q,\cdot) - \Pi(\cdot) \Vert_{\mathrm{TV}} \leq V(q) \rho^n.
\end{align}
\end{definition}
It is clear from the definitions that geometric ergodicity is a stronger form of convergence than mere ergodicity. In the latter case, however, simple conditions under which a Markov chain is ergodic can be provided.
\begin{theorem}[\citet{10.1214/aos/1176325750}]\label{geometric-ergodicity:thm:irreducible-aperiodic-stationary}
Suppose that $K$ is a Markov chain transition kernel that is $\Pi$-irreducible, aperiodic, and for which $\Pi$ is the stationary probability measure. Then $K$ produces an ergodic Markov chain.
\end{theorem}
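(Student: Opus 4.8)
The plan is to invoke the general convergence theory for $\Pi$-irreducible Markov chains on general state spaces, following the framework of Meyn and Tweedie as specialized to the MCMC setting by \citet{10.1214/aos/1176325750}. The three hypotheses --- $\Pi$-irreducibility, aperiodicity, and $\Pi$-stationarity --- are precisely the ingredients needed to conclude convergence in total variation, and the argument proceeds in three stages, one for each hypothesis. Throughout, the small set $C$ guaranteed by the aperiodicity hypothesis, together with its minorization $K^{n'}(q,\cdot)\geq \delta\,\nu(\cdot)$ for $q\in C$, is the structural device that reduces the general state space problem to a tractable one.

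First I would show that the existence of the invariant probability measure $\Pi$, combined with $\Pi$-irreducibility, forces the chain to be positive recurrent. Irreducibility supplies a maximal irreducibility measure to which $\Pi$ is equivalent, so that $\Pi$ is the \emph{unique} invariant probability measure; finiteness of $\Pi$ then rules out the transient and null-recurrent regimes. Concretely, I would apply the Nummelin splitting construction to the minorization on $C$ in order to manufacture an accessible atom, thereby transferring the question of recurrence and positivity to the atomic (essentially countable) theory where these properties are transparent, and where the finite invariant mass of $\Pi$ directly yields positivity.

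Next I would translate aperiodicity of $C$ into aperiodicity of the regeneration structure, so that the period of the manufactured atom is one and no cyclic decomposition of the state space obstructs convergence. The technical heart is then a coupling (equivalently, regeneration/renewal) argument between a chain started at a fixed $q$ and an independent stationary copy started from $\Pi$: each time both copies simultaneously occupy $C$, the minorization allows them to coalesce with probability at least $\delta$; aperiodicity ensures such simultaneous visits recur, and positive recurrence (inherited from finiteness of $\Pi$) ensures the coupling time is almost surely finite. The coupling inequality then bounds $\Vert K^n(q,\cdot)-\Pi(\cdot)\Vert_{\mathrm{TV}}$ by the tail probability that coalescence has not yet occurred, which tends to zero, giving ergodicity in the sense of \cref{geometric-ergodicity:def:ergodicity}.

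I expect the main obstacle to be the splitting and coupling bookkeeping rather than any single sharp estimate: constructing the split chain so that the minorization on $C$ becomes a genuine atom, verifying that the split chain retains a finite invariant measure, and transferring the convergence statement back to $K$. A secondary subtlety is that convergence for \emph{every} starting point $q$, rather than for $\Pi$-almost every $q$, requires the uniform-in-$q$ reachability built into the irreducibility definition used here; without it one would obtain only $\Pi$-a.e.\ convergence, so I would either lean on that stronger accessibility or explicitly establish Harris recurrence en route.
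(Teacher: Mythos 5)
The paper does not actually prove this statement: it is imported verbatim as a black-box citation to \citet{10.1214/aos/1176325750}, so there is no internal proof to compare against. Your sketch is essentially a reconstruction of the standard argument behind that citation (Tierney's Theorem 1, which itself rests on the Nummelin splitting and the Athreya--Ney/Nummelin regeneration--coupling machinery), and as an outline it is sound: invariance plus irreducibility gives uniqueness and positivity of $\Pi$, the split chain supplies an accessible atom, aperiodicity kills the cyclic decomposition, and the coupling inequality delivers convergence in total variation. Two remarks. First, a small bookkeeping point: with two \emph{independent} copies both in $C$, each regenerates into $\nu$ with probability $\delta$, so coalescence at that attempt occurs with probability at least $\delta^2$, not $\delta$; this does not affect the conclusion but should be stated correctly. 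Second, and more substantively, you are right to flag that the generic theorem only yields convergence for $\Pi$-almost every starting point unless Harris recurrence holds; the paper's \cref{geometric-ergodicity:def:ergodicity} is stated for a fixed (arbitrary) $q$, and the paper's nonstandard, uniform-in-$q$ formulation of irreducibility (a single $n$ with $K^n(q,A)>0$ for \emph{all} $q$) is strong enough to force Harris recurrence and close that gap. Making that implication explicit is exactly the step your plan defers, and it is the only place where your argument needs more than routine effort; everything else is standard and matches the source the paper leans on.
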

However, in addition to giving an upper bound on the mixing of the Markov chain into the target probability measure, geometric ergodicity is also useful for establishing a central limit theorem for expectations computed from the sequence $(q^1, q^2,\ldots)$. We make this precise as follows.
\begin{theorem}[\citet{meyn1993markov}]
Let $h : \R^m\to \R$ be a function for which $\underset{q\sim \Pi}{\mathbb{E}} \abs{h(q)}^{2+\gamma}<\infty$ for some $\gamma > 0$. Let $K$ be a Markov chain transition kernel that is aperiodic, $\Pi$-irreducible, and for which $\Pi$ is the unique stationary distribution. Assume further that $K$ converges geometrically to $\Pi$. Define $\mathcal{S}_K[h] = n^{-1} \sum_{i=1}^n h(q^i)$ where $q^{i+1}\vert q^i \sim K(q^i, \cdot)$. Then, as $n\to\infty$,
\begin{align}
    \sqrt{n} \paren{\mathcal{S}_K[h] - \underset{q\sim \Pi}{\mathbb{E}}\left[ h(q)\right]} \overset{d}{\to} \mathrm{Normal}(0, \tau^2_h)
\end{align}
where $\tau^2_h$ is a constant, less than infinity, that depends on $h$ (and $K$). The quantity $\tau^2_h$ has a closed-form given by
\begin{align}
    \tau^2_h = \underset{q\sim \Pi}{\mathrm{Var}}\left(h(q)\right) + 2 \sum_{k=1}^\infty \underset{q\sim \Pi}{\mathrm{Cov}}\left(h(q), h(q^k)\right).
\end{align}
\end{theorem}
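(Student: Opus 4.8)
The plan is to establish the central limit theorem through the martingale approximation technique of Gordin, which reduces the problem to an application of the martingale central limit theorem for stationary ergodic sequences. Throughout, I assume the chain is started in stationarity, $q^0\sim \Pi$; because the chain is $\Pi$-irreducible, aperiodic, and geometrically (hence $V$-uniformly) ergodic, the limiting law is insensitive to the initialization, so this entails no loss of generality. Write $\bar{h} = h - \underset{q\sim\Pi}{\mathbb{E}}[h(q)]$ for the centered observable and let $(Kg)(q) = \int g(q')~K(q, \mathrm{d}q')$ denote the action of the transition kernel on functions.

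First I would solve the Poisson equation $\hat{h} - K\hat{h} = \bar{h}$ by means of the Neumann series $\hat{h} = \sum_{k=0}^\infty K^k \bar{h}$. The geometric ergodicity hypothesis is exactly what forces this series to converge: a $\Pi$-irreducible, aperiodic, geometrically ergodic chain admits a geometric drift function and is $V$-uniformly ergodic, so that $K^k\bar{h}$ decays geometrically in $k$. Combined with the moment bound $\underset{q\sim\Pi}{\mathbb{E}}\abs{h(q)}^{2+\gamma}<\infty$, which supplies the integrability margin needed to dominate the partial sums of the series, this yields a solution $\hat{h}\in L^2(\Pi)$. Establishing this integrability is the technical crux of the argument, since it is what simultaneously licenses the martingale central limit theorem and the identification of the limiting variance.

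Next I would form the martingale difference sequence $D_i = \hat{h}(q^i) - (K\hat{h})(q^{i-1})$, which satisfies $\mathbb{E}[D_i \mid q^0,\ldots,q^{i-1}] = 0$ and, under stationarity, is a stationary ergodic sequence with $\underset{q\sim\Pi}{\mathbb{E}}[D_1^2] < \infty$. The Poisson equation produces the telescoping decomposition
\begin{align}
    \sum_{i=1}^n \bar{h}(q^i) = \sum_{i=1}^n D_i + (K\hat{h})(q^0) - (K\hat{h})(q^n),
\end{align}
in which $M_n = \sum_{i=1}^n D_i$ is a martingale and the boundary remainder $(K\hat{h})(q^0) - (K\hat{h})(q^n)$ is $O_P(1)$ by stationarity and square-integrability of $K\hat{h}$, hence negligible after division by $\sqrt{n}$. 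Applying the martingale central limit theorem for stationary ergodic differences to $n^{-1/2} M_n$ then gives $\sqrt{n}\paren{\mathcal{S}_K[h] - \underset{q\sim\Pi}{\mathbb{E}}[h(q)]} \overset{d}{\to} \mathrm{Normal}(0, \tau_h^2)$ with $\tau_h^2 = \underset{q\sim\Pi}{\mathbb{E}}[D_1^2]$, where the ergodicity of the stationary sequence follows from \cref{geometric-ergodicity:thm:irreducible-aperiodic-stationary}.

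Finally, I would identify this variance with the stated autocovariance series. Writing $\gamma_k = \underset{q\sim\Pi}{\mathrm{Cov}}(h(q^0), h(q^k))$, a direct expansion of the variance of the partial sum gives
\begin{align}
    \frac{1}{n}\,\underset{q\sim\Pi}{\mathrm{Var}}\paren{\sum_{i=1}^n \bar{h}(q^i)} = \gamma_0 + 2\sum_{k=1}^{n-1}\paren{1 - \frac{k}{n}}\gamma_k,
\end{align}
and the geometric decay of $\gamma_k$ (again a consequence of geometric ergodicity) ensures absolute summability, so the right-hand side converges to $\gamma_0 + 2\sum_{k=1}^\infty \gamma_k = \tau_h^2$, which by the martingale decomposition coincides with $\underset{q\sim\Pi}{\mathbb{E}}[D_1^2]$. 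The main obstacle is the second step: verifying that the Poisson solution inherits enough integrability from the $(2+\gamma)$-moment condition and the geometric rate, which is precisely where the hypotheses are indispensable. A fully self-contained alternative would instead proceed via the Nummelin split-chain regeneration construction, decomposing the trajectory into i.i.d.\ excursions between regeneration times and invoking the classical i.i.d.\ central limit theorem; there the analogous difficulty migrates to showing the regeneration times have finite variance and the per-excursion sums of $\bar{h}$ have finite second moment.
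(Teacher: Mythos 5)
This theorem is quoted by the paper directly from \citet{meyn1993markov}; no proof is given in the text, so there is nothing internal to compare against and I can only assess your argument on its own terms. Your overall strategy --- solve the Poisson equation $\hat h - K\hat h = \bar h$, telescope into a stationary ergodic martingale plus a negligible boundary term, apply the martingale CLT for stationary ergodic differences, and identify $\mathbb{E}[D_1^2]$ with $\gamma_0 + 2\sum_{k\geq 1}\gamma_k$ --- is the classical route, and each of those steps is sound once $\hat h \in L^2(\Pi)$ is in hand.

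The genuine gap sits exactly at the point you yourself flag as the crux, and flagging it does not fill it. Geometric ergodicity gives $\Vert K^k(q,\cdot)-\Pi\Vert_{\mathrm{TV}} \leq V(q)\rho^k$, equivalently geometric convergence in a $V$-weighted norm for some drift function $V$ determined by the chain. The Neumann series therefore yields a bound of the form $\abs{\hat h(q)} \leq c\,V(q)$ (or $c\,V^{1/2}(q)$ after the usual Jensen refinement), so square-integrability of $\hat h$ requires something like $h^2 \leq c V$ with $\Pi(V)<\infty$; this does \emph{not} follow from $\underset{q\sim\Pi}{\mathbb{E}}\abs{h(q)}^{2+\gamma}<\infty$, because the moment condition constrains $h$ while the drift function is dictated by the chain, and there is no a priori relation between the two. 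The Meyn--Tweedie theorem proved by the Poisson-equation route indeed assumes $h^2 \leq V$; the version stated here, with a $(2+\gamma)$-th moment, is the Chan--Geyer/Ibragimov form, normally proved by observing that a geometrically ergodic chain is $\alpha$-mixing at a geometric rate and invoking the CLT for strongly mixing stationary sequences, where the summability condition $\sum_n \alpha(n)^{\gamma/(2+\gamma)}<\infty$ is precisely what the extra $\gamma$ buys (Davydov's covariance inequality then gives $\abs{\gamma_k}\leq C\,\alpha(k)^{\gamma/(2+\gamma)}$, which also supplies the absolute summability you need in your final variance computation, another place where you implicitly lean on the unproved $L^2$ control of $\hat h$). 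Your alternative via Nummelin regeneration would also work, but there too the finite second moment of the per-excursion sums must be extracted from the $(2+\gamma)$-moment condition together with the geometric tail of the regeneration times rather than asserted.
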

Establishing that a Markov chain has $\Pi$ as a stationary distribution is often easily achieved by showing that the chain satisfies detailed balance with respect to $\Pi$. As we require detailed balance when discussing the marginal transition kernels of RMHMC and LMC in \cref{geometric-ergodicity:subsec:inherited-geometric-ergodicity}, we define this notion now.
\begin{definition}\label{geometric-ergodicity:def:detailed-balance}
A Markov chain transition kernel is said to satisfy {\it detailed balance} with respect to the probability distribution $\Pi$ (equivalently, the Markov chain is called {\it reversible} with respect to $\Pi$) if for any sets $Q, Q'\in\mathfrak{B}(\R^m)$,
\begin{align}
    \int_Q K(q, Q')~\Pi(\mathrm{d}q) = \int_{Q'} K(q, Q) ~\Pi(\mathrm{d}q).
\end{align}
\end{definition}
\begin{remark}
When detailed balance holds for a Markov chain transition kernel $K$, it follows that $\Pi$ is the stationary probability measure (\cref{geometric-ergodicity:def:stationarity}) of the Markov transition kernel $K$.
\end{remark}

\subsubsection{Metropolis-Hastings Kernels and Generalized Langevin Algorithms}

In this section we review Metropolis-Hastings Markov chains. We begin by formally defining these objects before proceeding to a special case of Metropolis-Hastings method based on Langevin diffusion, which recalls the constructions considered in \citet{roy2021convergence}.
\begin{definition}[Metropolis-Hastings Algorithm]
Let $\pi:\R^m\to\R_+$ be a probability density and, for each $q\in\R^m$, let $\tilde{\pi}(\cdot \vert q):\R^m\to\R_+$ be a probability density depending on $q$. The Metropolis-Hastings transition kernel with proposal density $\tilde{\pi}(\cdot\vert q)$ is,
\begin{align}
\begin{split}
    K(q, A) &= \int_{A} \min\set{1, \frac{\pi(\tilde{q}) \tilde{\pi}(q\vert \tilde{q})}{\pi(q) \tilde{\pi}(\tilde{q}\vert q)}} \tilde{\pi}(\tilde{q}\vert q)~\mathrm{d}\tilde{q} \\
    &\qquad +~ \paren{1 - \int_{\R^m} \min\set{1, \frac{\pi(\tilde{q}) \tilde{\pi}(q\vert \tilde{q})}{\pi(q) \tilde{\pi}(\tilde{q}\vert q)}} \tilde{\pi}(\tilde{q}\vert q)~\mathrm{d}\tilde{q}} \mathrm{1}\set{q\in A},
\end{split}
\end{align}
for $q\in \R^m$ and $A\in\mathfrak{B}(\R^m)$.
\end{definition}
\begin{definition}[Generalized Metropolis-Adjusted Langevin Algorithm]\label{geometric-ergodicity:def:generalized-langevin}
Fix $\epsilon\in \R\setminus \set{0}$. Let $c_\epsilon : \R^m\to\R^m$ and let $\mathbf{A}:\R^m\to\mathrm{PD}(m)$. The generalized Metropolis-adjusted Langevin algorithm is an instance of the Metropolis-Hastings algorithm with proposal density
\begin{align}
    \tilde{\pi}(\tilde{q}\vert q) = \mathrm{Normal}(\tilde{q}; c_\epsilon(q), \epsilon^2 \mathbf{A}(q)).
\end{align}
We denote the Markov chain transition kernel of the generalized Metropolis-adjusted Langevin algorithm with step-size $\epsilon$ by $J_\epsilon : \R^m\times \mathfrak{B}(\R^m)\to [0,1]$.
\end{definition}
\begin{definition}[Riemannian Manifold Metropolis-Adjusted Langevin Algorithm]\label{geometric-ergodicity:def:riemannian-manifold-langevin}
In the special case where
\begin{align}
    \label{geometric-ergodicity:eq:euler-maruyama-manifold-langevin} c_\epsilon(q) &= \frac{\epsilon^2}{2} \mathbf{A}(q) \nabla\log \pi(q) + \frac{\epsilon^2}{2} \Gamma(q) \\
    \Gamma_i(q) &= \sum_{j=1}^m \frac{\partial}{\partial q_j} \mathbf{A}_{ij}(q)
\end{align}
we call the resulting method the Riemannian manifold Metropolis-adjusted Langevin algorithm (MMALA).
\end{definition}
\begin{remark}
The form of the proposal in \cref{geometric-ergodicity:eq:euler-maruyama-manifold-langevin} is based on an Euler-Maruyama discretization of a Langevin diffusion on a manifold with Riemannian metric $\mathbf{G}(q) = \mathbf{A}^{-1}(q)$; see \citet{XIFARA201414} for details. In summary, \cref{geometric-ergodicity:eq:euler-maruyama-manifold-langevin} is the drift component of the Euler-Maruyama discretization (with step-size $\epsilon^2$) applied to the following stochastic differential equation:
\begin{align}
    \label{geometric-ergodicity:eq:manifold-langevin} \mathrm{d}X_t = \frac{1}{2}\mathbf{G}^{-1}(X_t) \nabla \log \pi(X_t) ~\mathrm{d}t + \Gamma(X_t)~\mathrm{d}t + \mathbf{G}^{-1/2}(X_t)~\mathrm{d}B_t,
\end{align}
where $B_t$ is Euclidean Brownian motion at time $t\in\R$. When $\mathbf{G}(q)$ is the Fisher information matrix, the term $\mathbf{G}^{-1}(X_t) \nabla \log \pi(X_t)$ can be identified as the {\it natural gradient} of the function $\log \pi$ under the geometry generated by the Fisher metric \citep{amari2000methods} at position $X_t$. On the other hand, the term $\Gamma(X_t)~\mathrm{d}t + \mathbf{G}^{-1/2}(X_t)~\mathrm{d}B_t$ corresponds to {\it manifold Brownian motion}, since its infinitesimal generator is the Laplace-Beltrami operator on the manifold \citep{hsu2002stochastic}. Therefore, as the stochastic differential equation in \cref{geometric-ergodicity:eq:manifold-langevin} consists of a term comprising the manifold gradient of a log-density and another term comprising manifold Brownian motion, it is called the {\it Riemannian Langevin equation} in correspondence with the Euclidean case.
\end{remark}
\begin{definition}[Simplified Riemannian Manifold Metropolis-Adjusted Langevin Algorithm]\label{geometric-ergodicity:def:simplified-manifold-langevin}
Computing the function $\Gamma$ in \cref{geometric-ergodicity:def:riemannian-manifold-langevin} may be inconvenient to evaluate. It can be ignored while still yielding a Markov chain transition kernel, giving the special case,
\begin{align}
    c_\epsilon(q) = \frac{\epsilon^2}{2} \mathbf{A}(q) \nabla\log \pi(q).
\end{align}
The resulting method is called the simplified Riemannian manifold Metropolis-adjusted Langevin algorithm, which was considered by \citet{rmhmc}.
\end{definition}
We adopt the abbreviation SMALA to refer to the simplified Metropolis-adjusted Langevin algorithm.
\begin{remark}
There are three common choices of the function $\mathbf{A}$. The first is that $\mathbf{A}$ is a constant function, in which case we simply write $\mathbf{A}\in\mathrm{PD}(m)$. A second option is that $\mathbf{A}$ is chosen as the {\it inverse} of the sum of the Fisher information matrix and the negative Hessian of the log-prior, which can capture second-order geometry of both the likelihood and the prior; the use of the inverse of the sum of the Fisher information and the negative Hessian of the log-prior as a preconditioner is the approach advocated by \citet{rmhmc}. A third option is that $\mathbf{A}$ is the inverse of the SoftAbs metric, which is a smooth transformation of the Hessian of the log-density of the target distribution; for details see \citet{softabs}.
\end{remark}

\subsubsection{Geometric Methods of Bayesian Inference}

\begin{definition}[Involutive Monte Carlo \citep{DBLP:conf/icml/NeklyudovWEV20}]\label{def:involutive-monte-carlo}
Let $\Phi : \R^m\to\R^m$ be a smooth involution and let $\pi : \R^m\to\R_+$ be a probability density on $\R^m$. The Markov chain transition kernel of involutive Monte Carlo with target density $\pi$ is
\begin{align}
\begin{split}
    K(q, A) &= \min\set{1, \frac{\pi(\Phi(q))}{\pi(q)} \abs{\mathrm{det}(\nabla \Phi(q))}} \mathbf{1}\set{\Phi(q) \in A} \\
    &\qquad +~ \paren{1 - \min\set{1, \frac{\pi(\Phi(q))}{\pi(q)} \abs{\mathrm{det}(\nabla \Phi(q))}}} \mathbf{1}\set{q\in A},
\end{split}
\end{align}
for $q\in\R^m$ and $A\in\mathfrak{B}(\R^m)$.
\end{definition}
It is easily verified that the transition kernel of involutive Monte Carlo satisfies detailed balance with respect to the density $\pi$. Involutive Monte Carlo gives rise to two special transition kernels corresponding to RMHMC and LMC.
\begin{definition}\label{def:rmhmc-transition-kernel}
Let $\hat{\Phi}_\epsilon$ be as in \cref{geometric-ergodicity:def:riemannian-leapfrog} and let $H$ be as in \cref{geometric-ergodicity:def:riemannian-hamiltonian}. 
The involution of Riemannian manifold Hamiltonian Monte Carlo with step-size $\epsilon$ and $k$ integration steps is,
\begin{align}
    \label{eq:rmhmc-involution}\Phi = \mathbf{F} \circ \underbrace{\hat{\Phi}_\epsilon \circ\cdots\circ \hat{\Phi}_\epsilon}_{k~\mathrm{times}},
\end{align}
where $\mathbf{F}(q, p) = (q, -p)$. The target density of Riemannian manifold Hamiltonian Monte Carlo is $\pi(q, p) \propto \exp(-H(q, p))$. The transition kernel of involutive Monte Carlo (\cref{def:involutive-monte-carlo}) with involution $\Phi$ given in \cref{eq:rmhmc-involution} is called the transition kernel of Riemannian manifold Hamiltonian Monte Carlo with step-size $\epsilon$ and $k$ integration steps.
\end{definition}
\begin{remark}
It can be shown that if $\Phi$ is an invertible function and if $\mathbf{F}\circ\Phi$ is an involution that $\mathbf{F}\circ\Phi\circ\cdots\cdot\Phi$ is also an involution. This explains why \cref{eq:rmhmc-involution} is also an involution. For further details, see \cref{prop:involution-composition}.
\end{remark}
In the special case where $\mathbf{G}$ is a constant function of $q$, the resulting Markov chain is called Euclidean Hamiltonian Monte Carlo.
\begin{definition}
Let $\check{\Phi}_\epsilon$ be as in \cref{geometric-ergodicity:def:euclidean-leapfrog} and let $H$ be as in \cref{geometric-ergodicity:def:euclidean-hamiltonian}. 
The involution of Euclidean Hamiltonian Monte Carlo (EHMC) with step-size $\epsilon$ and $k$ integration steps is,
\begin{align}
    \label{eq:ehmc-involution}\Phi = \mathbf{F} \circ \underbrace{\check{\Phi}_\epsilon \circ\cdots\circ \check{\Phi}_\epsilon}_{k~\mathrm{times}},
\end{align}
where $\mathbf{F}(q, p) = (q, -p)$. The target density of Euclidean Hamiltonian Monte Carlo is $\pi(q, p) \propto \exp(-H(q, p))$. The transition kernel of involutive Monte Carlo (\cref{def:involutive-monte-carlo}) with involution $\Phi$ given in \cref{eq:ehmc-involution} is called the transition kernel of Euclidean Hamiltonian Monte Carlo with step-size $\epsilon$ and $k$ integration steps.
\end{definition}

\begin{definition}
Let $\tilde{\Phi}_\epsilon$ be as in \cref{geometric-ergodicity:def:lagrangian-leapfrog} and let $H$ be as in \cref{geometric-ergodicity:def:riemannian-hamiltonian}. The involution of Lagrangian Monte Carlo with step-size $\epsilon$ and $k$ integration steps is,
\begin{align}
    \Phi = \mathbf{F} \circ \underbrace{\tilde{\Phi}_\epsilon \circ\cdots\circ \tilde{\Phi}_\epsilon}_{k~\mathrm{times}},
\end{align}
with $\mathbf{F}(q, p) = (q, -p)$. The target density of Lagrangian Hamiltonian Monte Carlo is $\pi(q, p) \propto \exp(-H(q, p))$. We call this the transition kernel of Lagrangian Monte Carlo with step-size $\epsilon$ and $k$ integration steps.
\end{definition}
\begin{remark}
Unlike the transition kernel employed in Riemannian manifold Hamiltonian Monte Carlo, the Jacobian determinant of the involution appearing in Lagrangian Monte Carlo must be computed through $k$ applications of \cref{geometric-ergodicity:eq:lagrangian-jacobian-determinant}, where, as before, $k$ is the number of integration steps.
\end{remark}
In the special case where $\mathbf{G}$ is a constant function of $q$, Riemannian manifold Hamiltonian Monte Carlo and Lagrangian Monte Carlo produces a transition kernel that is equivalent to Euclidean Hamiltonian Monte Carlo with constant mass matrix $\mathbf{G}$.
\begin{proposition}
The transition kernels of RMHMC (or LMC) with step-size $\epsilon$ and $k$ integration steps when applied to the Euclidean Hamiltonian in \cref{geometric-ergodicity:def:euclidean-hamiltonian} (in the sense that we take $\mathbf{G}(q) \equiv \mathbf{G}$ for every $q\in\R^m$) are both equivalent to the EHMC transition kernel.
\end{proposition}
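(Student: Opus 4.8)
The plan is to reduce the entire statement to the single-step equivalence already established in \cref{geometric-ergodicity:prop:generalized-langrangian-euclidean-equivalent}, and then to exploit the fact that the involutive Monte Carlo kernel of \cref{def:involutive-monte-carlo} is a deterministic functional of the involution and the target density alone. The target density agrees across all three methods, being $\pi(q, p)\propto\exp(-H(q,p))$ for the Euclidean Hamiltonian; so the only thing to check is that the three involutions coincide as maps on $\R^m\times\R^m$.

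First I would invoke \cref{geometric-ergodicity:prop:generalized-langrangian-euclidean-equivalent}, which under $\mathbf{G}(q)\equiv\mathbf{G}$ gives the pointwise identities $\hat{\Phi}_\epsilon = \check{\Phi}_\epsilon$ and $\tilde{\Phi}_\epsilon = \check{\Phi}_\epsilon$. I would then lift these to the $k$-fold compositions by a one-line induction on $k$: assuming $\hat{\Phi}_\epsilon^{k-1} = \check{\Phi}_\epsilon^{k-1}$, one has $\hat{\Phi}_\epsilon^{k} = \hat{\Phi}_\epsilon\circ\hat{\Phi}_\epsilon^{k-1} = \check{\Phi}_\epsilon\circ\check{\Phi}_\epsilon^{k-1} = \check{\Phi}_\epsilon^{k}$, and identically for $\tilde{\Phi}_\epsilon$. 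Composing on the left with the momentum flip $\mathbf{F}(q,p)=(q,-p)$, which is common to all three definitions, then yields that the RMHMC involution $\mathbf{F}\circ\hat{\Phi}_\epsilon^{k}$, the LMC involution $\mathbf{F}\circ\tilde{\Phi}_\epsilon^{k}$, and the EHMC involution $\mathbf{F}\circ\check{\Phi}_\epsilon^{k}$ are one and the same map $\Phi$. Since the acceptance ratio in \cref{def:involutive-monte-carlo} depends on the involution only through the image $\Phi(q,p)$ and the Jacobian determinant $\abs{\mathrm{det}(\nabla\Phi(q,p))}$, and both the involution and the target density now agree, the three transition kernels are identical.

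The only point that invites care is the Jacobian determinant in the LMC case, where one might be tempted to simplify the explicit formula in \cref{geometric-ergodicity:eq:lagrangian-jacobian-determinant} under constant $\mathbf{G}$ — noting that $\Omega(q,v)$ vanishes because every derivative of the metric vanishes, so the ratio of determinants collapses to one while the prefactor $\mathrm{det}(\mathbf{G}^{-1})\,\mathrm{det}(\mathbf{G})=1$. I expect this computation to be the apparent obstacle, but in fact it is unnecessary: the Jacobian determinant is a property of the map itself, so once $\tilde{\Phi}_\epsilon^{k}$ and $\check{\Phi}_\epsilon^{k}$ are known to be the \emph{same function}, their Jacobian determinants coincide at every point with no further calculation. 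The substance of the proposition thus lies entirely in \cref{geometric-ergodicity:prop:generalized-langrangian-euclidean-equivalent}, and the remainder is bookkeeping.
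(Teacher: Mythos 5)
Your proof is correct and takes essentially the same route as the paper: invoke \cref{geometric-ergodicity:prop:generalized-langrangian-euclidean-equivalent} to identify the proposals, then argue the acceptance probabilities coincide (your observation that the Jacobian determinants must agree because the maps are literally the same function is a clean way to dispatch the LMC case). The one step you assert rather than justify is that the target densities agree; the paper makes this explicit by noting that for constant $\mathbf{G}$ the Riemannian Hamiltonian differs from the Euclidean one only by the additive constant $\frac{1}{2}\log\mathrm{det}(\mathbf{G})$, so the two phase-space densities are proportional and hence identical after normalization.
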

\begin{proof}
It follows from \cref{geometric-ergodicity:prop:generalized-langrangian-euclidean-equivalent} that the proposals produced by RMHMC and LMC are identical to EHMC. It remains to be verified that the acceptance decisions are identical, too. This can be verified by observing that the Riemannian Hamiltonian in \cref{geometric-ergodicity:eq:riemannian-hamiltonian} is the Euclidean Hamiltonian in \cref{geometric-ergodicity:eq:euclidean-hamiltonian} up to an additive constant (since $\mathbf{G}$ does not depend on position). Hence, the Euclidean and Riemannian densities are proportional to one another, and the acceptance probabilities will be identical.
\end{proof}
We now elaborate on the connection between EHMC {\it with a single-step} and the Metropolis-adjusted Langevin algorithm: in particular, these two methods can be constructed to be exactly equivalent in $q$-space, producing equal proposals and identical acceptance probabilities.
\begin{proposition}[\citet{1206.1901}]\label{geometric-ergodicity:prop:mala-hmc-equivalence}
Let $\mathbf{A}\in\R^{m\times m}$ be a fixed positive definite matrix. Consider EHMC with the Hamiltonian $H(q,p)=-\log\pi(q) + \frac{1}{2} p^\top\mathbf{A}^{-1}p$ and MALA with proposal distribution $\tilde{q}\vert q \sim\mathrm{Normal}\paren{q+\frac{\epsilon^2}{2}\mathbf{A}^{-1}\nabla \log\pi(q), \epsilon^2\mathbf{A}^{-1}}$. The marginal transition kernel of EHMC with step-size $\epsilon$ and a single integration step can be constructed to be exactly equivalent to the transition kernel of MALA.
\end{proposition}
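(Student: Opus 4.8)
The plan is to exhibit an explicit coupling of the auxiliary randomness in the two algorithms under which both the proposal and the accept/reject decision agree pointwise, so that the marginal $q$-transition kernels are identical. Recall that a single step of EHMC first draws a momentum $p\sim\mathrm{Normal}(0,\mathbf{A})$ (since the conditional is $\pi(p\vert q)=\mathrm{Normal}(0,\mathbf{A})$ for the Hamiltonian in the statement), applies $\check{\Phi}_\epsilon$ followed by the momentum flip $\mathbf{F}$, and accepts the proposal $\tilde{q}$ with probability $\min\set{1,\exp(H(q,p)-H(\tilde{q},\tilde{p}))}$, the Jacobian determinant being unity. First I would substitute $\breve{p}=p+\tfrac{\epsilon}{2}\nabla\log\pi(q)$ into the position update to obtain the closed form
\begin{align}
\tilde{q} = q + \tfrac{\epsilon^2}{2}\mathbf{A}^{-1}\nabla\log\pi(q) + \epsilon\mathbf{A}^{-1}p.
\end{align}
Because $p\sim\mathrm{Normal}(0,\mathbf{A})$, the map $p\mapsto\tilde{q}$ is an affine bijection and pushes the momentum law forward to $\mathrm{Normal}(q+\tfrac{\epsilon^2}{2}\mathbf{A}^{-1}\nabla\log\pi(q),\epsilon^2\mathbf{A}^{-1})$, which is exactly the MALA proposal; coupling $p$ to the MALA Gaussian noise through this bijection makes the two proposals identical.

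Next I would match the acceptance probabilities. Since $\pi(\tilde{q})$ and $\pi(q)$ enter identically into both the Hamiltonian difference and the Metropolis--Hastings ratio, it suffices to prove that the kinetic-energy difference equals the log proposal-density ratio, i.e.
\begin{align}
\tfrac{1}{2}p^\top\mathbf{A}^{-1}p - \tfrac{1}{2}\tilde{p}^\top\mathbf{A}^{-1}\tilde{p} = \log\tilde{\pi}(q\vert\tilde{q}) - \log\tilde{\pi}(\tilde{q}\vert q).
\end{align}
Inverting the bijection expresses $p$ as a deterministic function of $\tilde{q}$ with constant Jacobian $\epsilon^{-m}\,\mathrm{det}(\mathbf{A})$, so the forward proposal density is $\tilde{\pi}(\tilde{q}\vert q)=\phi(p;0,\mathbf{A})\,\epsilon^{-m}\,\mathrm{det}(\mathbf{A})$, with $\phi$ the Gaussian density. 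The crucial step is to identify the reverse proposal density: by reversibility of the leapfrog together with the momentum flip (the statement that $\mathbf{F}\circ\check{\Phi}_\epsilon$ is an involution, which for one step gives $\check{\Phi}_\epsilon(\tilde{q},-\tilde{p})=(q,-p)$), the unique momentum at $\tilde{q}$ whose single leapfrog step lands at $q$ is $-\tilde{p}$; using the symmetry $\phi(-\tilde{p};0,\mathbf{A})=\phi(\tilde{p};0,\mathbf{A})$ this yields $\tilde{\pi}(q\vert\tilde{q})=\phi(\tilde{p};0,\mathbf{A})\,\epsilon^{-m}\,\mathrm{det}(\mathbf{A})$. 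Subtracting, the Jacobian factors and normalizing constants cancel and the displayed identity falls out immediately, so the acceptance probabilities agree and the marginal kernels coincide.

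I expect the main obstacle to be the reverse-proposal identification, which is precisely where the special structure of a \emph{single} leapfrog step is genuinely used: one must verify that solving the reverse move $\tilde{q}\to q$ for its generating momentum returns exactly $-\tilde{p}$, rather than merely up to higher-order error as would occur for $k>1$ steps. Once this is in hand, the remaining work is the two Gaussian exponent expansions, which I would confirm agree termwise (the quadratic term contributing $\tfrac{1}{\epsilon^2}(\tilde{q}-q)^\top\mathbf{A}(\tilde{q}-q)$, the cross term $-(\tilde{q}-q)^\top\nabla\log\pi(q)$, and the remaining $\tfrac{\epsilon^2}{4}$-order term); these are routine but should be checked to ensure the claimed proposal mean and variance $\epsilon^2\mathbf{A}^{-1}$ are reproduced exactly.
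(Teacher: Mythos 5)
Your proposal is correct and follows essentially the same route as the paper's proof: couple the momentum draw $p=\sqrt{\mathbf{A}}\,z$ with the MALA Gaussian noise so that the single leapfrog step reproduces the MALA proposal exactly, then verify that the kinetic-energy ratio in the Hamiltonian acceptance probability coincides with the MALA proposal-density ratio. The only cosmetic difference is that you organize the acceptance-ratio match via pushforward densities and the involution identity $\check{\Phi}_\epsilon(\tilde{q},-\tilde{p})=(q,-p)$, whereas the paper substitutes $p$ and $\tilde{p}'$ expressed in terms of $(q,\tilde{q})$ directly into the two ratios; both yield the same termwise identification.
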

A proof is provided in \cref{geometric-ergodicity:app:proof-mala-hmc-equivalence}.
The analysis of mixture transition kernels will be central to our analysis. In standard implementations of Riemannian manifold Hamiltonian and Lagrangian Monte Carlo, it is typical to randomize the number of integration steps. This is done to avert any Markov chain pathologies (such as irreducibility failures) that may result from using a fixed number of integration steps. We therefore consider Markov chain transition kernels which are mixtures.
\begin{definition}\label{geometric-ergodicity:def:unmodified-kernel}
Let $K_{\epsilon, k} : (\R^m\times\R^m)\times \mathfrak{B}(\R^m\times\R^m) \to\R_+$ be the Markov chain transition kernel of RMHMC (or LMC) with step-size $\epsilon$ and $k$ integration steps. Let $\tilde{K}_{\epsilon, k}$ be the marginal transition kernel (defined in \cref{geometric-ergodicity:lem:marginal-transition-kernel}) of $K_{\epsilon, k}$. Define the {\it mixture transition kernel of RMHMC (or LMC)} to be
\begin{align}
    \tilde{K}_\epsilon = \sum_{k=1}^\infty \alpha_k \tilde{K}_{\epsilon, k},
\end{align}
where $(\alpha_1,\alpha_2,\ldots)$ is a probability vector.
\end{definition}
The fact that EHMC, RMHMC, and LMC all satisfy detailed balance in $(q, p)$-space causes us to examine the progression of $q$-states alone in between Gibbs resampling steps of the momentum. This leads to {\it marginal} Markov chain transition kernels of the following form:
\begin{lemma}\label{geometric-ergodicity:lem:marginal-transition-kernel}
  Let $K : (\R^{m}\times\R^m) \times \mathfrak{B}(\R^{m}\times\R^m)\to\R$ be a Markov chain transition kernel. Suppose that $K$ satisfies detailed balance with respect to the density $\pi(q, p)$ with $q$-marginal distribution $\pi(q)$ and conditional density $\pi(p\vert q)$; i.e. $\pi(q, p) = \pi(q) \pi(p\vert q)$. 
  Consider the marginal chain constructed as follows. Given $q^n=q$, sample $p^{n}\sim \pi(p\vert q)$, sample $(q^{n+1}, p^{n+1}) \sim K((q^n, p^n), \cdot)$ and discard both momenta.
  The transition kernel of the marginal chain satisfies
  \begin{align}
      \tilde{K}(q, Q) = \int_{\R^m} K((q, p), (Q, \R^m)) ~\pi(p\vert q) ~\mathrm{d}p
  \end{align}
  where $Q\in\mathfrak{B}(\R^m)$.
\end{lemma}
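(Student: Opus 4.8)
The plan is to recognize that the marginal chain is built by a three-stage recipe at each step — resample the momentum from $\pi(\cdot\mid q)$, apply $K$ to the augmented state, and then discard both momenta — and to read off its one-step transition probability by the law of total probability, conditioning on the freshly drawn momentum. The identity is then essentially a single marginalization, so the work is in setting it up cleanly rather than in any substantive estimate.

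First I would observe that, because the momentum $p^n$ is resampled at the start of every marginal step from $\pi(\cdot\mid q^n)$ — a law depending only on the current position $q^n$ — the sequence $(q^1,q^2,\ldots)$ depends on its past solely through the current state. Hence it is genuinely a Markov chain and possesses a well-defined one-step kernel $\tilde K(q,Q)=\Pr(q^{n+1}\in Q\mid q^n=q)$ for $Q\in\mathfrak{B}(\R^m)$. It is worth emphasizing that detailed balance of $K$ plays no role in deriving the \emph{formula}; what matters here is only the factorization $\pi(q,p)=\pi(q)\pi(p\mid q)$, which identifies the correct conditional law from which the momentum is resampled. (Detailed balance is what will later make $\pi(q)$ stationary and $\tilde K$ reversible, but that is a separate point.)

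Next I would condition on the resampled momentum. Writing $p=p^n$, the law of total probability against the density $\pi(p\mid q)$ gives
\[
\tilde K(q,Q)=\int_{\R^m}\Pr\big(q^{n+1}\in Q\mid q^n=q,\,p^n=p\big)\,\pi(p\mid q)\,\mathrm d p.
\]
Given $q^n=q$ and $p^n=p$, the pair $(q^{n+1},p^{n+1})$ is drawn from $K((q,p),\cdot)$, and the event $\{q^{n+1}\in Q\}$ imposes no constraint on $p^{n+1}$; therefore the inner probability is exactly the mass that $K((q,p),\cdot)$ assigns to the product set $Q\times\R^m$, namely $K((q,p),(Q,\R^m))$. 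Substituting this into the display yields the claimed identity.

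The only genuinely technical points — and none is deep — are to confirm that the right-hand side is a bona fide Markov transition kernel. Measurability of $q\mapsto\int K((q,p),(Q,\R^m))\,\pi(p\mid q)\,\mathrm d p$ follows from measurability of $K$ in its first argument together with Tonelli's theorem applied to the nonnegative integrand, and $\tilde K(q,\cdot)$ is a probability measure because $K((q,p),\cdot)$ is one and $\pi(\cdot\mid q)$ integrates to unity, so that $\tilde K(q,\R^m)=\int\pi(p\mid q)\,\mathrm d p=1$, with countable additivity passing through the integral by monotone convergence. I expect the main — though still modest — obstacle to be this measure-theoretic bookkeeping, since the computation itself is a single, transparent application of marginalization.
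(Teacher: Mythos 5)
Your argument is correct and is essentially the paper's own proof: both condition on the freshly resampled momentum via the law of total probability, identify the event $\{q^{n+1}\in Q\}$ with $\{q^{n+1}\in Q,\ p^{n+1}\in\R^m\}$, and read off $K((q,p),(Q,\R^m))$. Your added remarks on measurability and on detailed balance being irrelevant to the formula itself are sound but go beyond what the paper records.
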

A proof is given in \cref{geometric-ergodicity:app:proof-marginal-transition-kernel}
\begin{proposition}\label{geometric-ergodicity:prop:marginal-reversible}
Let $\tilde{K}$ be the marginal transition kernel described in \cref{geometric-ergodicity:lem:marginal-transition-kernel}. The marginal chain satisfies detailed balance with respect to the distribution whose density is $\pi(q)$.
\end{proposition}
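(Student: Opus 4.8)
The plan is to deduce detailed balance of the marginal kernel $\tilde{K}$ with respect to $\pi(q)$ directly from the detailed balance of the full kernel $K$ with respect to $\pi(q, p)$, which is the standing hypothesis of \cref{geometric-ergodicity:lem:marginal-transition-kernel}. Concretely, I must verify that for every $Q, Q'\in\mathfrak{B}(\R^m)$,
\begin{align}
\int_Q \tilde{K}(q, Q')~\pi(q)~\mathrm{d}q = \int_{Q'} \tilde{K}(q, Q)~\pi(q)~\mathrm{d}q.
\end{align}

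First I would substitute the formula for $\tilde{K}$ from \cref{geometric-ergodicity:lem:marginal-transition-kernel} into the left-hand side and use the factorization $\pi(q, p) = \pi(q)\pi(p\vert q)$ to absorb the $q$-marginal into the conditional, obtaining
\begin{align}
\int_Q \tilde{K}(q, Q')~\pi(q)~\mathrm{d}q = \int_Q\int_{\R^m} K((q, p), (Q', \R^m))~\pi(q, p)~\mathrm{d}p~\mathrm{d}q.
\end{align}
The rewriting is legitimate by Tonelli's theorem, since the integrand is nonnegative. The key observation is that the resulting double integral is nothing but the integral of the map $(q, p)\mapsto K((q, p), (Q', \R^m))$ against the joint measure $\Pi$ (with density $\pi(q, p)$) over the rectangular Borel set $Q\times\R^m$.

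The central step is then to apply the detailed balance hypothesis for $K$ (\cref{geometric-ergodicity:def:detailed-balance}) with the rectangular test sets $B = Q\times\R^m$ and $B' = Q'\times\R^m$, which yields
\begin{align}
\int_{Q\times\R^m} K((q, p), (Q', \R^m))~\Pi(\mathrm{d}(q, p)) = \int_{Q'\times\R^m} K((q, p), (Q, \R^m))~\Pi(\mathrm{d}(q, p)).
\end{align}
Unfolding the right-hand side back into iterated integrals and reversing the factorization $\pi(q, p) = \pi(q)\pi(p\vert q)$ recovers exactly $\int_{Q'}\tilde{K}(q, Q)~\pi(q)~\mathrm{d}q$, which closes the identity.

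I do not anticipate a genuine obstacle: the argument is bookkeeping that lifts the balance relation in the marginal space to the joint space by testing against cylinders of the form $Q\times\R^m$. The only points requiring care are (i) recognizing that the quantity $K((q, p), (Q, \R^m))$ appearing in \cref{geometric-ergodicity:lem:marginal-transition-kernel} is precisely the evaluation of $K$ on the rectangle $Q\times\R^m$, so that the joint detailed balance relation may legitimately be invoked with these cylinder sets, and (ii) the appeal to Tonelli to interchange the order of integration, which is permissible because every integrand in sight is nonnegative.
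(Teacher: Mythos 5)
Your proposal is correct and follows essentially the same route as the paper's proof: substitute the formula for $\tilde{K}$, absorb $\pi(q)\pi(p\vert q)$ into the joint density $\pi(q,p)$, invoke detailed balance of the phase-space kernel on the cylinder sets $Q\times\R^m$ and $Q'\times\R^m$, and unfold. The extra remarks on Tonelli and on identifying $K((q,p),(Q,\R^m))$ with evaluation on a rectangle are sound bookkeeping that the paper leaves implicit.
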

A proof is provided in \cref{geometric-ergodicity:app:proof-marginal-reversible}.

\subsubsection{Ergodicity Theorems}

The geometric ergodicity of MALA, MMALA, and SMALA have been examined in the Markov chain literature. In \cref{geometric-ergodicity:subsec:inherited-geometric-ergodicity} we will see how the geometric ergodicity of a single component of a mixture Markov chain transition kernel implies the geometric ergodicity of the mixture itself. With this destination in mind, we now recall two key results in this direction.
\begin{theorem}[\citet{bj/1178291835}]
Consider the transition kernel $K$ of the Metropolis-adjusted Langevin algorithm with $\mathbf{A} = \mathrm{Id}$ and $c_\epsilon(q) = x + \frac{\epsilon^2}{2}\nabla \log\pi(q)$. Then $K$ is geometrically ergodic under the following conditions:
\begin{enumerate}
    \item We have
    \begin{align}
        \liminf_{\Vert q\Vert\to\infty} \paren{\Vert q\Vert - \Vert q + \frac{\epsilon^2}{2} \nabla \log\pi(q)\Vert} > 0.
    \end{align}
    \item We have
    \begin{align}
        \lim_{\Vert q\Vert \to\infty} \int_{(A(q)\cup I(q)) \cap (A(q) \cap I(q))} \tilde{\pi}(\tilde{q}\vert q) ~\mathrm{d}\tilde{q} = 0,
    \end{align}
    where 
    \begin{align}
         A(q) &= \set{\tilde{q}\in\mathcal{X} : \pi(q) \tilde{\pi}(\tilde{q}\vert q) \leq \pi(y)\tilde{\pi}(q\vert \tilde{q})} \\
         I(q) &= \set{\tilde{q}\in\mathcal{X} : \Vert \tilde{q}\Vert \leq \Vert q\Vert}.
    \end{align}
\end{enumerate}
\end{theorem}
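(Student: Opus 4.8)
The plan is to prove geometric ergodicity by the classical drift-and-minorization route of \citet{meyn1993markov}: it suffices to exhibit a function $V:\R^m\to[1,\infty)$, constants $\lambda\in(0,1)$ and $b<\infty$, and a small set $C$ such that the geometric drift inequality $KV\le\lambda V+b\,\mathbf{1}\set{\cdot\in C}$ holds, provided $K$ is also $\Pi$-irreducible and aperiodic. The drift inequality controls the $V$-norm geometric rate, and since $V\ge 1$ this yields a bound of the form in \cref{geometric-ergodicity:def:geometric-ergodicity}.

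First I would dispose of the regularity hypotheses. The MALA proposal $\tilde\pi(\cdot\vert q)=\mathrm{Normal}(c_\epsilon(q),\epsilon^2\mathrm{Id})$ has a strictly positive, jointly continuous density on all of $\R^m$. Consequently the accepted part of $K$ assigns positive mass to every set of positive $\Pi$-measure, giving $\Pi$-irreducibility; on any compact set the proposal density is bounded below uniformly, furnishing a uniform one-step minorization that makes every compact set small; and the strictly positive rejection probability gives a positive holding mass at $q$, hence aperiodicity. This part is routine and does not use Conditions~1 and~2.

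The substance is the drift condition, for which I would take $V(q)=\exp(s\Vert q\Vert)$ with $s>0$ to be fixed. Writing $\alpha(q,\tilde q)=\min\set{1,\frac{\pi(\tilde q)\tilde\pi(q\vert\tilde q)}{\pi(q)\tilde\pi(\tilde q\vert q)}}$ for the acceptance probability and separating the rejection (holding) contribution, one obtains
\begin{align}
  \frac{KV(q)}{V(q)} = 1 + \int_{\R^m}\alpha(q,\tilde q)\paren{\frac{V(\tilde q)}{V(q)}-1}\tilde\pi(\tilde q\vert q)~\mathrm{d}\tilde q,
\end{align}
so it is enough to bound the integral by a negative constant for all large $\Vert q\Vert$. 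I would split $\R^m$ into $A(q)\cap I(q)$, the symmetric difference $A(q)\triangle I(q)$, and the remainder $A(q)^c\cap I(q)^c$. On $A(q)\cap I(q)$ one has $\alpha\equiv 1$ and $\Vert\tilde q\Vert\le\Vert q\Vert$, so the integrand is nonpositive; Condition~1 guarantees the proposal mean moves inward by a fixed margin $\eta>0$, so a proposal falls well inside this contracting region with probability bounded away from zero, producing a strictly negative contribution that is bounded away from zero uniformly for large $\Vert q\Vert$. On $A(q)\triangle I(q)$ the integrand is uniformly bounded, and Condition~2 forces the proposal mass of this set to vanish as $\Vert q\Vert\to\infty$, so its contribution is negligible.

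The hard part will be the remaining region $A(q)^c\cap I(q)^c$, consisting of outward proposals accepted with probability strictly below one, where $V(\tilde q)/V(q)=e^{s(\Vert\tilde q\Vert-\Vert q\Vert)}$ can be large and positive. The key device is that there $\alpha(q,\tilde q)\tilde\pi(\tilde q\vert q)=\pi(\tilde q)\tilde\pi(q\vert\tilde q)/\pi(q)$, so this contribution is reweighted by $\pi(\tilde q)$; choosing $s$ small enough relative to the tail decay of $\pi$ that is implicit in Condition~1 makes $e^{s\Vert\tilde q\Vert}\pi(\tilde q)$ integrable and renders the whole term asymptotically negligible. Balancing the exponential growth of $V$ against the Gaussian-and-$\pi$ suppression of large outward moves, so as to secure $\limsup_{\Vert q\Vert\to\infty}KV(q)/V(q)<1$, is the crux of the argument. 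Once this strict contraction holds off a sufficiently large ball $C$, the inequality $KV\le\lambda V+b\,\mathbf{1}\set{\cdot\in C}$ follows with $C$ small, and geometric ergodicity is concluded via \citet{meyn1993markov}.
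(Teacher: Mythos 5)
You should first note that the paper itself offers no proof of this statement: it is quoted, with citation, from \citet{bj/1178291835} (Roberts and Tweedie), so there is no in-paper argument to compare yours against. Your plan does reproduce the architecture of the original proof: deduce $\Pi$-irreducibility, aperiodicity, and smallness of compact sets from the everywhere-positive Gaussian proposal, then verify a geometric drift condition for $V(q)=\exp(s\Vert q\Vert)$ by splitting the proposal space into the sure-acceptance inward region $A(q)\cap I(q)$, the symmetric difference $A(q)\triangle I(q)$, and the outward rejection region $A(q)^c\cap I(q)^c$. (You have also silently and correctly read Condition~2 as a statement about the symmetric difference $\paren{A(q)\cup I(q)}\setminus\paren{A(q)\cap I(q)}$; the set as printed in the statement is a typo.) The regularity paragraph and the identity for $KV(q)/V(q)$ are fine.

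The gap is precisely in the step you yourself label the crux, and the one-line justification offered there would not go through as stated. After the reweighting $\alpha(q,\tilde q)\,\tilde\pi(\tilde q\vert q)=\pi(\tilde q)\tilde\pi(q\vert\tilde q)/\pi(q)$, the outward-rejection contribution is $\pi(q)^{-1}V(q)^{-1}\int_{A^c\cap I^c}\paren{V(\tilde q)-V(q)}\pi(\tilde q)\tilde\pi(q\vert\tilde q)~\mathrm{d}\tilde q$, and integrability of $e^{s\Vert\tilde q\Vert}\pi(\tilde q)$ does not make this negligible: the prefactor $1/\paren{\pi(q)e^{s\Vert q\Vert}}$ diverges, so what is actually needed is a uniform control of $e^{s(\Vert\tilde q\Vert-\Vert q\Vert)}\pi(\tilde q)/\pi(q)$ over $\Vert\tilde q\Vert>\Vert q\Vert$ with $\tilde q$ not necessarily collinear with $q$. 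Condition~1 does force exponential decay of $\pi$ along rays (it bounds the radial derivative of $\log\pi$ by roughly $-2\eta/\epsilon^2$ for large $\Vert q\Vert$), but it does not by itself compare $\pi$ at non-collinear points. The alternative crude bound $\alpha\le 1$ combined with $\Vert\tilde q\Vert-\Vert q\Vert\le\Vert\tilde q-c_\epsilon(q)\Vert-\eta$ only yields a factor of order $e^{-s\eta}\,\mathbb{E}\,e^{s\Vert Z\Vert}$ with $Z\sim\mathrm{Normal}(0,\epsilon^2\mathrm{Id})$, which by Jensen's inequality is below one only when $\eta>\mathbb{E}\Vert Z\Vert$ --- not a hypothesis of the theorem. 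Closing this requires the more delicate joint use of Conditions~1 and~2 that Roberts and Tweedie actually carry out; note also that Condition~2 is needed on the inward side as well, to guarantee that the contracting mass $\set{\Vert\tilde q\Vert\le\Vert q\Vert-\delta}$ lies in $A(q)$ up to vanishing proposal probability, a point your sketch elides. In short: right skeleton, but the step carrying the content of the theorem is not yet proved.
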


\begin{theorem}[\citet{roy2021convergence}]\label{geometric-ergodicity:thm:generalized-langevin-geometric-ergodicity}
Consider the transition kernel $K$ of the generalized Metropolis-adjusted Langevin algorithm. Then $K$ is geometrically ergodic under the following conditions:
\begin{enumerate}
    \item There exist matrices $\mathbf{A}_l \in \mathrm{PD}(m)$ and $\mathbf{A}_u \in \mathrm{PD}(m)$ such that $\mathbf{A}_l \leq \mathbf{A}(q) \leq \mathbf{A}_u$.
    \item When $A\subset\R^m$ is bounded, the function $c_\epsilon:\R^m\to\R^m$ is bounded on $A$.
    \item There is a quantity
    \begin{align}
        C = \limsup_{\Vert q\Vert\to\infty} \int_{R(q)} \paren{1 - \tilde{\pi}(\tilde{q}\vert q)\min\set{1, \frac{\pi(\tilde{q})\tilde{\pi}(q\vert \tilde{q})}{\pi(q)\tilde{\pi}(\tilde{q}\vert q)}}} ~\mathrm{d}y
    \end{align}
    which is strictly less than one, where 
    \begin{align}
        R(q) = \set{ \tilde{q} \in \R^m : \pi(q)\tilde{\pi}(\tilde{q}\vert x) > \pi(\tilde{q})\tilde{\pi}(q\vert \tilde{q})}.
    \end{align}
    \item There exists $s > 0$ such that,
    \begin{align}
        \liminf_{\Vert q\Vert\to\infty} \paren{\Vert \mathbf{A}^{-1/2}_u(q) x\Vert - \Vert \mathbf{A}^{-1/2}_u(q) c_\epsilon(q)\Vert} > \frac{\log(D(s)) - \log (1-C)}{s},
    \end{align}
    where,
    \begin{align}
        D(s) = \epsilon^{-m/2} \paren{\frac{\pi}{2}}^{m/2-1} \paren{\frac{\mathrm{det}(\mathbf{A}_u)}{\mathrm{det}(\mathbf{A}_l)}}^{1/2} \exp(\epsilon s^2 / 2) \int_0^\infty \exp\paren{ - (r - \epsilon s)^2 / 2\epsilon} r^{m-1} ~\mathrm{d}r
    \end{align}
\end{enumerate}
\end{theorem}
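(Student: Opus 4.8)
The plan is to establish geometric ergodicity through the standard Foster--Lyapunov drift approach of \citet{meyn1993markov}: I would exhibit a drift (Lyapunov) function $V:\R^m\to[1,\infty)$, unbounded off compact sets, together with constants $\lambda\in(0,1)$ and $b<\infty$ and a small set $\mathcal{C}$ satisfying the geometric drift inequality $PV(q) \le \lambda V(q) + b\,\mathbf{1}\set{q\in\mathcal{C}}$, where $P$ denotes the action of the kernel $K$ on functions. Combined with $\Pi$-irreducibility, aperiodicity, and the fact that compact sets are small, this drift condition yields geometric ergodicity. Guided by the appearance of $s$, $D(s)$, and the weighted quantities $\Vert\mathbf{A}_u^{-1/2}q\Vert$ in the hypotheses, the natural candidate is $V(q) = \exp\paren{s\,\Vert\mathbf{A}_u^{-1/2}q\Vert}$ for the exponent $s>0$ furnished by Condition 4.

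First I would dispatch the qualitative requirements. Because the proposal density $\tilde\pi(\tilde q\vert q) = \mathrm{Normal}(\tilde q; c_\epsilon(q), \epsilon^2\mathbf{A}(q))$ is strictly positive on all of $\R^m$ --- its covariance eigenvalues lie in a fixed compact subinterval of $(0,\infty)$ by Condition 1 and its mean is finite --- every set of positive $\Pi$-measure is reached in a single accepted step with positive probability, giving $\Pi$-irreducibility, while the positive holding (rejection) probability gives aperiodicity. Conditions 1 and 2 (uniform spectral bounds on $\mathbf{A}(q)$ and local boundedness of $c_\epsilon$) make the one-step density bounded below on compacts by a fixed minorizing measure, so every compact set is small. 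These facts reduce the theorem to verifying the geometric drift inequality for $V$.

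The heart of the argument is the drift computation. Writing the kernel as $K(q,\mathrm{d}\tilde q) = \alpha(q,\tilde q)\tilde\pi(\tilde q\vert q)\,\mathrm{d}\tilde q + r(q)\,\delta_q(\mathrm{d}\tilde q)$ with acceptance ratio $\alpha(q,\tilde q) = \min\set{1, \pi(\tilde q)\tilde\pi(q\vert\tilde q)/(\pi(q)\tilde\pi(\tilde q\vert q))}$ and rejection mass $r(q) = \int_{R(q)}(1-\alpha(q,\tilde q))\tilde\pi(\tilde q\vert q)\,\mathrm{d}\tilde q$, I would bound $PV$ crudely but effectively by accepting everything in the moving term and retaining only the holding term exactly:
\begin{align}
    \frac{PV(q)}{V(q)} \le \frac{1}{V(q)}\underset{\tilde q\sim\tilde\pi(\cdot\vert q)}{\mathbb{E}}\left[V(\tilde q)\right] + r(q).
\end{align}
For the moving term I write $\tilde q = c_\epsilon(q) + \xi$ with $\xi\sim\mathrm{Normal}(0,\epsilon^2\mathbf{A}(q))$ and apply the triangle inequality $\Vert\mathbf{A}_u^{-1/2}\tilde q\Vert \le \Vert\mathbf{A}_u^{-1/2}c_\epsilon(q)\Vert + \Vert\mathbf{A}_u^{-1/2}\xi\Vert$ to factor out $\exp\paren{-s\paren{\Vert\mathbf{A}_u^{-1/2}q\Vert - \Vert\mathbf{A}_u^{-1/2}c_\epsilon(q)\Vert}}$; the residual Gaussian expectation $\underset{}{\mathbb{E}}\exp(s\Vert\mathbf{A}_u^{-1/2}\xi\Vert)$ is then controlled, after passing to polar coordinates and completing the square, by the explicit constant $D(s)$, with the determinant ratio $\mathrm{det}(\mathbf{A}_u)/\mathrm{det}(\mathbf{A}_l)$ arising precisely when the state-dependent covariance $\epsilon^2\mathbf{A}(q)$ is dominated using the bounds of Condition 1. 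The holding term obeys $\limsup_{\Vert q\Vert\to\infty} r(q)\le C$ by Condition 3. Hence $\limsup_{\Vert q\Vert\to\infty} PV(q)/V(q) \le C + D(s)\exp\paren{-s\,\liminf_{\Vert q\Vert\to\infty}\paren{\Vert\mathbf{A}_u^{-1/2}q\Vert-\Vert\mathbf{A}_u^{-1/2}c_\epsilon(q)\Vert}}$, and Condition 4 makes the second summand strictly below $1-C$, so the $\limsup$ is strictly below one. This furnishes $\lambda<1$ and a radius $R_0$ with $PV\le\lambda V$ outside the ball of radius $R_0$; on that ball $V$ and $PV$ are bounded (again by Conditions 1 and 2), absorbing the remainder into $b$ and completing the drift inequality with $\mathcal{C}=\set{\Vert q\Vert\le R_0}$.

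I expect the main obstacle to be the explicit moment bound yielding $D(s)$: one must uniformly dominate the moment generating function of $\Vert\mathbf{A}_u^{-1/2}\xi\Vert$ for the \emph{state-dependent} Gaussian $\xi\sim\mathrm{Normal}(0,\epsilon^2\mathbf{A}(q))$ by a single finite constant, which is where the two-sided spectral sandwich $\mathbf{A}_l\le\mathbf{A}(q)\le\mathbf{A}_u$ is indispensable --- the lower bound controls the density through the determinant ratio and the upper bound controls the exponent. The remaining steps (irreducibility, smallness, local boundedness, and the final invocation of the Meyn--Tweedie geometric ergodicity theorem) are comparatively routine.
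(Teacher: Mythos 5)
This theorem is not proved in the paper at all: it is imported verbatim from \citet{roy2021convergence} as a known result, so there is no in-paper proof to compare against. Your sketch is the standard Roberts--Tweedie-style Foster--Lyapunov argument that underlies that cited result --- drift function $V(q)=\exp\paren{s\Vert\mathbf{A}_u^{-1/2}q\Vert}$, the bound $PV/V\le \mathbb{E}[V(\tilde q)]/V(q)+r(q)$, control of the state-dependent Gaussian moment by $D(s)$ via the spectral sandwich of Condition~1, and the rejection mass controlled by Condition~3 --- and it correctly explains how Condition~4 forces $\limsup PV/V<1$; you have also sensibly read through the evident typos in the paper's transcription of Conditions~3 and~4 (the stray $x$'s and the misplaced $\tilde\pi$ inside the integrand). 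At the level of detail given, the argument is sound and is, to the best of my knowledge, essentially the proof given in the cited source.
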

In \cref{geometric-ergodicity:subsec:inherited-geometric-ergodicity} we will examine mixture transition kernels and their geometric ergodicity. In the case that one transition kernel is geometrically ergodic and all transition kernels are reversible, the mixture transition kernel is geometrically ergodic, too, from the following result.
\begin{theorem}[\citet{10.2307/43304674}]\label{geometric-ergodicity:thm:mixture-geometric-ergodicity}
Let $(\alpha_1,\alpha_2, \ldots)$ be a sequence satisfying $\alpha_i\geq 0$ and $\sum_{i=1}^\infty \alpha_i = 1$. Let $\tilde{K} = \sum_{i=1}^\infty \alpha_i K_i$ be a mixture of reversible transition kernels with invariant distribution $\Pi$. Suppose that $K_1$ satisfies the properties: (i) $\Pi$ is the {\it unique} invariant distribution, and (ii) $K_1$ is geometrically ergodic. Then $\tilde{K}$ is geometrically ergodic.
\end{theorem}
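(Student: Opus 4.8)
The plan is to bypass the Lyapunov--drift machinery entirely and argue spectrally, exploiting the fact that every kernel in the mixture is reversible with respect to the common stationary distribution $\Pi$. The organizing principle is the characterization, due to Roberts and Rosenthal, that for a $\Pi$-reversible Markov kernel geometric ergodicity is equivalent to the existence of an $L^2(\Pi)$ spectral gap; that is, the operator norm of the kernel acting on the subspace $L^2_0(\Pi) = \set{ f \in L^2(\Pi) : \int f ~\mathrm{d}\Pi = 0}$ is strictly less than one. First I would invoke the forward direction of this equivalence: since $K_1$ is reversible, has $\Pi$ as its unique invariant distribution, and is geometrically ergodic, there is a $\gamma\in(0,1]$ with $\Vert K_1\Vert_{L^2_0(\Pi)} \leq 1-\gamma$.

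Next I would set up the operator framework for the remaining kernels. Because each $K_i$ is $\Pi$-reversible, it acts as a self-adjoint contraction on $L^2(\Pi)$; since it preserves $\Pi$ it fixes the constants and therefore maps $L^2_0(\Pi)$ into itself with $\Vert K_i\Vert_{L^2_0(\Pi)} \leq 1$. The mixture $\tilde{K} = \sum_i \alpha_i K_i$ is again $\Pi$-reversible (a convex combination of kernels reversible with respect to the same distribution is reversible), and as an operator on $L^2_0(\Pi)$ it is the norm-convergent sum of the $\alpha_i K_i$. Subadditivity and homogeneity of the operator norm then give
\begin{align}
\Vert \tilde{K}\Vert_{L^2_0(\Pi)} \leq \sum_{i=1}^\infty \alpha_i \Vert K_i\Vert_{L^2_0(\Pi)} \leq \alpha_1 (1-\gamma) + \sum_{i=2}^\infty \alpha_i = 1 - \alpha_1 \gamma.
\end{align}
Provided $\alpha_1 > 0$ --- the relevant regime, since the entire point of the construction is to apply the MMALA component with positive probability --- this bound is strictly less than one, so $\tilde{K}$ possesses an $L^2(\Pi)$ spectral gap. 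Finally I would close the argument with the reverse direction of the same equivalence: a $\Pi$-reversible kernel with an $L^2(\Pi)$ spectral gap is geometrically ergodic. Since $\tilde{K}$ is reversible and has been shown to have a gap, it is geometrically ergodic, as claimed.

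I expect the main obstacle to lie in invoking the spectral-gap equivalence cleanly, rather than in the mixture estimate itself, which is essentially a one-line operator-norm bound. The delicate points are (a) confirming that the hypotheses --- reversibility together with uniqueness of $\Pi$ as the invariant distribution of $K_1$ --- are exactly what the forward direction of the equivalence requires, and (b) reconciling the ``$\Pi$-almost every starting point'' form of geometric ergodicity that the reverse direction delivers with the everywhere form of \cref{geometric-ergodicity:def:geometric-ergodicity}; the latter is standard for the irreducible chains at hand and in any case suffices for the central limit theorem that motivates the result. It is also worth flagging that reversibility is genuinely essential: a purely pointwise drift argument would fail, because the non-geometric components $K_i$ with $i\geq 2$ need not contract any Lyapunov function, and it is only self-adjointness that allows the single gap of $K_1$ to dominate the whole mixture.
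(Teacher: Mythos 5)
This theorem is quoted from the cited reference and the paper contains no proof of it, so there is no in-paper argument to compare against; I can only assess your proposal on its own terms, and it is correct --- indeed it is the standard operator-theoretic proof of this result. The chain of reasoning (reversibility of $K_1$ plus geometric ergodicity gives $\Vert K_1\Vert_{L^2_0(\Pi)}\leq 1-\gamma$; each $\Pi$-invariant kernel is a contraction on $L^2_0(\Pi)$; the triangle inequality gives $\Vert\tilde{K}\Vert_{L^2_0(\Pi)}\leq 1-\alpha_1\gamma$; reversibility of $\tilde{K}$ converts the gap back into geometric ergodicity) is sound, and you correctly identify that the bound is only useful when $\alpha_1>0$ --- a hypothesis the theorem as printed omits but which the paper's application supplies explicitly. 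Two small points are worth tightening if you were to write this out in full. First, both directions of the spectral-gap equivalence for reversible chains require $\Pi$-irreducibility (and aperiodicity); for $\tilde{K}$ this is immediate from $\tilde{K}(q,A)\geq\alpha_1 K_1(q,A)$ once $\alpha_1>0$, but it should be said, since the $K_i$ for $i\geq 2$ are not assumed irreducible. Second, you are right that the reverse direction delivers geometric ergodicity only for $\Pi$-almost every starting point, whereas \cref{geometric-ergodicity:def:geometric-ergodicity} demands a bound at every $q$; reconciling the two requires an extra regularity argument (or a weakening of the definition), and simply asserting that it is ``standard'' is the one place where your write-up leans on an unproved claim. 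Your closing remark that reversibility is what makes the single gap of $K_1$ control the whole mixture --- because the non-geometric components need not admit any common drift function --- is exactly the right way to see why the hypothesis cannot be dropped.
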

We term the derived geometric ergodicity of $\tilde{K}$ from $K_1$ as ``inherited'' geometric ergodicity.

\section{Inherited Geometric Ergodicity}\label{geometric-ergodicity:subsec:inherited-geometric-ergodicity}

In this section we discuss the notion of ``inherited'' geometric ergodicity, wherein we modify RMHMC and LMC to be geometrically ergodic when MMALA is. 
In the Euclidean case, by \cref{geometric-ergodicity:prop:mala-hmc-equivalence}, in cases where the Metropolis-adjusted Langevin algorithm is geometrically ergodic, so is HMC by invoking \cref{geometric-ergodicity:thm:mixture-geometric-ergodicity}. The situation in the geometric setting is more complicated, since there is no analogue of \cref{geometric-ergodicity:prop:mala-hmc-equivalence} to show that the marginal transition kernel of a single-step of RMHMC or LMC is exactly equivalent to the Markov chain transition kernel of the generalized Metropolis-adjusted Langevin algorithm for a suitable choice of mean function. This leads us to propose the following Markov chain transition kernel.
\begin{definition}
Let $K_{\epsilon, k} : (\R^m\times\R^m)\times \mathfrak{B}(\R^m\times\R^m) \to\R_+$ be the Markov chain transition kernel of RMHMC (or LMC) with step-size $\epsilon$ and $k$ integration steps. Let $\tilde{K}_{\epsilon, k}$ be the marginal transition kernel (defined in \cref{geometric-ergodicity:lem:marginal-transition-kernel}). Let $(\alpha_1,\alpha_2, \ldots)$ be a sequence satisfying $\alpha_i\geq 0$ and $\sum_{i=1}^\infty \alpha_i = 1$. Let $J_\epsilon$ be the Markov chain transition kernel of the generalized Metropolis-adjusted Langevin algorithm (defined in \cref{geometric-ergodicity:def:generalized-langevin}). The {\it Langevin  mixture transition kernel of RMHMC (or LMC)}, which we abbreviate by LMRMHMC (or LMLMC), is defined by
\begin{align}
    \label{geometric-ergodicity:eq:modified-marginal-transition} \tilde{K}_{\epsilon} = \alpha_1 J_\epsilon + \sum_{k=2}^\infty  \alpha_k \tilde{K}_{\epsilon, k}.
\end{align}
We call $\alpha_1$ the MMALA mixture weight.
\end{definition}
The modified transition kernel simply replaces a single-step of RMHMC (or LMC) by the transition kernel of the generalized Metropolis-adjusted Langevin algorithm. In order to apply \cref{geometric-ergodicity:thm:mixture-geometric-ergodicity}, it is necessary to verify that the {\it marginal} transition kernels of RMHMC and LMC are reversible. Fortunately, this is readily shown as follows.
\begin{corollary}\label{geometric-ergodicity:cor:rmhmc-lmc-marginal-reversible}
For fixed $\epsilon\in \R$ and $k\in\mathbb{N}$, the RMHMC and LMC transition kernels with step-size $\epsilon$ and $k$ integration steps satisfy detailed balance in $(q, p)$-space, it follows that their marginal chains are reversible with respect to the distribution $\Pi$ by \cref{geometric-ergodicity:prop:marginal-reversible}.
\end{corollary}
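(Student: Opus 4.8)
The plan is to establish the claim in two stages: first that the RMHMC and LMC transition kernels are reversible with respect to the Riemannian density $\pi(q,p)\propto\exp(-H(q,p))$ on the full phase space $\R^m\times\R^m$, and then to transfer this reversibility to the marginal chain via \cref{geometric-ergodicity:prop:marginal-reversible}. The substantive content lies in the first stage; the second is a direct invocation.

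For the first stage, I would observe that by \cref{def:rmhmc-transition-kernel} (and its Lagrangian analogue) both kernels are instances of involutive Monte Carlo (\cref{def:involutive-monte-carlo}) with target density $\pi(q,p)$ and involution $\Phi=\mathbf{F}\circ\hat{\Phi}_\epsilon^{\,k}$ (respectively $\Phi=\mathbf{F}\circ\tilde{\Phi}_\epsilon^{\,k}$), where $\mathbf{F}(q,p)=(q,-p)$. Since the transition kernel of involutive Monte Carlo satisfies detailed balance with respect to its target density whenever $\Phi$ is a smooth involution, it suffices to verify that these composite maps are genuine involutions. By the composition principle recorded in \cref{prop:involution-composition}, this reduces to two ingredients: (i) the single-step integrators $\hat{\Phi}_\epsilon$ and $\tilde{\Phi}_\epsilon$ are invertible, and (ii) the momentum-flipped single-step maps are involutions, i.e. $\mathbf{F}\circ\hat{\Phi}_\epsilon\circ\mathbf{F}=\hat{\Phi}_\epsilon^{-1}$ and $\mathbf{F}\circ\tilde{\Phi}_\epsilon\circ\mathbf{F}=\tilde{\Phi}_\epsilon^{-1}$, from which $(\mathbf{F}\circ\hat{\Phi}_\epsilon)^2=\mathrm{Id}$ and likewise for the Lagrangian map.

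To verify (ii) for the generalized leapfrog I would exploit the palindromic structure of \cref{geometric-ergodicity:def:riemannian-leapfrog}: a half momentum update, a symmetrized full position update, and a second half momentum update. Negating $p$ at the start, running the map, and negating $p$ again returns the initial state, because the half-step layout is symmetric under time reversal; this is a short direct computation on \cref{geometric-ergodicity:eq:intermediate-momentum,geometric-ergodicity:eq:position}. For the Lagrangian leapfrog the same time-reversal symmetry should hold, but the verification is more delicate: one must track the velocity--momentum conversions $v=\mathbf{G}^{-1}(q)p$ and $\tilde{p}=\mathbf{G}^{-1}(\tilde{q})\tilde{v}$ together with the implicitly defined intermediate velocity $\breve{v}$ of \cref{geometric-ergodicity:eq:lagrange-intermediate-velocity}, and confirm that negating the momentum interchanges the roles of the two half-steps in \cref{geometric-ergodicity:def:lagrangian-leapfrog}. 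I expect this base-case involution for the Lagrangian integrator to be the main obstacle, since its update equations are structurally less symmetric than the generalized leapfrog and carry the Christoffel terms $\Omega$; invertibility in (i) follows once these reversal identities are in hand.

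For the second stage I would note that, by construction, the Riemannian density factorizes as $\pi(q,p)=\pi(q)\,\pi(p\vert q)$ with $q$-marginal equal to the target density $\pi(q)$ of $\Pi$ and conditional $\pi(p\vert q)=\mathrm{Normal}(0,\mathbf{G}(q))$, as recorded in the remark following \cref{geometric-ergodicity:def:riemannian-hamiltonian}. Having shown that $K_{\epsilon,k}$ satisfies detailed balance with respect to $\pi(q,p)$, the hypotheses of \cref{geometric-ergodicity:prop:marginal-reversible} are met, and that proposition (together with \cref{geometric-ergodicity:lem:marginal-transition-kernel}) immediately yields that the marginal kernel $\tilde{K}_{\epsilon,k}$ satisfies detailed balance with respect to $\Pi$, as claimed. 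This final step requires no further computation.
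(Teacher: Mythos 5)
Your proposal follows essentially the same route as the paper: phase-space detailed balance is obtained from the involutive Monte Carlo construction (\cref{def:involutive-monte-carlo}) together with the involution-composition argument of \cref{prop:involution-composition}, and marginal reversibility then follows by invoking \cref{geometric-ergodicity:prop:marginal-reversible}. The base-case time-reversal symmetry of the single-step integrators that you flag as the main obstacle is likewise not verified in the paper, which treats it as a known property of these integrators, so your treatment is consistent with (and slightly more explicit than) the paper's.
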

\begin{proposition}
Suppose $\alpha_1 > 0$. Under the conditions of \cref{geometric-ergodicity:thm:generalized-langevin-geometric-ergodicity}, the Markov chain transition kernel of LMRMHMC (or LMLMC) in \cref{geometric-ergodicity:eq:modified-marginal-transition} is geometrically ergodic.
\end{proposition}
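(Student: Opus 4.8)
The plan is to apply the mixture geometric ergodicity result \cref{geometric-ergodicity:thm:mixture-geometric-ergodicity} with the MMALA component $J_\epsilon$ playing the role of the distinguished kernel $K_1$. Reindexing the mixture in \cref{geometric-ergodicity:eq:modified-marginal-transition} so that $K_1 = J_\epsilon$ and $K_k = \tilde{K}_{\epsilon, k}$ for $k \geq 2$, it suffices to verify the three hypotheses of that theorem: that every component is reversible with respect to a common invariant distribution $\Pi$, that $J_\epsilon$ is geometrically ergodic, and that $\Pi$ is the unique invariant distribution of $J_\epsilon$. Since $\alpha_1 > 0$ by assumption, the distinguished kernel genuinely appears in the mixture with positive weight.

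First I would fix the common invariant distribution. The kernel $J_\epsilon$ is an instance of the Metropolis-Hastings algorithm targeting the density $\pi(q)$, hence satisfies detailed balance with respect to $\Pi$ by the standard Metropolis-Hastings construction. Each marginal kernel $\tilde{K}_{\epsilon, k}$ is reversible with respect to the distribution with density $\pi(q)$ by \cref{geometric-ergodicity:cor:rmhmc-lmc-marginal-reversible}, which in turn rests on the detailed balance of RMHMC (or LMC) in $(q,p)$-space together with \cref{geometric-ergodicity:prop:marginal-reversible}. The compatibility of these targets is exactly the observation recorded after \cref{geometric-ergodicity:def:riemannian-hamiltonian}: the $q$-marginal of the Riemannian density $\pi(q,p) \propto \exp(-H(q,p))$ is $\pi(q)$. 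Thus all mixture components are reversible with respect to the single distribution $\Pi$ with density $\pi(q)$, verifying the reversibility hypothesis.

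Next I would invoke \cref{geometric-ergodicity:thm:generalized-langevin-geometric-ergodicity}: under the conditions assumed in the statement, the generalized Metropolis-adjusted Langevin kernel $J_\epsilon$ is geometrically ergodic, which is hypothesis (ii). Geometric ergodicity in turn forces $\Pi$ to be the unique invariant distribution: if $\Pi'$ were also invariant, then $\Pi' = \int J_\epsilon^n(q, \cdot)\, \Pi'(\mathrm{d}q)$ for every $n$, and bounding $\Vert \Pi' - \Pi\Vert_{\mathrm{TV}} \leq \int \Vert J_\epsilon^n(q, \cdot) - \Pi\Vert_{\mathrm{TV}}\, \Pi'(\mathrm{d}q)$ and letting $n \to \infty$ yields $\Pi' = \Pi$ by \cref{geometric-ergodicity:def:geometric-ergodicity}. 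This supplies hypothesis (i), and with all three hypotheses of \cref{geometric-ergodicity:thm:mixture-geometric-ergodicity} in hand, the mixture $\tilde{K}_\epsilon$ is geometrically ergodic.

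The argument is essentially an assembly of the preceding results rather than a fresh computation, so I do not anticipate a hard analytic obstacle. The one point demanding care — and the natural place for the proof to go wrong — is ensuring that the $q$-space invariant distribution of the MMALA component coincides exactly with that of the RMHMC/LMC marginal kernels; the former is constructed to target $\pi(q)$ directly, whereas the latter inherit their invariant law by marginalizing the Riemannian density over the momentum. Reconciling these two routes to the same $\Pi$ is precisely what licenses treating the whole expression as a mixture of kernels with a common invariant distribution, which is the structure required by \cref{geometric-ergodicity:thm:mixture-geometric-ergodicity}.
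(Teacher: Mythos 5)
Your proposal is correct and follows essentially the same route as the paper: the paper's proof is a one-line appeal to \cref{geometric-ergodicity:thm:mixture-geometric-ergodicity} combined with the reversibility of the marginal kernels from \cref{geometric-ergodicity:cor:rmhmc-lmc-marginal-reversible}. You simply spell out the hypothesis-checking (common invariant distribution, geometric ergodicity of $J_\epsilon$, and uniqueness of the invariant law via the total-variation limit) that the paper leaves implicit as ``immediate.''
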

\begin{proof}
This follows as an immediate consequence of \cref{geometric-ergodicity:thm:mixture-geometric-ergodicity} using the fact that the marginal transition kernels or RMHMC (or LMC) are reversible by \cref{geometric-ergodicity:cor:rmhmc-lmc-marginal-reversible}.
\end{proof}
\begin{remark}
As a practical matter, we choose the mixture probabilities $(\alpha_1,\alpha_2,\ldots)$ in the following way. For a particular target distribution, we will choose a maximal number of integration steps $k_\mathrm{max}$ for which $\alpha_k = 0$ for $k > k_\mathrm{max}$. Then, given a particular selection of $\alpha_1 \in (0, 1]$, we split the remaining probability mass equally for each $k\in \set{2,\ldots, k_{\mathrm{max}}}$; that is, the fraction can be expressed as $\alpha_k = \frac{1 - \alpha_1}{k_{\mathrm{max}} - 1}$ for $k=2,\ldots, k_\mathrm{max}$. In our experiments, we will consider variable choices for the mixing parameter $\alpha_1$. The special case $\alpha_1 = 0$, will correspond by convention to the unmodified RMHMC and LMC transition kernels with a single-step computed using the prescribed involution, as described in \cref{geometric-ergodicity:def:unmodified-kernel}.
\end{remark}
Mixing with the MMALA transition kernel also immediately establishes irreducibility, aperiodicity, and the smallness of all compact sets, as the following result reveals.
\begin{lemma}\label{geometric-ergodicity:lem:irreducible-aperiodic-small}
Let $\pi(q) \propto \exp(\mathcal{L}(q))$ be continuous and bounded on compact sets and denote by $\Pi$ the probability measure with density $\pi$. Suppose $\alpha_1 > 0$. The marginal transition kernel of LMRMHMC (or LMLMC) is $\Pi$-irreducible, aperiodic, and all non-negligible compact sets are small.
\end{lemma}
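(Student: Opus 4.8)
The plan is to exploit the fact that the mixture kernel places positive weight $\alpha_1 > 0$ on the MMALA transition kernel $J_\epsilon$, and to transfer the desired regularity properties from $J_\epsilon$ to the mixture $\tilde{K}_\epsilon$ by a minorization argument. The key elementary observation is that for any Borel set $Q$ and any $q$, we have $\tilde{K}_\epsilon(q, Q) \geq \alpha_1 J_\epsilon(q, Q)$, since all the other terms in the mixture are nonnegative. This pointwise lower bound is the engine that drives all three conclusions.

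First I would establish $\Pi$-irreducibility. For the MMALA kernel, the proposal density $\tilde{\pi}(\tilde q \mid q) = \mathrm{Normal}(\tilde q; c_\epsilon(q), \epsilon^2 \mathbf{A}(q))$ is strictly positive everywhere (a Gaussian with positive-definite covariance has full support on $\R^m$), and the acceptance probability $\min\set{1, \pi(\tilde q)\tilde\pi(q\mid \tilde q)/(\pi(q)\tilde\pi(\tilde q \mid q))}$ is strictly positive whenever $\pi(\tilde q) > 0$, which holds since $\pi \propto \exp(\mathcal{L})$ is everywhere positive. Hence for any $A$ with $\Pi(A) > 0$ we have $J_\epsilon(q, A) > 0$ for every $q$, and therefore $\tilde{K}_\epsilon(q, A) \geq \alpha_1 J_\epsilon(q, A) > 0$. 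This gives $\Pi$-irreducibility in a single step ($n = 1$).

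Next, for smallness of non-negligible compact sets, let $C$ be a compact set with $\Pi(C) > 0$. On $C$, using continuity and boundedness of $\pi$ on compacts together with the lower bound $\tilde{K}_\epsilon(q, \cdot) \geq \alpha_1 J_\epsilon(q, \cdot)$, I would produce a uniform minorization. The standard approach is to write, for $q \in C$ and $A \in \mathfrak{B}(\R^m)$,
\begin{align}
    \tilde{K}_\epsilon(q, A) \geq \alpha_1 \int_A \min\set{1, \frac{\pi(\tilde q)\tilde\pi(q\mid \tilde q)}{\pi(q)\tilde\pi(\tilde q\mid q)}} \tilde\pi(\tilde q\mid q)~\mathrm{d}\tilde q.
\end{align}
On the compact set $C$, the positive-definiteness bounds $\mathbf{A}_l \leq \mathbf{A}(q) \leq \mathbf{A}_u$ and boundedness of $c_\epsilon$ (conditions of \cref{geometric-ergodicity:thm:generalized-langevin-geometric-ergodicity}) force the integrand, restricted to any fixed bounded region, to be bounded below by a strictly positive constant $\delta$; integrating against the restriction of Lebesgue measure to a bounded set and normalizing yields a probability measure $\nu$ with $\tilde{K}_\epsilon(q, A) \geq \delta \cdot \nu(A)$ for all $q \in C$. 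Thus $C$ is $(1, \delta, \nu)$-small.

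Finally, aperiodicity follows immediately: since the minorization above holds at $n = 1$, the set of integers $n'$ for which $C$ is small contains $1$, so the greatest common divisor of that set is one. I expect the smallness step to be the main obstacle, as it requires the uniform lower bound on the acceptance-weighted proposal density over the compact set $C$; this is exactly where the boundedness of $c_\epsilon$ on bounded sets and the two-sided spectral bounds on $\mathbf{A}(q)$ are used, and care must be taken that the minorizing measure $\nu$ is genuinely $q$-independent.
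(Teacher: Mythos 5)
Your argument is correct and its first half — the pointwise bound $\tilde{K}_\epsilon(q,A) \geq \alpha_1 J_\epsilon(q,A) > 0$ from positivity of the Gaussian proposal and of $\pi \propto \exp(\mathcal{L})$ — is exactly the paper's argument for $\Pi$-irreducibility. Where you diverge is in the second half: the paper disposes of aperiodicity and smallness of compact sets in one line by citing Lemma 1.2 of Mengersen and Tweedie, whereas you prove the one-step minorization $\tilde{K}_\epsilon(q,A) \geq \delta\,\nu(A)$ on a compact set $C$ directly, by bounding the acceptance-weighted proposal density below uniformly over $q \in C$ and $\tilde q$ in a fixed bounded region, and then reading off aperiodicity from the fact that the minorization holds at $n=1$. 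Your route is more self-contained and makes explicit what the citation hides: the minorization genuinely uses the two-sided spectral bounds $\mathbf{A}_l \leq \mathbf{A}(q) \leq \mathbf{A}_u$ and the boundedness of $c_\epsilon$ on bounded sets, i.e., conditions imported from \cref{geometric-ergodicity:thm:generalized-langevin-geometric-ergodicity} that are not listed among the hypotheses of the lemma itself (the cited Lemma 1.2 likewise requires a local lower bound on the proposal density, so the paper's version implicitly needs the same regularity). The only point to tighten is that you also need $\pi$ bounded \emph{below} on the fixed bounded region $B$ carrying $\nu$ — this follows from continuity and strict positivity of $\exp(\mathcal{L})$, but it is worth stating, since the lemma's hypothesis only asserts boundedness above on compacts.
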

\begin{proof}
With probability $\alpha_1$, the Markov chain transitions according to a Metropolis-Hastings accept-reject decision with a normal proposal distribution. Hence, for any set $A\subset \mathfrak{B}(\R^m)$ for which $\int_A \pi(q)~\mathrm{d}q > 0$, we have,
\begin{align}
    \tilde{K}_\epsilon(q, A) &\geq \alpha_1 J_\epsilon(q, A) \\
    &> 0
\end{align}
since a normal proposal distribution is non-vanishing everywhere on $\R^m$. The fact that $\tilde{K}_\epsilon$ is aperiodic and that all non-negligible compact sets are small follows as an immediate consequence of Lemma 1.2 from \citet{10.1214/aos/1033066201}.
\end{proof}
\begin{corollary}
From \cref{geometric-ergodicity:lem:irreducible-aperiodic-small}, we have that the modified Markov chain of RMHMC (or LMC) is $\Pi$-irreducible, aperiodic, and from \cref{geometric-ergodicity:prop:marginal-reversible} $\Pi$ is the stationary distribution. Hence, it follows from \cref{geometric-ergodicity:thm:irreducible-aperiodic-stationary} that the modified Markov chain of RMHMC (or LMC) produces an ergodic Markov chain.
\end{corollary}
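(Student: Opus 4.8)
The plan is to verify the three hypotheses of \cref{geometric-ergodicity:thm:irreducible-aperiodic-stationary} --- namely $\Pi$-irreducibility, aperiodicity, and $\Pi$-stationarity --- for the marginal mixture kernel $\tilde{K}_\epsilon$ defined in \cref{geometric-ergodicity:eq:modified-marginal-transition}, and then to invoke that theorem directly. Since the corollary is stated for the regime $\alpha_1 > 0$ (the hypothesis under which \cref{geometric-ergodicity:lem:irreducible-aperiodic-small} applies), all three ingredients are already available from earlier results and the proof amounts to assembling them.

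First I would dispatch irreducibility and aperiodicity by appeal to \cref{geometric-ergodicity:lem:irreducible-aperiodic-small}: because $\alpha_1 > 0$, the MMALA component $J_\epsilon$, whose Gaussian proposal is strictly positive everywhere on $\R^m$, contributes a positive fraction of the mixture and forces $\tilde{K}_\epsilon$ to be $\Pi$-irreducible and aperiodic. Concretely, for any Borel set $A$ with $\Pi(A) > 0$ one has $\tilde{K}_\epsilon(q, A) \geq \alpha_1 J_\epsilon(q, A) > 0$, and aperiodicity together with the smallness of non-negligible compact sets is already recorded in the lemma.

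Next I would establish stationarity. Rather than checking \cref{geometric-ergodicity:def:stationarity} directly, I would verify the stronger property that $\tilde{K}_\epsilon$ is reversible with respect to $\Pi$, from which stationarity follows by the remark immediately after \cref{geometric-ergodicity:def:detailed-balance}. The key observation is that detailed balance is linear in the kernel, so it suffices to check it componentwise: the MMALA kernel $J_\epsilon$ satisfies detailed balance as a Metropolis--Hastings kernel targeting $\pi(q)$, and each marginal RMHMC (or LMC) kernel $\tilde{K}_{\epsilon, k}$ satisfies detailed balance with respect to $\Pi$ by \cref{geometric-ergodicity:cor:rmhmc-lmc-marginal-reversible} (equivalently \cref{geometric-ergodicity:prop:marginal-reversible}). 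Forming the $\alpha_k$-weighted sum preserves detailed balance, so the mixture $\tilde{K}_\epsilon$ is reversible and hence $\Pi$ is stationary.

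With all three hypotheses in hand, ergodicity is immediate from \cref{geometric-ergodicity:thm:irreducible-aperiodic-stationary}. I do not anticipate any genuine obstacle, as the corollary is purely a matter of collecting previously established facts; the only point deserving an explicit line of justification is that reversibility of the mixture is inherited from reversibility of its components via the linearity of detailed balance, combined with the standard fact that the Metropolis--Hastings kernel $J_\epsilon$ is itself reversible for its target.
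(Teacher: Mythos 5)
Your proposal is correct and follows essentially the same route as the paper: irreducibility and aperiodicity from \cref{geometric-ergodicity:lem:irreducible-aperiodic-small}, stationarity via reversibility of the marginal kernels (\cref{geometric-ergodicity:prop:marginal-reversible} and \cref{geometric-ergodicity:cor:rmhmc-lmc-marginal-reversible}), and then a direct appeal to \cref{geometric-ergodicity:thm:irreducible-aperiodic-stationary}. Your only addition is to make explicit that detailed balance is preserved under the $\alpha_k$-weighted mixture, a point the paper leaves implicit but which is a welcome clarification rather than a departure.
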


\section{Experimentation}\label{geometric-ergodicity:sec:experiments}

We turn now to the investigation of the proposed modified variations of RMHMC and LMC. These examples are chosen to represent a wide class of posterior distributions. In computing the convergence of the Markov chain under the maximum mean discrepancy metric, we measure similarity between 10,000 i.i.d. samples of the target distribution, and an equal number of independent Markov chains; in this case, we compute the unbiased estimator of the maximum mean discrepancy over the course of the first one-hundred sampling steps. For the expected squared jump distance (ESJD) and the effective sample size (ESS) metrics, we consider a ten replicates of a long Markov chain consisting of 1,000,000 samples, except in the case of the Fitzhugh-Nagumo posterior where we sample only 100,000 times. In each experiment, we consider mixing LMC and RMHMC with the transition kernel described in \cref{geometric-ergodicity:def:riemannian-manifold-langevin}, except in the case of Neal's funnel distribution, wherein we consider a mixture with \cref{geometric-ergodicity:def:simplified-manifold-langevin}. We begin in \cref{geometric-ergodicity:subsec:measures-and-metrics} to describe measures and metrics by which we may assess the performance of the Markov chains corresponding to LMRHMC and LMLMC. 

Code for reproducing these experimental results may be found at \url{https://tinyurl.com/29kz7krz}.

\subsection{Measures and Metrics}\label{geometric-ergodicity:subsec:measures-and-metrics}

We now give details of the evaluation metrics by which we compare Markov chains. We first recall the method of maximum mean discrepancy due to \citet{gretton-kernel}.
\begin{definition}\label{geometric-ergodicity:def:mmd}
Let $k:\R^m\times\R^m\to\R$ be a positive definite function that is symmetric in its arguments. Let $\Pi$ and $\Pi'$ be two probability distributions on $\R^m$. The squared maximum mean discrepancy between $\Pi$ and $\Pi'$ is defined by,
\begin{align}
    \label{geometric-ergodicity:eq:mmd} \mathrm{MMD}^2(k, \Pi, \Pi') = \underset{q, q'\sim \Pi}{\mathbb{E}} k(q, q') + \underset{q, q'\sim \Pi'}{\mathbb{E}} k(q, q') - 2\underset{q\sim \Pi, q'\sim \Pi'}{\mathbb{E}} k(q, q').
\end{align}
Let $(q^1_*,\ldots, q^r_*)\sim\Pi$ and $(q^1,\ldots, q^s)\sim \Pi'$. An unbiased estimator of the squared maximum mean discrepancy is,
\begin{align}
\begin{split}
    \label{geometric-ergodicity:eq:mmd-unbiased} \mathrm{MMD}^2_\mathrm{u}(k, \set{q^i_*}_{i=1}^r, \set{q^i}_{i=1}^s) &= \frac{1}{r(r-1)} \sum_{i=1}^r \sum_{j\neq i}^r k(q^i_*, q^j_*) + \frac{1}{s(s-1)} \sum_{i=1}^s \sum_{j\neq i}^s k(q^i, q^j) \\
    &\qquad -~ \frac{2}{rs} \sum_{i=1}^r\sum_{j=1}^s k(q^i_*, q^j).
\end{split}
\end{align}
\end{definition}
In our evaluations we adopt a squared exponential positive definite kernel $k(q, q') = \exp(- \Vert q - q'\Vert^2 / 2h)$, where $h\in\R_+$ is a  parameter called the {\it kernel bandwidth}. In our experiments we set $h$ to be the median distance between i.i.d. samples from the target distribution.
Recall the definition of geometric ergodicity given in \cref{geometric-ergodicity:def:geometric-ergodicity}. The measure of convergence is the total variation norm, which we can discuss theoretically but cannot evaluate in a computational setting. Instead, we can measure convergence to the target distribution $\Pi$ as a function of $n$, the number of steps, by means of \cref{geometric-ergodicity:eq:mmd}, for which we can obtain an unbiased estimate via \cref{geometric-ergodicity:eq:mmd-unbiased} if we have samples from $K^n(q^0,\cdot)$ and samples from $\Pi$. Indeed, as in \cref{geometric-ergodicity:def:mmd}, let $(q^1_*,\ldots,q^r_*)\sim\Pi$ and, for a fixed initial position $q^0$ and number of steps $n$ let $(q^1,\ldots, q^s)\sim K^n(q^0, \cdot)$ and we can compute an unbiased estimate of the squared maximum mean discrepancy. In the latter case, $(q^1,\ldots, q^s)$ can be obtained by running $s$ independent Markov chains for $n$ steps; independent samples from the target distribution $\Pi$ may be available in certain benchmark cases. In our experiments, i.i.d. samples from the target distribution may be generated from the banana-shaped posterior, Neal's funnel distribution, the Fitzhugh-Nagumo posterior, and the multi-scale Student-$t$ distribution.

\begin{remark}
Recall the definition of geometric ergodicity given in \cref{geometric-ergodicity:def:geometric-ergodicity}. Taking logarithms reveals
\begin{align}
    \log \Vert K^n(q, \cdot) - \Pi(\cdot)\Vert_{\mathrm{TV}} \leq n \log \rho + \log V(q).
\end{align}
Therefore, one may claim to see {\it evidence of geometric ergodicity} if, as a function of $n$, there is a linear decrease in the total variation distance on a logarithmic scale. Of course, we cannot directly compute the total variation distance, but we may look for a similar negative linear trend when $\abs{\mathrm{MMD}_u^2}$ is plotted on a logarithmic scale.
\end{remark}

As an additional measure of ergodicity, we consider comparing Markov chain samples against i.i.d. samples via random projection onto one-dimensional sub-spaces. Let $(q^1_*,\ldots,q^r_*)\sim\Pi$ and let $q^n\sim K^n(q^0, \cdot)$ for $n=1,\ldots, s$. Let $u$ be a random unit vector. We compute the Kolmogorov-Smirnov (KS) statistic for the projections $(u^\top q^1_*, \ldots, u^\top q^r_*)$ and $(u^\top q^1, \ldots, u^\top q^s)$. Repeating this process for one-hundred randomly generated unit vectors yields a distribution over Kolmogorov-Smirnov statistics. The more tightly concentrated this distribution is near zero, the closer the distribution of Markov chain iterates $(q^1,\ldots, q^s)$ is to the collection of i.i.d. samples $(q^1_*,\ldots, q^r_*)$ from the target distribution.

We also consider the expected squared jump distance (ESJD) \citep{esjd}, which measures the dissimilarity between subsequent states of the Markov chain. Intuitively, a Markov chain that moves more efficiently through the sample space (higher ESJD) will exhibit smaller sample auto-correlation. Formally, the ESJD is $\underset{q\sim \Pi}{\mathbb{E}} \Vert q' - q\Vert^2$ where $q'\vert q \sim K(q, \cdot)$. Note that the choice of norm $\Vert\cdot\Vert$ is left to the practitioner and may be selected to capture geometric properties of the target distribution. This expectation is typically approximated as follows: let $q^k$ be the  state of the Markov chain at step $k$ and at step $k+1$ a candidate state is generated, denoted $q^{k+1}_\mathrm{prop}$; the proposal state is accepted with probability $\alpha_k$; the following empirical mean is then taken as our approximation to the ESJD
\begin{align}
    \mathrm{ESJD}(\set{(q^k, q^{k+1}_{\mathrm{prop}}, \alpha_k)}_{k=1}^n) = \frac{1}{n} \sum_{k=1}^n \alpha_k \Vert q^{k+1}_{\mathrm{prop}} - q^k \Vert^2.
\end{align}
In our experiments in \cref{subsec:student-t}, we observe that the ESJD can be misleading as a measure. Therefore, we also introduce the {\it median} squared jump distance (MSJD) which we define as
\begin{align}
     \mathrm{MSJD}(\set{(q^k, q^{k+1}_{\mathrm{prop}}, \alpha_k)}_{k=1}^n) = \mathrm{Median}\paren{\set{\alpha_k \Vert q^{k+1}_{\mathrm{prop}} - q^k \Vert^2}_{k=1}^n}.
\end{align}
As the median, we expect the MSJD to exhibit less sensitivity to outliers than the ESJD.

We additionally consider the effective sample size (ESS) as a metric for our Markov chain procedures. We compute ESS using the technique of \citet{arviz_2019}, who describe the method succinctly as follows. For $i=1,\ldots, p$, let $(q^{i,1}, q^{i,2},\ldots,q^{i,n})$ be a sequence of $\R^m$-valued parameters. The integer $p$ is called the number of {\it chains}. The effective sample size of the $j$-th parameter is computed according to,
\begin{align}
    \mathrm{ESS}\paren{\set{q^{i, 1}_j, \ldots, q^{i, n}_j}_{i=1}^{p}} &= \frac{pn}{\hat{\tau}} \\
    \hat{\tau} &= \paren{2\sum_{k=1}^r \hat{\rho}_{2k} + \hat{\rho}_{2k+1}} - 1,
\end{align}
where $\hat{\rho}_k$ is an estimate, based on all of the $p$ sequences, of the autocorrelation of the $j$-th parameter with a $k$-step lag (for details see \citet{vehtari2021}) and $r$ is the smallest integer for which $\hat{\rho}_{2(r+1)} + \hat{\rho}_{2(r+1)+1} \leq 0$. In our experiments, we set $p=2$ by taking a single long Markov chain and splitting it in half at the middle. As a practical matter, we will also report the minimum, over all parameters of the posterior, ESS {\it per second} in order to represent the computational efficiency of the method.

\subsection{Banana-Shaped Distribution}

\begin{figure}[t!]
  \begin{subfigure}[t]{0.3\textwidth}
    \centering
    \includegraphics[width=\textwidth]{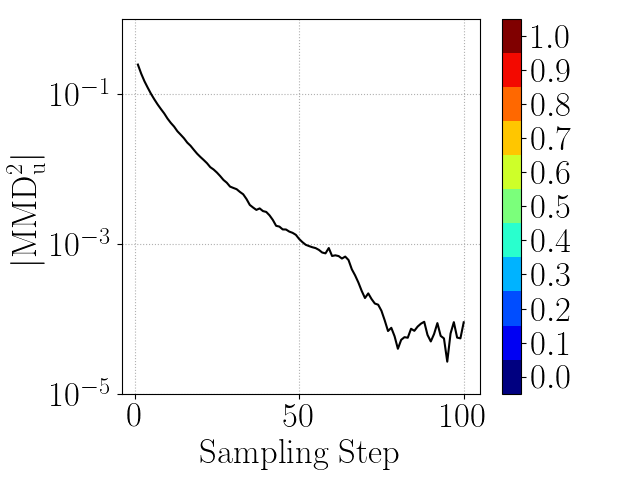}
    \caption{EHMC}
  \end{subfigure}
  ~
  \begin{subfigure}[t]{0.3\textwidth}
    \centering
    \includegraphics[width=\textwidth]{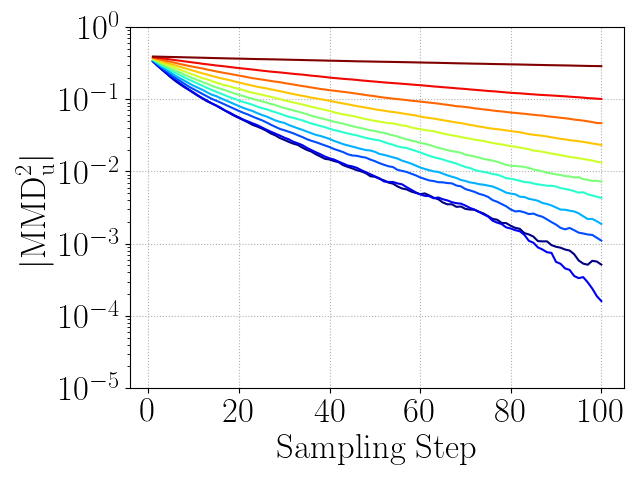}
    \caption{LMRMHMC}
  \end{subfigure}
  ~
  \begin{subfigure}[t]{0.3\textwidth}
    \centering
    \includegraphics[width=\textwidth]{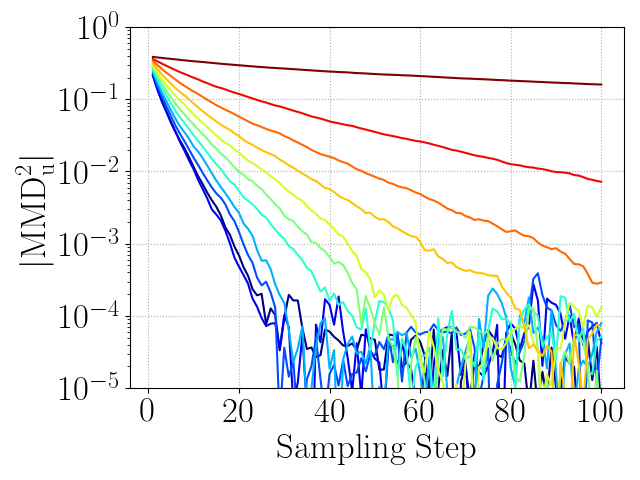}
    \caption{LMLMC}
  \end{subfigure}
  \caption{We examine the ergodicity of the Markov chains as a function of the sampling step in the banana-shaped posterior distribution. We observe that the use of a smaller step-size degrades the ergodicity of the RMHMC, particularly when compared against the LMC algorithm. In both RMHMC and LMC, we observe that aggressively mixing with MMALA causes the Markov chain to mix more slowly; on the other hand, a relatively small mixing probability leads to ergodicity that is nearly indistinguishable from the unmodified implementations of RMHMC and LMC.}
  \label{geometric-ergodicity:fig:banana-ergodicity}
\end{figure}

\begin{figure}[t!]
  \begin{subfigure}[t]{0.3\textwidth}
    \centering
    \includegraphics[width=\textwidth]{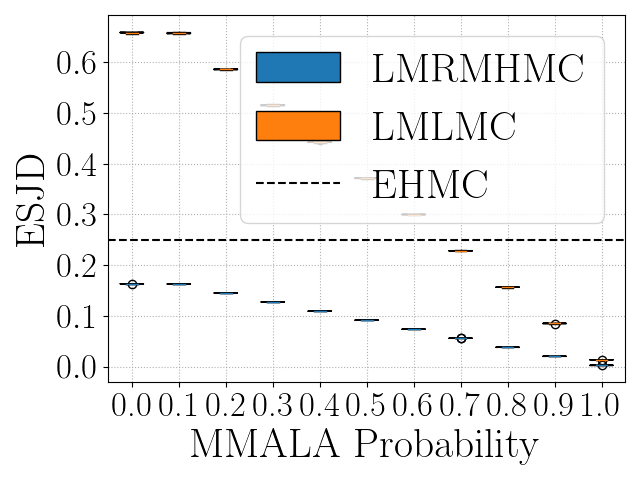}
    \caption{ESJD}
  \end{subfigure}
  ~
  \begin{subfigure}[t]{0.3\textwidth}
    \centering
    \includegraphics[width=\textwidth]{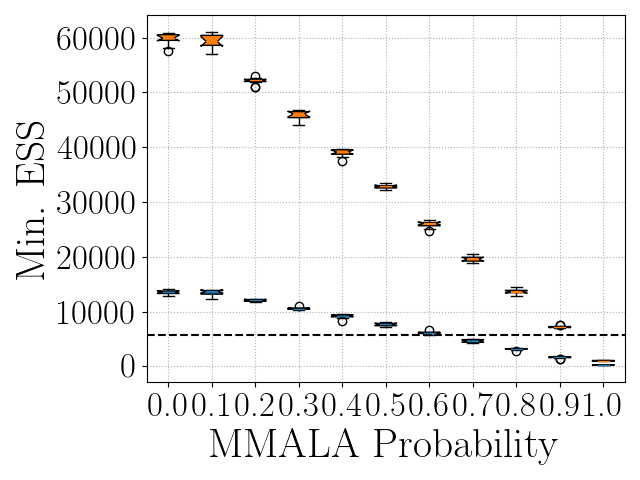}
    \caption{Min. ESS}
  \end{subfigure}
  ~
  \begin{subfigure}[t]{0.3\textwidth}
    \centering
    \includegraphics[width=\textwidth]{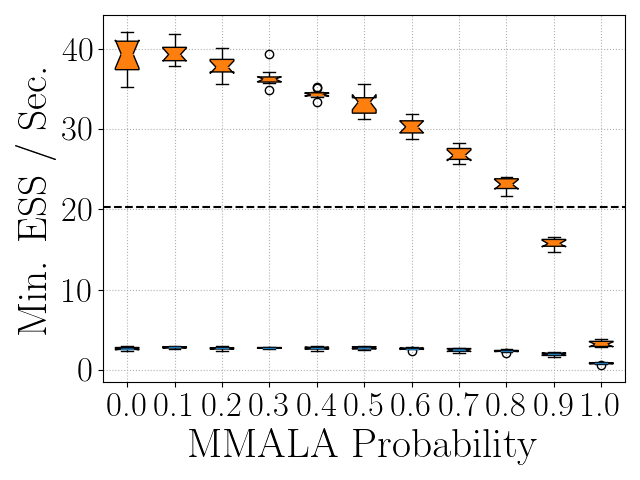}
    \caption{Min. ESS / Sec.}
  \end{subfigure}
  \caption{We show the ESJD, the minimum ESS, and the time-normalized minimum ESS for inference in the banana-shaped posterior distribution. We observe that for both RMHMC and LMC, aggressively mixing with MMALA causes the ESJD distance to decrease; this correspondingly produces a decrease in the minimum ESS. For LMC, the time-normalized minimum ESS is a decreasing function of the mixing probability.}
  \label{geometric-ergodicity:fig:banana-metrics}
\end{figure}

Our first example considers the following generative model, which represents an example of non-identifiable parameters.
\begin{align}
    \theta_1,\theta_2 &\overset{\mathrm{i.i.d.}}{\sim} \mathrm{Normal}(0, \sigma_\theta^2) \\
  y_i \vert\theta_1,\theta_2 &\overset{\mathrm{i.i.d.}}{\sim} \mathrm{Normal}(\theta_1+\theta_2^2, \sigma_y^2) ~~\mathrm{for}~i=1,\ldots, N.
\end{align}
Given observations $\set{y_i}_{i=1}^N$, we wish to sample the posterior distribution of $(\theta_1,\theta_2)$, a two-dimensional posterior.
In our experiments, we consider $n=100$. We employ Euclidean HMC with a step-size of $\epsilon = 0.1$; RMHMC is implemented with a step-size of 0.04; in LMC we use an integration step-size of $\epsilon= 0.1$. In each case, we set $k_\mathrm{max} = 10$. For the geometric methods, the sum of the Fisher information and the negative Hessian of the log-prior is used as a Riemannian metric. The reason that a smaller integration step-size is employed in RMHMC is that \cref{geometric-ergodicity:eq:intermediate-momentum} in the defining involution will fail to have a solution for $\breve{p}$ for large step-sizes; this necessitates the use of a smaller integration step.

In \cref{geometric-ergodicity:fig:banana-ergodicity}, we visualize the statistic $\abs{\mathrm{MMD}_u^2}$ over one-hundred steps of the Markov chain for euclidean HMC (EHMC), RMHMC and LMC; we color the RMHMC and LMC Markov chains according to how aggressively they mix with MMALA. Neither EHMC nor RMHMC exhibit clear evidence of geometric ergodicity in this target distribution; on the other hand, there exist a broad range of mixing probabilities for which LMC exhibits a linear decrease, as a function of $n$, in $\abs{\mathrm{MMD}_u^2}$ on a logarithmic scale. In \cref{geometric-ergodicity:fig:banana-metrics} we visualize the ESJD, the minimum ESS, and the minimum ESS per second for the three MCMC algorithms. We observe that LMC exhibits by far the strongest performance on these metrics, while RMHMC languishes due to its small step-size and the pathologies of the generalized leapfrog integrator applied to this posterior.

\subsection{Hierarchical Bayesian Logistic Regression}

\begin{figure}[t!]
  \begin{subfigure}[t]{0.3\textwidth}
    \centering
    \includegraphics[width=\textwidth]{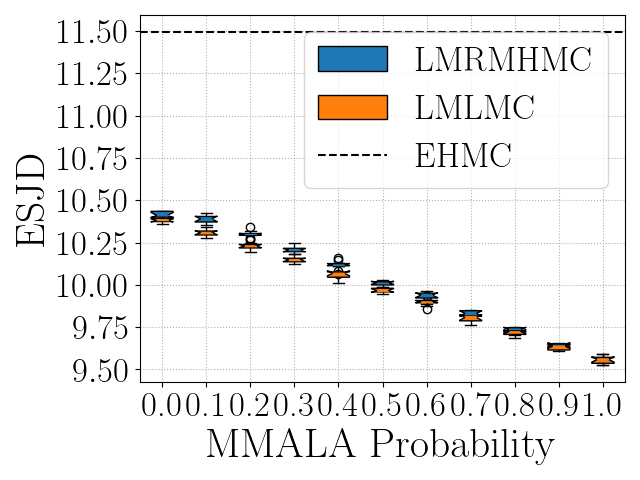}
    \caption{ESJD}
  \end{subfigure}
  ~
  \begin{subfigure}[t]{0.3\textwidth}
    \centering
    \includegraphics[width=\textwidth]{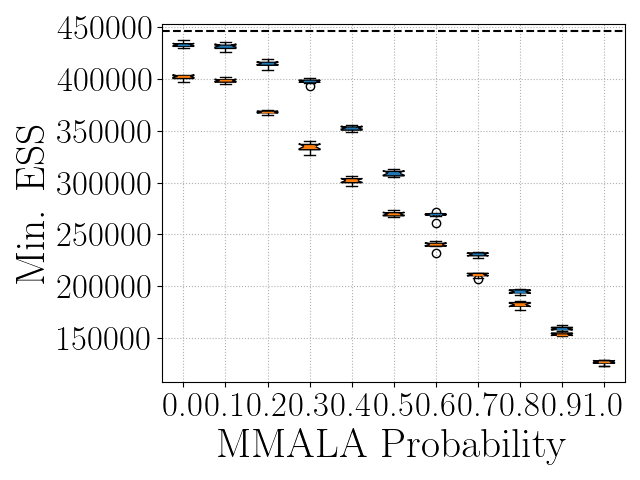}
    \caption{Min. ESS}
  \end{subfigure}
  ~
  \begin{subfigure}[t]{0.3\textwidth}
    \centering
    \includegraphics[width=\textwidth]{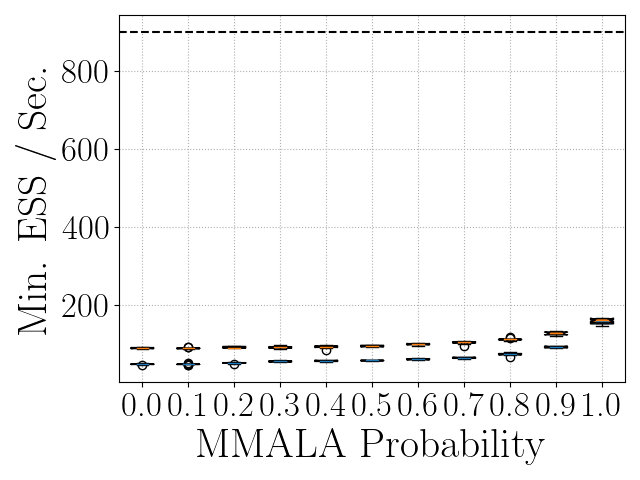}
    \caption{Min. ESS / Sec.}
  \end{subfigure}
  \caption{We show the ESJD, the minimum ESS, and the time-normalized minimum ESS for inference in the hierarchical Bayesian logistic regression posterior distribution. We observe that both of the geometric methods do not perform as well as standard HMC in this inference task. For each mixing weight with MMALA, we observe that RMHMC enjoys a greater ESJD and minimum ESS when compared against LMC; however, as LMC is a more computationally expedient procedure, LMC has a greater time-normalized ESS. In all of these metrics, however, EHMC dominates the geometric inference algorithms.}
  \label{geometric-ergodicity:fig:logistic-metrics}
\end{figure}

We consider sampling from the hierarchical Bayesian logistic regression model
\begin{align}
    \alpha &\sim \mathrm{Gamma}(\omega, \theta) \\
    \beta_i \vert \alpha &\overset{\mathrm{i.i.d.}}{\sim} \mathrm{Normal}(0, \alpha^{-1})  ~~\mathrm{for}~ i = 1,\ldots, m \\
    y_i \vert x_i,\beta &\overset{\mathrm{indep.}}{\sim} \mathrm{Bernoulli}\paren{\frac{1}{1 + \exp(-x_i^\top \beta)}} ~~\mathrm{for}~ i = 1,\ldots, N.
\end{align}
We consider a logistic regression dataset with $N = 270$ observations and $m=14$ covariates. We set $\omega=10$ and $\theta = 2$ in our experiments. We employ a Metropolis-within-Gibbs sampling procedure wherein we alternate between sampling the posterior distributions $\alpha \vert \beta, k, \theta$ and $\beta\vert \alpha, \set{(x_i,y_i)}_{i=1}^n$; sampling the former can be performed analytically, whereas we employ EHMC, RMHMC, and LMC to sample the latter. In Euclidean HMC, we set $\epsilon = 0.1$; in RMHMC and LMC we set $\epsilon = 0.8$. We set $k_\mathrm{max} = 10$ as the upper bound on the number of integration steps in each case. For implementing both LMRMHMC and LMLMC, we use the sum of the Fisher information and the negative Hessian of the log-prior as a metric.

In \cref{geometric-ergodicity:fig:logistic-metrics} we show the ESJD, the minimum ESS, and the minimum ESS per second for the hierarchical Bayesian logistic regression posterior. Neither of the geometric methods perform well in this posterior, consistently under-performing EHMC. A criticism of LMC is that its performance degrades significantly in higher dimensions \citep{geometric-foundations}; one sees evidence of this phenomenon in the smaller ESS generated by LMC; however, the computational savings due to eliminating the fixed point iterations still allow LMC to edge out a stronger time-normalized ESS compared to RMHMC.

\subsection{Neal's Funnel Distribution}

\begin{figure}[t!]
  \begin{subfigure}[t]{0.45\textwidth}
    \centering
    \includegraphics[width=\textwidth]{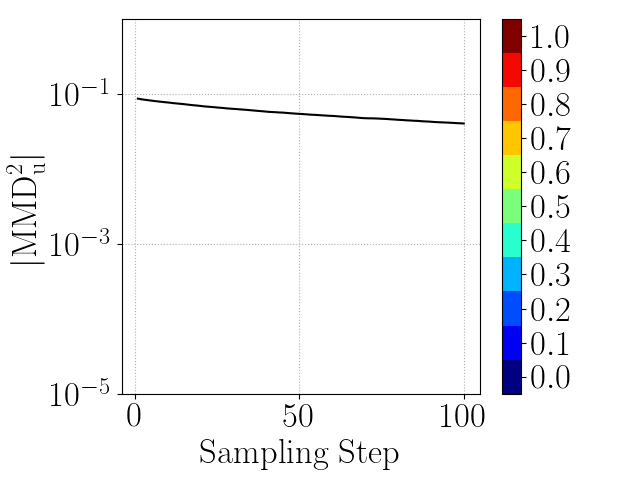}
    \caption{EHMC}
  \end{subfigure}
  ~
  \begin{subfigure}[t]{0.45\textwidth}
    \centering
    \includegraphics[width=\textwidth]{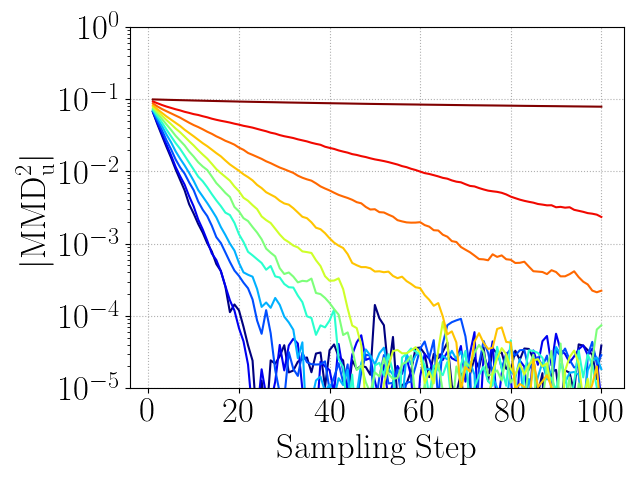}
    \caption{LMRMHMC}
  \end{subfigure}
  \caption{We examine the ergodicity of the Markov chains as a function of the sampling step in Neal's funnel distribution. We observe that EHMC struggles in this distribution due to the multiple spatial scales inherent in the funnel-shaped distribution. Moreover, we observe that aggressively mixing RMHMC with SMALA causes convergence in MMD to become increasingly slow, with convergence of the pure SMALA algorithm being slower than EHMC. On the other hand, a modest amount of mixing with SMALA produces ergodicity that is nearly indistinguishable from the unmodified implementation of RMHMC.}
  \label{geometric-ergodicity:fig:neal-funnel-ergodicity}
\end{figure}

\begin{figure}[t!]
  \begin{subfigure}[t]{0.3\textwidth}
    \centering
    \includegraphics[width=\textwidth]{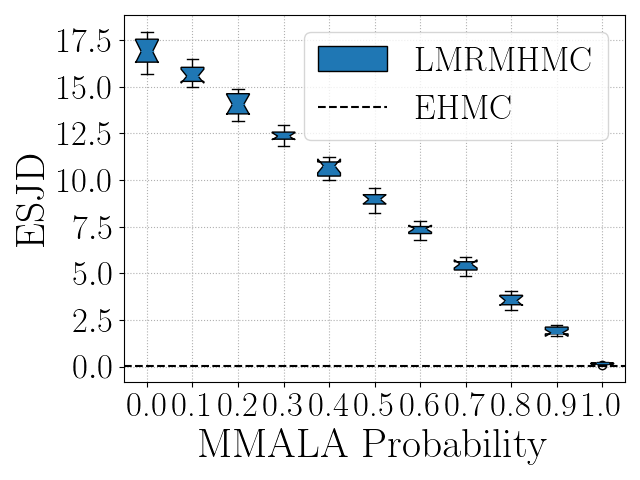}
    \caption{ESJD}
  \end{subfigure}
  ~
  \begin{subfigure}[t]{0.3\textwidth}
    \centering
    \includegraphics[width=\textwidth]{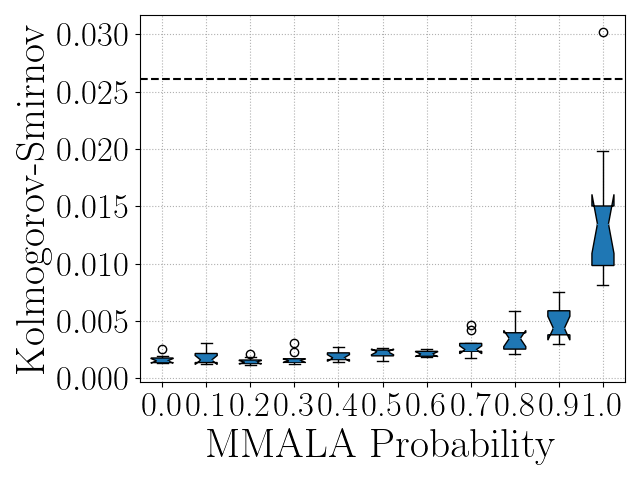}
    \caption{KS}
  \end{subfigure}
  ~
  \begin{subfigure}[t]{0.3\textwidth}
    \centering
    \includegraphics[width=\textwidth]{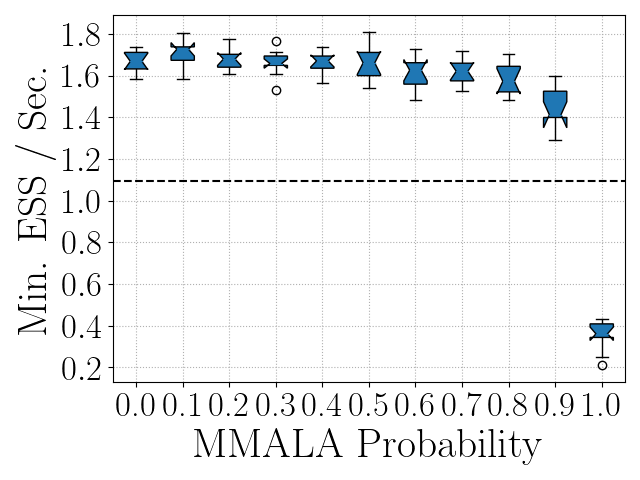}
    \caption{Min. ESS / Sec.}
  \end{subfigure}
  \caption{We show the ESJD, the distribution of Kolmogorov-Smirnov statistics, and the time-normalized minimum ESS for inference in Neal's funnel distribution. Although RMHMC and EHMC have comparable time-normalized ESS, it is clear from the distribution of KS statistics that RMHMC produces samples that are closer to the target distribution in EHMC.}
  \label{geometric-ergodicity:fig:neal-funnel-metrics}
\end{figure}

Neal's funnel distribution \citep{10.1214/aos/1056562461} is a density defined in the following hierarchical manner.
\begin{align}
    v &\sim\mathrm{Normal}(0, 9) \\
    x_i\vert v &\overset{\mathrm{i.i.d.}}{\sim} \mathrm{Normal}(0, \exp(-v)) ~~\mathrm{for}~ i=1,\ldots, N.
\end{align}
This distribution is shaped like a funnel, in which the thickness of the ``neck'' is being controlled by the random variable $v$. This model is reflective of posteriors encountered in hierarchical models with sparse data. The objective in this task is to jointly sample $(v, x_1,\ldots, x_N)$, producing a $(N+1)$-dimensional target distribution. In Euclidean HMC we employ an integration step-size of $\epsilon=0.1$ and $k_\mathrm{max} = 10$. Our implementation of RMHMC uses $k_\mathrm{max} = 20$ integration steps with a step-size of $\epsilon=0.1$ with the SoftAbs metric. We do not consider LMC in this task since we found it non-obvious how the SoftAbs structure could be extended into the LMC framework while preserving the cubic computational cost at each step. In this experiment, we use the SoftAbs metric.

In \cref{geometric-ergodicity:fig:neal-funnel-ergodicity}, we visualize $\abs{\mathrm{MMD}_u^2}$ for both EHMC and RMHMC. As expected, EHMC struggles to sample from Neal's funnel distribution due to the multiscale phenomena. On the other hand, RMHMC exhibits much stronger convergence properties, having a linear decrease over several possible mixing probabilities with SMALA. We find that aggressively mixing with SMALA can be counter-productive, however, due to the less efficient traversal of the target distribution by single-step methods. In \cref{geometric-ergodicity:fig:neal-funnel-metrics} we show the ESJD, the Kolmogorov-Smirnov statistics, and the minimum ESS per second. On all of these metrics, RMHMC clearly outperforms EHMC.

\subsection{Stochastic Volatility Model}

\begin{figure}[t!]
  \begin{subfigure}[t]{0.3\textwidth}
    \centering
    \includegraphics[width=\textwidth]{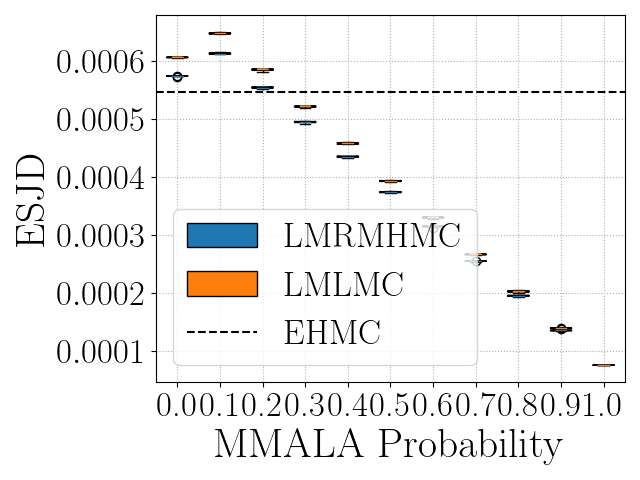}
    \caption{ESJD}
  \end{subfigure}
  ~
  \begin{subfigure}[t]{0.3\textwidth}
    \centering
    \includegraphics[width=\textwidth]{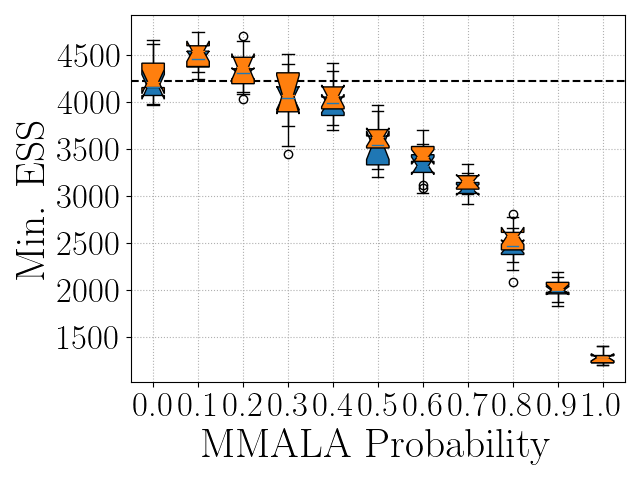}
    \caption{Min. ESS}
  \end{subfigure}
  ~
  \begin{subfigure}[t]{0.3\textwidth}
    \centering
    \includegraphics[width=\textwidth]{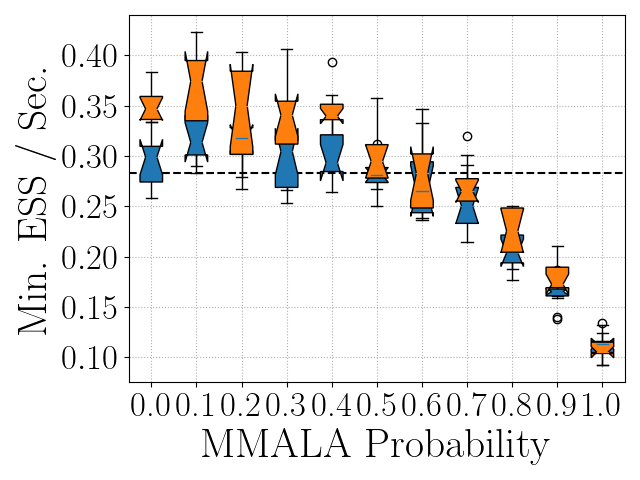}
    \caption{Min. ESS / Sec.}
  \end{subfigure}
  \caption{We show the ESJD, the minimum ESS, and the time-normalized minimum ESS for inference in the posterior distribution of the stochastic volatility model. We observe that both of the geometric methods with a modest mixing probability with MMALA produce time-normalized ESS that are competitive with EHMC. For each mixing weight with MMALA, we observe that LMC enjoys a greater ESJD when compared against RMHMC.}
  \label{geometric-ergodicity:fig:stochastic-volatility-metrics}
\end{figure}

We consider a stochastic volatility model with the following generative model.
\begin{align}
    \frac{\phi+1}{2} & \sim \mathrm{Beta}(20, 3/2) \\
    \sigma^2 &\sim \mathrm{InverseChiSquared}(10, 1/20) \\
    x_1 \vert \phi, \sigma^2 &\sim \mathrm{Normal}(0, \sigma^2 / (1-\phi^2)) \\
    x_{t+1} \vert x_{t},\phi,\sigma^2 &\sim \mathrm{Normal}(\phi x_t, \sigma^2) ~~\mathrm{for}~t=2,\ldots, T-1 \\
    y_t \vert \beta, x_t &\sim \mathrm{Normal}(0, \beta^2 \exp(x_t)) ~~\mathrm{for}~t=1,\ldots, T.
\end{align}
Additionally, the parameter $\beta$ is equipped with an improper prior proportional to $\beta^{-1}$. In this example, we seek to generate samples from the joint distribution $x_1,\ldots, x_T,\beta,\phi,\sigma^2$ given observations $y_1,\ldots, y_T$. We employ Metropolis-within-Gibbs-like strategy wherein we alternate between sampling $(x_1, \ldots, x_T)\vert (y_1,\ldots, y_T) \phi,\sigma^2, \beta$ and $(\phi,\sigma^2,\beta)\vert \set{(x_i,y_i)}_{i=1}^T$; in the former case we employ Euclidean HMC whereas in the latter case we compare Euclidean HMC, RMHMC, and LMC. In sampling either distribution, the Riemannian metric is chosen as the sum of the Fisher information and the negative Hessian of the log-prior. In our experiments we set $T=1,000$. When using Euclidean HMC to sample $(\phi,\sigma^2,\beta)$ we use $k_\mathrm{max} = 50$ integration steps and a step-size of $\epsilon = 0.01$; in the case of RMHMC and LMC we use $k_\mathrm{max} = 6$ integration steps and a step-size of $\epsilon  = 0.5$. 

In \cref{geometric-ergodicity:fig:stochastic-volatility-metrics} we show the ESJD, the minimum ESS, and the minimum ESS per second. We observe that employing a modest mixture probability with MMALA produces a Markov chain that is marginally better than EHMC.

\subsection{Log-Gaussian Cox-Poisson Process}

\begin{figure}[t!]
  \begin{subfigure}[t]{0.3\textwidth}
    \centering
    \includegraphics[width=\textwidth]{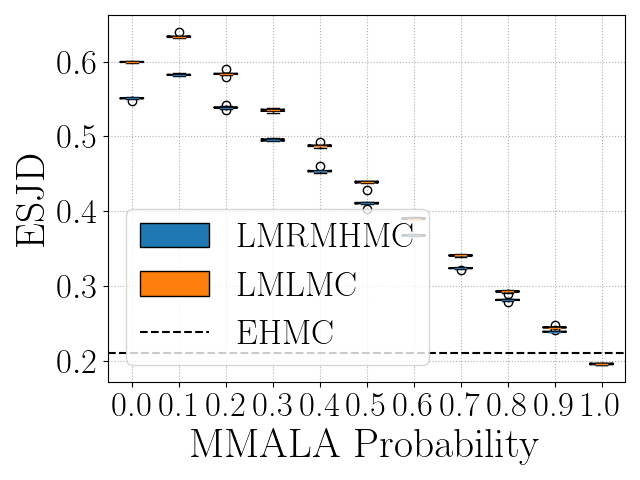}
    \caption{ESJD}
  \end{subfigure}
  ~
  \begin{subfigure}[t]{0.3\textwidth}
    \centering
    \includegraphics[width=\textwidth]{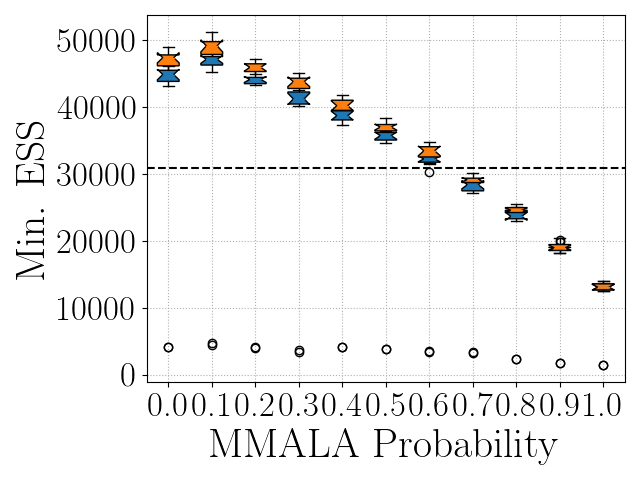}
    \caption{Min. ESS}
  \end{subfigure}
  ~
  \begin{subfigure}[t]{0.3\textwidth}
    \centering
    \includegraphics[width=\textwidth]{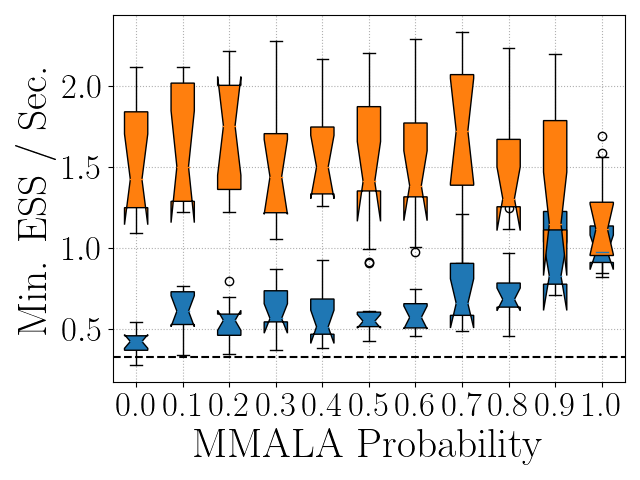}
    \caption{Min. ESS / Sec.}
  \end{subfigure}
  \caption{We show the ESJD, the minimum ESS, and the time-normalized minimum ESS for inference in the posterior distribution of the log-Gaussian Cox-Poisson model. For any mixing probability, the geometric methods enjoy larger time-normalized effective sample sizes than EHMC, with LMC outperforming RMHMC on these metrics. For nearly every mixing probability, we additionally observe that the ESJD is greater for LMC and RMHMC than for EHMC.}
  \label{geometric-ergodicity:fig:cox-poisson-metrics}
\end{figure}

We consider inference in a log-Gaussian Cox-Poisson model with the following generative model:
\begin{align}
  \Sigma_{(i, j), (i', j')} \vert \sigma^2, \beta &= \sigma^2 \exp\paren{-\sqrt{(i-i')^2 + (j-j')^2} / (N \beta)} \\
  \mathrm{vec}(\mathbf{x}) \vert \Sigma &\sim \mathrm{MultivariateNormal}(\mu\mathbf{1}, \Sigma) \\
  y_{ij}\vert x_{ij} &\sim \mathrm{Poisson}(\exp(x_{ij}) / N^2),
\end{align}
with priors $\beta \sim\mathrm{Gamma}(2, 1/2)$ and $\sigma^2\sim \mathrm{Gamma}(2, 1/2)$. In this example, the objective is to sample the joint distribution $(\beta,\sigma^2,\set{x_{ij}}_{i,j=1}^N)$ given observations $\set{y_{ij}}_{i,j=1}^N$. As in the case of stochastic volatility model, we alternatively sample between $\set{x_{ij}}_{i,j=1}^N$ given $\set{y_{ij}}_{i,j=1}^N$, $\sigma^2$, and $\beta$, and $(\beta,\sigma^2)$ given $\set{x_{ij}}_{i,j=1}^N$. In each case, the metric is given by the sum of the Fisher information and the negative Hessian of the log-prior. In our experiments we set $N=16$. In the former case we employ Euclidean HMC, whereas in the latter case we compare Euclidean HMC, RMHMC and LMC. When implementing Euclidean HMC we employ $k_\mathrm{max} = 50$ integration steps and a step-size of $\epsilon = 0.01$; in RMHMC and LMC we use $k_\mathrm{max} = 6$ integration steps and a step-size of $\epsilon = 0.5$.

In \cref{geometric-ergodicity:fig:cox-poisson-metrics} we show the ESJD, the minimum ESS, and the minimum ESS per second. We observe that there is a range of mixture probabilities for which mixing with the MMALA in LMRMHMC and LMLMC exhibit superior performance compared to EHMC. Both RMHMC and LMC exhibit similar minimum ESS metrics, but due to its computational advantage, LMC produces a larger minimum ESS per second.

\subsection{Fitzhugh-Nagumo Model}

\begin{figure}[t!]
  \begin{subfigure}[t]{0.3\textwidth}
    \centering
    \includegraphics[width=\textwidth]{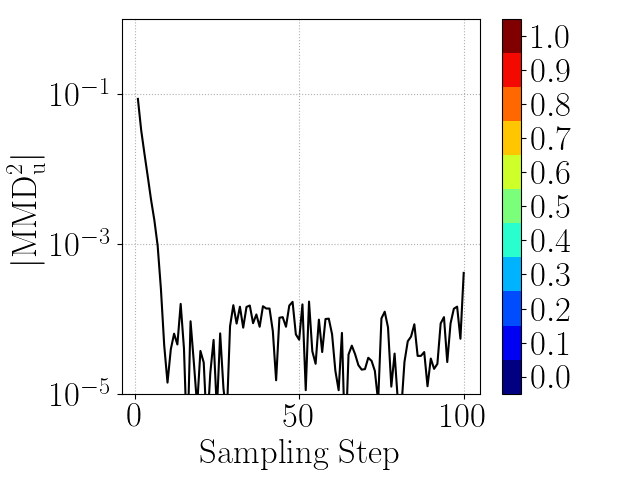}
    \caption{EHMC}
  \end{subfigure}
  ~
  \begin{subfigure}[t]{0.3\textwidth}
    \centering
    \includegraphics[width=\textwidth]{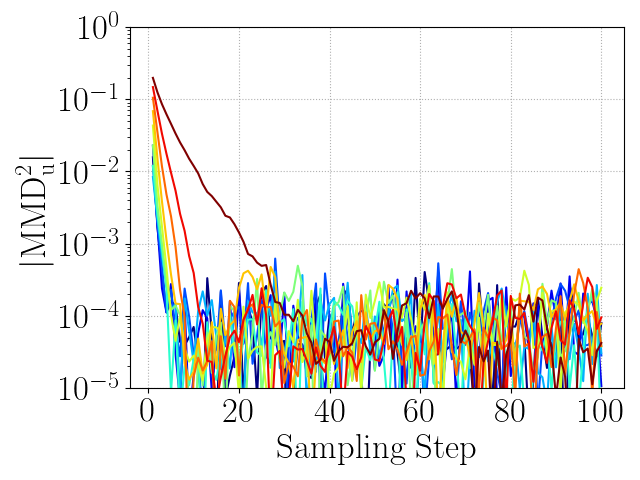}
    \caption{LMRMHMC}
  \end{subfigure}
  ~
  \begin{subfigure}[t]{0.3\textwidth}
    \centering
    \includegraphics[width=\textwidth]{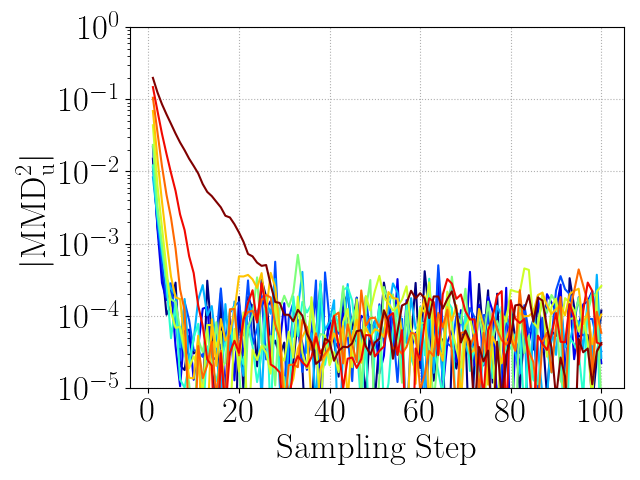}
    \caption{LMLMC}
  \end{subfigure}
  \caption{We examine the ergodicity of the Markov chains as a function of the sampling step in the Fitzhugh-Nagumo posterior distribution. We observe that when a modest mixing probability is employed, the modified RMHMC and LMC transition kernels mix more efficiently than the EHMC transition kernel, being nearly indistinguishable from the unmodified RMHMC and LMC transition kernels. When one uses a large mixing probability, we see that the mixing rate is decreased.}
  \label{geometric-ergodicity:fig:fitzhugh-nagumo-ergodicity}
\end{figure}

\begin{figure}[t!]
  \begin{subfigure}[t]{0.3\textwidth}
    \centering
    \includegraphics[width=\textwidth]{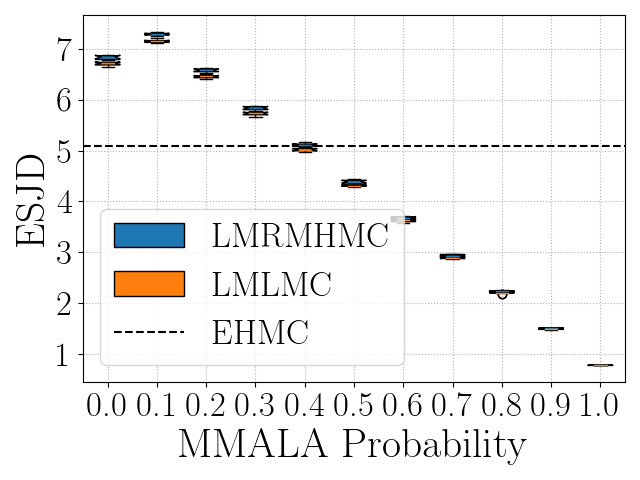}
    \caption{ESJD}
  \end{subfigure}
  ~
  \begin{subfigure}[t]{0.3\textwidth}
    \centering
    \includegraphics[width=\textwidth]{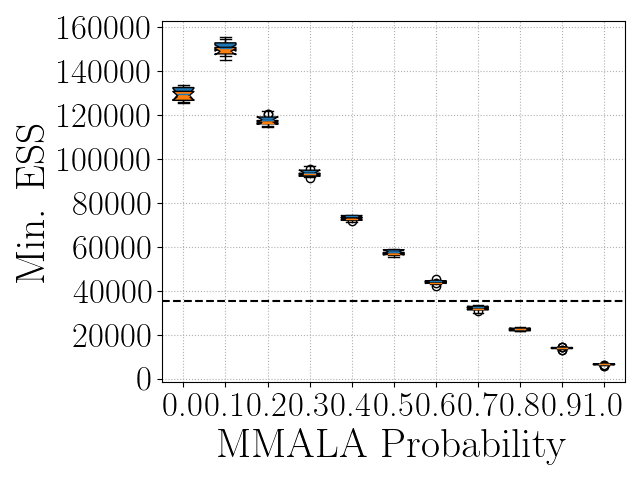}
    \caption{Min. ESS}
  \end{subfigure}
  ~
  \begin{subfigure}[t]{0.3\textwidth}
    \centering
    \includegraphics[width=\textwidth]{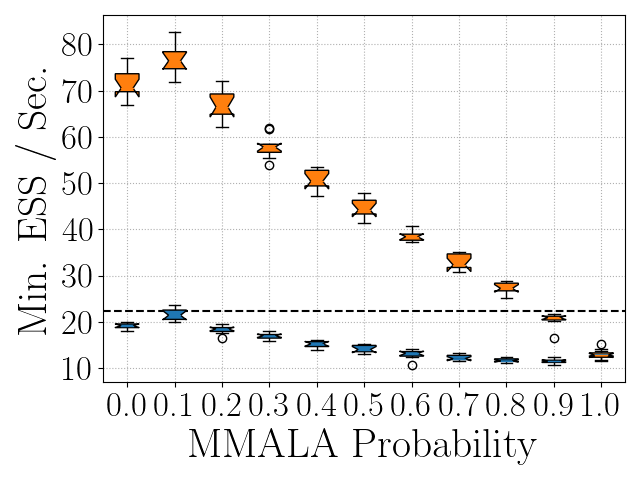}
    \caption{Min. ESS / Sec.}
  \end{subfigure}
  \caption{We show the ESJD, the minimum ESS, and the time-normalized minimum ESS for inference in the Fitzhugh-Nagumo posterior distribution. LMC dominates both RMHMC and EHMC in this inference task under the time-normalized ESS metrics with its greater computational expediency. When timing is not accounted for, we observe that RMHMC and LMC enjoy similar ESJD and minimum ESS metrics. We observe that for both RMHMC and LMC, aggressively mixing with MMALA causes the ESJD distance to decrease; this correspondingly produces a decrease in the minimum ESS.}
  \label{geometric-ergodicity:fig:fitzhugh-nagumo-metrics}
\end{figure}

Given $\R$-valued parameters $a$, $b$, and $c$, the Fitzhugh-Nagumo differential equations are defined by,
\begin{align}
  \dot{v}_t &= c\paren{v_t - \frac{v_t^3}{3} + r_t} \\
  \dot{r}_t &= -\paren{\frac{v_t - a + b r_t}{c}}.
\end{align}
Given initial conditions $v_0$ and $r_0$, we consider the following generative model:
\begin{align}
    (a, b, c) &\overset{\mathrm{i.i.d.}}{\sim} \mathrm{Normal}(0, 1) \\
    \hat{r}_{t_k}\vert a, b, c, r_{t_k}, \sigma^2 &\overset{\mathrm{indep.}}{\sim} \mathrm{Normal}(r_{t_k}, \sigma^2)~~\mathrm{for}~ k = 1,\ldots, n \\
    \hat{v}_{t_k} \vert a, b, c, v_{t_k}, \sigma^2 &\overset{\mathrm{indep.}}{\sim} \mathrm{Normal}(v_{t_k}, \sigma^2) ~~\mathrm{for}~ k=1,\ldots, n,
\end{align}
where $t_1,\ldots, t_n$ are equally spaced points between $[0, T]$. In our experiments, we set $v_0 = 1$, $r_0 = -1$, $\sigma^2=1/4$, $T=10$, and $n = 200$. Our metric is given by the sum of the Fisher information and negative Hessian of the log-prior. In our implementations, we use Euclidean HMC with a step-size of $\epsilon=0.01$ and $k_\mathrm{max} = 10$ integration steps. In RMHMC and LMC, we employ an integration step-size of $\epsilon=0.5$ and $k_\mathrm{max} = 6$ integration steps.

\Cref{geometric-ergodicity:fig:fitzhugh-nagumo-ergodicity} shows the ergodicity measures for EHMC, RMHMC, and LMC. We observe that each of these MCMC algorithms exhibit a linear decrease in $\abs{\mathrm{MMD}_u^2}$ on a logarithmic scale. In \cref{geometric-ergodicity:fig:fitzhugh-nagumo-metrics} we visualize the ESJD, the minimum ESS, and the minimum ESS per second for the three MCMC algorithms. We observe that LMC exhibits the strongest performance in terms of the minimum ESS per second, whereas RMHMC only approaches the time-normalized performance of EHMC due to its complexity.

\subsection{Multi-Scale Student Distribution}\label{subsec:student-t}

\begin{figure}[t!]
  \begin{subfigure}[t]{0.3\textwidth}
    \centering
    \includegraphics[width=\textwidth]{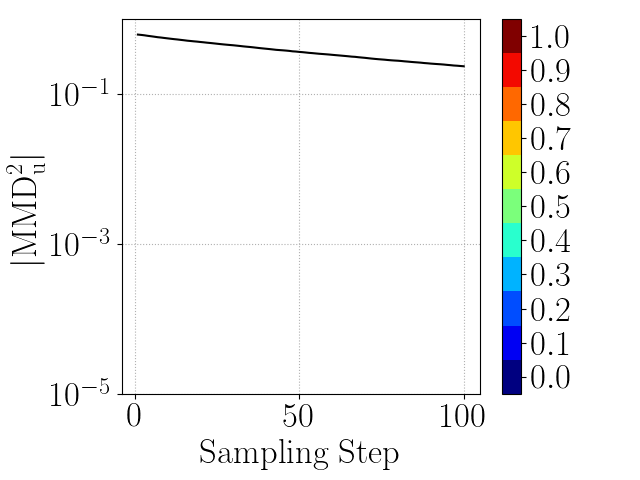}
    \caption{EHMC}
  \end{subfigure}
  ~
  \begin{subfigure}[t]{0.3\textwidth}
    \centering
    \includegraphics[width=\textwidth]{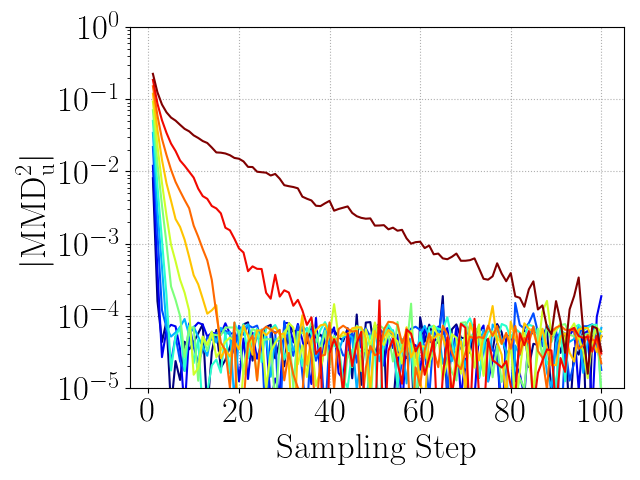}
    \caption{LMRMHMC}
  \end{subfigure}
  ~
  \begin{subfigure}[t]{0.3\textwidth}
    \centering
    \includegraphics[width=\textwidth]{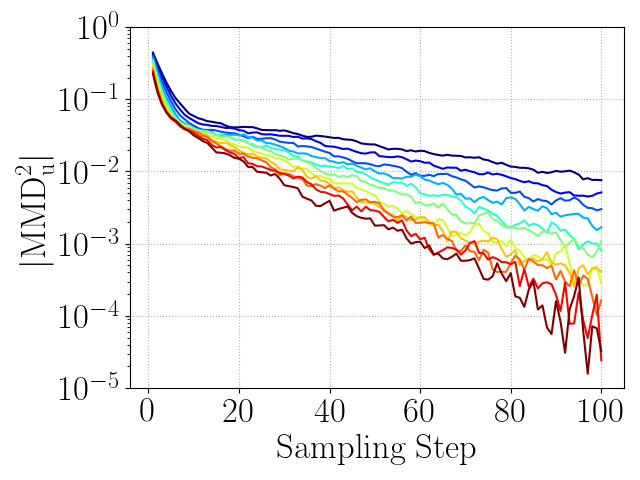}
    \caption{LMLMC}
  \end{subfigure}
  \caption{We examine the ergodicity of the Markov chains as a function of the sampling step in the multiscale Student-$t$ distribution. Here we see that, in the case of LMC, it is beneficial from an ergodicity perspective to mix heavily with MMALA, whereas for RMHMC, only modest mixing probabilities can produce a modified Markov chain that is competitive with the unmodified version. The Euclidean HMC struggles in this posterior distribution due to the multiple spatial scales.}
  \label{geometric-ergodicity:fig:t-ergodicity}
\end{figure}

\begin{figure}[t!]
  \begin{subfigure}[t]{0.45\textwidth}
    \centering
    \includegraphics[width=\textwidth]{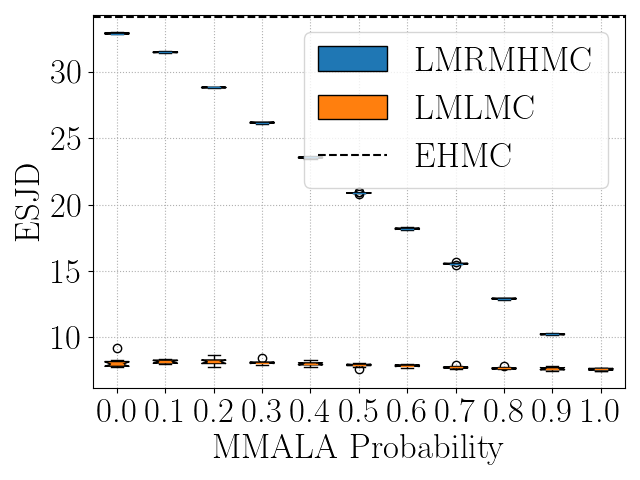}
    \caption{ESJD}
  \end{subfigure}
  ~
  \begin{subfigure}[t]{0.45\textwidth}
    \centering
    \includegraphics[width=\textwidth]{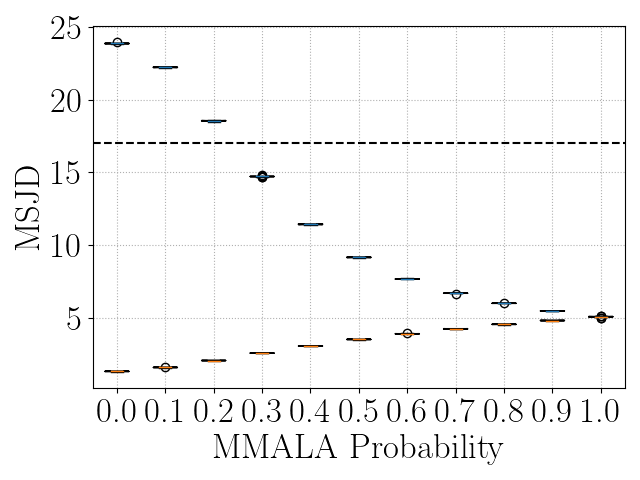}
    \caption{MSJD}
  \end{subfigure}
  \caption{We show the ESJD and the MSJD for inference in the multiscale Student-$t$ distribution. Curiously, EHMC enjoys the largest ESJD, but this does not correspond to more efficient sampling, as shown in the inferior distribution of KS statistics. To examine this property further, we also measured the MSJD, which reveals that EHMC is less effective in traversing the parameter space compared to the RMHMC geometric method.}
  \label{geometric-ergodicity:fig:t-jump}
\end{figure}

\begin{figure}[t!]
  \begin{subfigure}[t]{0.45\textwidth}
    \centering
    \includegraphics[width=\textwidth]{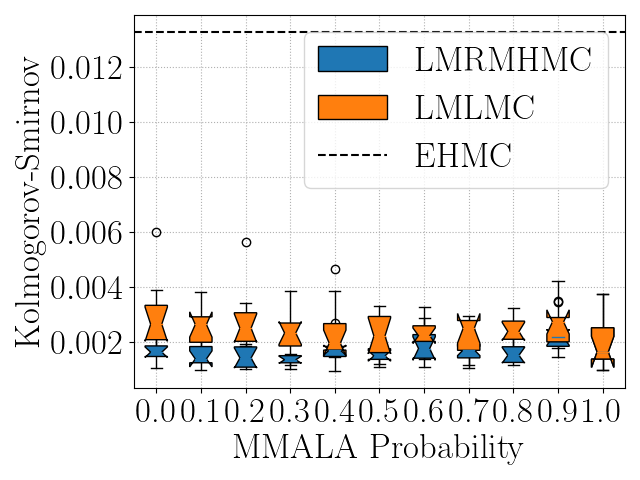}
    \caption{KS}
  \end{subfigure}
  ~
  \begin{subfigure}[t]{0.45\textwidth}
    \centering
    \includegraphics[width=\textwidth]{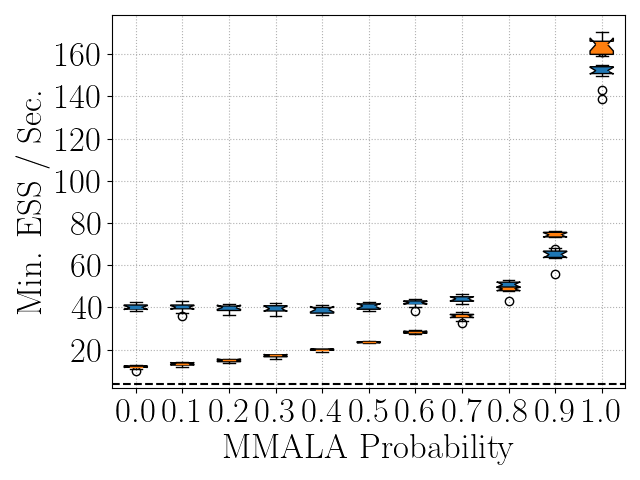}
    \caption{Min. ESS / Sec.}
  \end{subfigure}
  \caption{We show the distribution of Kolmogorov-Smirnov statistics and the time-normalized minimum ESS for inference in the multiscale Student-$t$ distribution. For the geometric methods, we observe that RMHMC has much larger ESJD than LMC; however, in terms of time-normalized performance, it is best to fully mix either method with MMALA. Indeed, we find that MMALA on its own produces a distribution of KS statistics that is competitive with any of the mixture kernels.}
  \label{geometric-ergodicity:fig:t-metrics}
\end{figure}

Fix $m\in\mathbb{N}$ and let $\Sigma\in \mathrm{PD}(m)$ and $\nu > 2$. The density function of the multivariate Student-$t$ distribution is,
\begin{align}
    \pi(q) \propto \paren{1 + \frac{1}{\nu} x^\top \Sigma^{-1} x}^{-(m + \nu) / 2}.
\end{align}
In our experiments we consider $m=20$, $\nu = 5$, and $\Sigma = \mathrm{diag}(1,\ldots, 1, 10^4)\in\R^{m\times m}$. The presence of severely differing spatial scales in the multivariate Student-$t$ distribution will cause Euclidean HMC with identity mass matrix to exhibit highly oscillatory behavior \citep{pourzanjani2019implicit}, which will limit the efficiency of the method both in terms of ergodicity and the effective sample size. We employ Euclidean HMC with a step-size of $\epsilon=0.8$, and RMHMC and LMC with a step-size of $\epsilon=0.7$; in each case we employ $k_\mathrm{max} = 20$ integration steps. In the case of RMHMC and LMC, we consider a metric given by the positive definite term in the Hessian of the log-density of the multivariate Student-$t$ distribution. 

In \cref{geometric-ergodicity:fig:t-ergodicity} we show $\abs{\mathrm{MMD}_u^2}$ as a function of the number of Markov chain steps. All of the MCMC algorithms produce a linear decrease in the estimate of the maximum mean discrepancy on a logarithmic scale; however, the methods differ drastically in terms of the {\it slope} of this linear relationship. We see that RMHMC exhibits by far the fastest convergence. Notably, LMC exhibits {\it slower} convergence on this target distribution than MMALA; this is due to the dimensionality of the posterior, in which LMC struggles to maintain the Hamiltonian energy required to accept proposals. In \cref{geometric-ergodicity:fig:t-metrics} we show the ESJD, KS, and the minimum ESS per second. EHMC produces the largest ESJD but this does not translate into a large ESS due to the oscillatory behavior of the EHMC proposal mechanism, with RMHMC producing the largest minimum ESS per second despite its computational complexity.

\section{Conclusion}

This work has considered methods by which to equip RMHMC and LMC with a geometric ergodicity theory. The fundamental technique we adopt is to replace the RMHMC (or LMC) transition kernel consisting of a single integration step with the MMALA transition kernel. This modification is inspired by the Euclidean case, in which single-step HMC and MALA can be constructed to be exactly equivalent. By establishing reversibility of the marginal transition kernels, geometric ergodicity can be inherited from MMALA. We evaluated the modified variations of RMHMC and LMC, called LMRMHMC and LMLMC, respectively, on a suite of Bayesian inference tasks. We found that aggressively mixing with MMALA transition kernel can be detrimental for the performance of the Markov chain on a variety of metrics, but that more modest mixing can produce behaviors competitive with, or exceeding, the original RMHMC or LMC methods while still imbuing the methods with a supporting theory of geometric ergodicity.

\backmatter

\bmhead{Acknowledgments}

We thank the Yale Center for Research Computing for use of the research computing infrastructure.
This material is based upon work supported by the National Science Foundation Graduate Research Fellowship under Grant No. 1752134. Any opinion, findings, and conclusions or recommendations expressed in this material are those of the authors(s) and do not necessarily reflect the views of the National Science Foundation. The work is also supported in part by NIH/NIGMS R01GM136780 and AFOSR FA9550-21-1-0317.

\section*{Statements and Declarations}

No competing interests to declare.

\begin{appendices}

\section{Proofs}

\subsection{Proof of \Cref{geometric-ergodicity:prop:mala-hmc-equivalence}}\label{geometric-ergodicity:app:proof-mala-hmc-equivalence}

\begin{proof}
The proposal distribution of MALA is
\begin{align}
    \tilde{q}\vert q &\sim \mathrm{Normal}\paren{q + \frac{\epsilon^2}{2} \mathbf{A}^{-1}\nabla \log \pi(q), \epsilon^2 \mathbf{A}^{-1}} \\
    \tilde{\pi}(\tilde{q}\vert q) &\propto \exp\paren{-\frac{1}{2\epsilon^2} \paren{\tilde{q} - q - \frac{\epsilon^2}{2}\mathbf{A}^{-1} \nabla \log\pi(q)}^\top \mathbf{A} \paren{\tilde{q} - q - \frac{\epsilon^2}{2} \mathbf{A}^{-1} \nabla \log\pi(q)}} \\
    &= \exp\paren{-\frac{1}{2\epsilon^2} \paren{\mathbf{A}\tilde{q} - \mathbf{A}q - \frac{\epsilon^2}{2} \nabla \log\pi(q)}^\top \mathbf{A}^{-1} \paren{\mathbf{A}\tilde{q} - \mathbf{A}q - \frac{\epsilon^2}{2} \nabla \log\pi(q)}}.
\end{align}
Therefore, the acceptance probability of MALA is,
\begin{align}
\begin{split}
    \label{geometric-ergodicity:eq:mala-acceptance} &\min\set{1, \frac{\pi(\tilde{q}) \tilde{\pi}(q\vert \tilde{q})}{\pi(q) \tilde{\pi}(\tilde{q}\vert q)}} \\
    =& \min\set{1, \frac{\pi(\tilde{q})}{\pi(q)}\cdot \frac{\exp\paren{-\frac{1}{2\epsilon^2} \paren{\mathbf{A}q - \mathbf{A}\tilde{q} - \frac{\epsilon^2}{2} \nabla \log\pi(\tilde{q})}^\top \mathbf{A}^{-1} \paren{\mathbf{A}q - \mathbf{A}\tilde{q} - \frac{\epsilon^2}{2} \nabla \log\pi(\tilde{q})}}}{\exp\paren{-\frac{1}{2\epsilon^2} \paren{\mathbf{A}\tilde{q} - \mathbf{A}q - \frac{\epsilon^2}{2} \nabla \log\pi(q)}^\top \mathbf{A}^{-1} \paren{\mathbf{A}\tilde{q} - \mathbf{A}q - \frac{\epsilon^2}{2} \nabla \log\pi(q)}}}}.
\end{split}
\end{align}
The proposal $\tilde{q}$ given $q$ can be sampled by generating $z\sim\mathrm{Normal}(0, \mathrm{Id})$ and setting
\begin{align}
    \label{geometric-ergodicity:eq:mala-proposal} \tilde{q}=q+\frac{\epsilon}{2} \mathbf{A}^{-1}\nabla \log\pi(q) + \epsilon\sqrt{\mathbf{A}^{-1}} z.
\end{align}
In HMC, the leapfrog integrator is applied to the Hamiltonian $H(q, p) = -\log\pi(q) + \frac{1}{2} p^\top\mathbf{A}^{-1}p$. The sequence of updates is,
\begin{align}
    \bar{p} &= p + \frac{\epsilon}{2} \nabla\log\pi(q) \\
    \label{geometric-ergodicity:eq:hmc-proposal} \tilde{q}' &= q + \epsilon \mathbf{A}^{-1} \bar{p} \\
    &= q + \frac{\epsilon^2}{2}\mathbf{A}^{-1} \nabla \log \pi(q) + \epsilon \mathbf{A}^{-1} p \\
    \tilde{p}' &= \bar{p} + \frac{\epsilon}{2} \nabla\log\pi(\tilde{q}') \\
    &= p + \frac{\epsilon}{2} \nabla\log\pi(q) + \frac{\epsilon}{2} \nabla\log\pi(\tilde{q}').
\end{align}
When sampling from the distribution with density proportional to $\exp(-H(q, p))$, one sees by inspection that $p$ is independent of $q$ and that $p\sim\mathrm{Normal}(0, \mathbf{A})$. Therefore, using the same $z$ as in \cref{geometric-ergodicity:eq:mala-proposal}, we can sample $p$ by setting $p=\sqrt{\mathbf{A}} z$ so that $\epsilon\mathbf{A}^{-1} p = \epsilon\sqrt{\mathbf{A}^{-1}}z$; we therefore see that, in this case, HMC and MALA produce exactly the same proposal (i.e. $\tilde{q}'=\tilde{q}$). Recall that the HMC acceptance probability is,
\begin{align}
    \label{geometric-ergodicity:eq:hmc-acceptance} \min\set{1, \frac{\pi(\tilde{q})}{\pi(q)} \cdot \frac{\exp\paren{-\frac{1}{2} (\tilde{p}')^\top\mathbf{A}^{-1} (\tilde{p}')}}{\exp\paren{-\frac{1}{2} p^\top\mathbf{A}^{-1} p}}}.
\end{align}
By rearranging \cref{geometric-ergodicity:eq:hmc-proposal} we find,
\begin{align}
    p &= \frac{1}{\epsilon} \paren{\mathbf{A}\tilde{q} - \mathbf{A}q - \frac{\epsilon^2}{2} \nabla \log\pi(q)} \\
    \tilde{p}' &= \frac{1}{\epsilon} \paren{\mathbf{A}\tilde{q} - \mathbf{A}q + \frac{\epsilon^2}{2} \nabla\log\pi(\tilde{q})} \\
    &= -\frac{1}{\epsilon} \paren{\mathbf{A}q - \mathbf{A}\tilde{q} - \frac{\epsilon^2}{2} \nabla\log\pi(\tilde{q})}.
\end{align}
Substituting these into \cref{geometric-ergodicity:eq:hmc-acceptance} and comparing to \cref{geometric-ergodicity:eq:mala-acceptance} shows that not only are $\tilde{q}$ and $\tilde{q}'$ identical but that the acceptance probabilities are also identical. Therefore, the marginal chain of single-step HMC is exactly equivalent to the MALA chain.
\end{proof}

\subsection{Proof of \Cref{geometric-ergodicity:lem:marginal-transition-kernel}}\label{geometric-ergodicity:app:proof-marginal-transition-kernel}

\begin{proof}
  \begin{align}
      \mathrm{Pr}(q^{n+1}\in Q\vert q^n=q) &= \mathrm{Pr}(q^{n+1}\in Q ~\mathrm{and}~ p^{n+1}\in \R^m\vert q^n=q) \\
      &= \int_{\R^m} \mathrm{Pr}(q^{n+1}\in Q ~\mathrm{and}~ p^{n+1}\in \R^m\vert q^n=q, p^n=p) ~\pi(p\vert q)~\mathrm{d}p \\
      &= \int_{\R^m} K((q, p), (Q,\R^m)) ~\pi(p\vert q)~\mathrm{d}p.
  \end{align}
\end{proof}

\subsection{Proof of \Cref{geometric-ergodicity:prop:marginal-reversible}}\label{geometric-ergodicity:app:proof-marginal-reversible}

\begin{proof}
Given $Q, Q'\in\mathfrak{B}(\R^m)$ we have
\begin{align}
    \int_{Q} \tilde{K}(q, Q') ~\pi(q)~\mathrm{d}q &= \int_{Q} \int_{\R^m} K((q, p), (Q', \R^m)) ~\pi(p\vert q) ~\mathrm{d}p ~\pi(q)~\mathrm{d}q \\
    &= \int_{Q} \int_{\R^m} K((q, p), (Q', \R^m)) ~\pi(q, p) ~\mathrm{d}p ~\mathrm{d}q \\
    \label{geometric-ergodicity:eq:detailed-balance} &= \int_{Q'} \int_{\R^m} K((q, p), (Q, \R^m)) ~\pi(q, p) ~\mathrm{d}p ~\mathrm{d}q \\
    &= \int_{Q'} \int_{\R^m} K((q, p), (Q, \R^m)) ~\pi(p\vert q) ~\mathrm{d}p ~\pi(q)~\mathrm{d}q \\
    &= \int_{Q'} \tilde{K}(q, Q) ~\pi(q)~\mathrm{d}q,
\end{align}
where in \cref{geometric-ergodicity:eq:detailed-balance} we have used the fact that the phase space chain satisfies detailed balance with respect to $\pi(q, p)$. This verifies that the marginal chain satisfies detailed balance. 
\end{proof}

\subsection{Proof of Involution Composition}

\begin{proposition}\label{prop:involution-composition}
Suppose that $\Phi : \R^m\times\R^m\to\R^m\times\R^m$ is an invertible function and let $\mathbf{F}$ be the momentum flip function given in \cref{def:rmhmc-transition-kernel}. Suppose that $\mathbf{F}\circ\Phi$ is an involution, then $\mathbf{F}\circ\Phi^k$ is also an involution.
\end{proposition}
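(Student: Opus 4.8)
The plan is to recast the involution hypothesis as a conjugation (reversibility) relation between $\Phi$ and the momentum flip $\mathbf{F}$, and then lift that relation to every power of $\Phi$. First I would record that $\mathbf{F}$ is itself an involution: since $\mathbf{F}(q,p) = (q,-p)$, flipping the momentum twice is the identity, so $\mathbf{F}\circ\mathbf{F} = \mathrm{Id}$. The assumption that $\mathbf{F}\circ\Phi$ is an involution reads $\mathbf{F}\circ\Phi\circ\mathbf{F}\circ\Phi = \mathrm{Id}$. Composing both sides on the right with $\Phi^{-1}$ (which exists by the invertibility of $\Phi$) yields the key identity $\mathbf{F}\circ\Phi\circ\mathbf{F} = \Phi^{-1}$; that is, $\mathbf{F}$ conjugates $\Phi$ to its inverse.

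Next I would show by induction on $k$ that $\mathbf{F}\circ\Phi^k\circ\mathbf{F} = \Phi^{-k}$. The base case $k=1$ is exactly the identity derived above. For the inductive step, I would insert the factor $\mathbf{F}\circ\mathbf{F} = \mathrm{Id}$ between $\Phi$ and $\Phi^k$ to split the composition, giving $\mathbf{F}\circ\Phi^{k+1}\circ\mathbf{F} = (\mathbf{F}\circ\Phi\circ\mathbf{F})\circ(\mathbf{F}\circ\Phi^k\circ\mathbf{F}) = \Phi^{-1}\circ\Phi^{-k} = \Phi^{-(k+1)}$, where the penultimate equality uses the base case together with the inductive hypothesis. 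With this in hand the claim is immediate: $(\mathbf{F}\circ\Phi^k)\circ(\mathbf{F}\circ\Phi^k) = (\mathbf{F}\circ\Phi^k\circ\mathbf{F})\circ\Phi^k = \Phi^{-k}\circ\Phi^k = \mathrm{Id}$, so $\mathbf{F}\circ\Phi^k$ is an involution.

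I do not anticipate any substantive obstacle in this argument; the entire content is the algebra of compositions. The one conceptual step worth isolating is the recognition that ``$\mathbf{F}\circ\Phi$ is an involution'' is equivalent to the reversibility relation $\mathbf{F}\circ\Phi\circ\mathbf{F} = \Phi^{-1}$; once this is observed, the relation propagates transparently to arbitrary powers by repeatedly cancelling $\mathbf{F}\circ\mathbf{F}$. The only care required is the routine bookkeeping of keeping the order of composition correct, since composition is noncommutative and the factors $\mathbf{F}$ and $\Phi$ do not commute in general.
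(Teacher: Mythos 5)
Your proof is correct and is essentially the paper's argument: both are inductions on $k$ driven by the reversibility identity $\mathbf{F}\circ\Phi\circ\mathbf{F}=\Phi^{-1}$ (which the paper writes in the equivalent form $\Phi^{-1}\circ\mathbf{F}=\mathbf{F}\circ\Phi$), together with $\mathbf{F}\circ\mathbf{F}=\mathrm{Id}$. Your phrasing of the inductive hypothesis as $\mathbf{F}\circ\Phi^k\circ\mathbf{F}=\Phi^{-k}$ rather than ``$\mathbf{F}\circ\Phi^k$ is an involution'' is an equivalent, slightly more explicit packaging of the same step.
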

\begin{proof}
This will be proved by induction with the base case established by assumption. As the inductive hypothesis, assume that $\mathbf{F}\circ\Phi^k$ is an involution. Using the fact that $\mathbf{F}\circ \Phi$ is an involution, we immediately obtain that $\Phi^{-1}\circ\mathbf{F} = \mathbf{F}\circ\Phi$. Using the inductive hypothesis, one also has $\Phi^{-k}\circ\mathbf{F} = \mathbf{F}\circ\Phi^k$. Therefore, we obtain,
\begin{align}
    & \Phi^{-1}\circ\mathbf{F} = \mathbf{F}\circ\Phi \\
    \implies& \Phi^{-1}\circ\mathbf{F}\circ\Phi^k = \mathbf{F}\circ\Phi^{k+1} \\
    \implies& \Phi^{-1}\circ\Phi^{-k}\circ\mathbf{F} = \mathbf{F}\circ\Phi^{k+1} \\
    \implies& \Phi^{-k-1}\circ\mathbf{F} = \mathbf{F}\circ\Phi^{k+1} \\
    \implies& \mathbf{F}\circ\Phi^{k+1}\circ \mathbf{F}\circ\Phi^{k+1} = \mathrm{Id}.
\end{align}
This verifies that $\mathbf{F}\circ\Phi^{k+1}$ is also an involution.
\end{proof}




\end{appendices}

\section*{Data Availability Statement}

The datasets generated during and/or analysed during the current study are available in the GitHub repository, \url{https://tinyurl.com/29kz7krz}.


\bibliography{sn-bibliography}


\end{document}